\documentclass[11pt,reqno]{amsart}
\usepackage{epsfig}
\usepackage{verbatim}
\usepackage{color}
\usepackage{graphicx,amssymb,amsmath,amsthm}
\usepackage[colorlinks, linkcolor=red, anchorcolor=blue, citecolor=green]{hyperref}
\usepackage{mathrsfs}
\usepackage{amssymb}
\usepackage{enumerate}
\textheight 8.5in
\textwidth 6.00in
\topmargin 0.0in
\oddsidemargin 0.25in
\evensidemargin 0.25in
\usepackage{pifont}
\usepackage{chngcntr}
\usepackage{enumitem}

\newcommand{\BH}{{\mathbb H}}

\newcommand{\R}{{\mathbb R}}

\newcommand{\C}{{\mathbb C}}

\renewcommand{\eqref}[1]{(\ref{#1})}

\newcommand{\abs}[1]{\lvert#1\rvert}

\newcommand{\dist}{{\rm dist}}

\newcommand{\vx}{{\boldsymbol x}}
\newcommand{\vy}{{\boldsymbol y}}

\newcommand{\va}{{\boldsymbol a}}

\newcommand{\BA}{{\boldsymbol{A}}}

\renewcommand{\top}{T}
\newtheorem{prop}{Proposition}[section]
\newtheorem{lem}[prop]{Lemma}

\newcommand{\innerp}[1]{\langle {#1} \rangle}

\newtheorem{coro}[prop]{Corollary}
\newtheorem{theorem}[prop]{Theorem}

\newtheorem{remark}[prop]{Remark}

\newtheorem{conjecture}[prop]{Conjecture}

\newcommand{\argmin}[1]{\mathop{\rm argmin}\limits_{#1}}

\newcommand\zfrac[2]{%
\genfrac{}{}{0.5pt}{0}%
{#1}{#2}}

\numberwithin{equation}{section}
\date{}

\newcommand{\Addresses}{{
\bigskip
\footnotesize

Yu Xia, \textsc{Department of Mathematics, Hangzhou Normal University, Hangzhou 311121, China}\par\nopagebreak
\textit{E-mail address}, Yu Xia: \texttt{yxia@hznu.edu.cn}

\medskip

Zhiqiang Xu, \textsc{LSEC, Inst.~Comp.~Math., Academy of
Mathematics and System Science,  Chinese Academy of Sciences, Beijing, 100091, China
\newline
School of Mathematical Sciences, University of Chinese Academy of Sciences, Beijing 100049, China}\par\nopagebreak
\textit{E-mail address}, Zhiqiang Xu: \texttt{xuzq@lsec.cc.ac.cn}

\medskip

Zili Xu, \textsc{Department of Mathematics, The Hong Kong University of Science and Technology, Clear Water Bay, Kowloon, Hong Kong SAR, China}\par\nopagebreak
\textit{E-mail address}, Zili Xu: \texttt{xuzili@ust.hk}
}}

\begin{document}
\bibliographystyle{plain}
\title[ Stability in Phase Retrieval]{ Stability in Phase Retrieval: Characterizing Condition Numbers and the Optimal Vector Set}

\author{Yu Xia, Zhiqiang Xu, and Zili Xu}

\thanks{Zhiqiang Xu was supported in part
 by the National Science Fund for Distinguished Young Scholars under Grant
 12025108 and in part by the National Natural Science Foundation of China
 under Grant 12021001 and Grant 12288201. Yu Xia was supported by NSFC grant (12271133, U21A20426, 11901143) and the key project of Zhejiang Provincial Natural Science Foundation grant (LZ23A010002)}

%
%

\maketitle

\begin{abstract}
 In this paper, we primarily focus on analyzing the stability property of phase retrieval by examining the bi-Lipschitz property of the map $\Phi_{\boldsymbol{A}}(\boldsymbol{x})=|\boldsymbol{A}\boldsymbol{x}|\in \mathbb{R}_+^m$,
where $\boldsymbol{x}\in \mathbb{H}^d$ and $\boldsymbol{A}\in \mathbb{H}^{m\times d}$ is the measurement matrix for $\mathbb{H}\in\{\mathbb{R},\mathbb{C}\}$. We define the condition number $\beta_{\boldsymbol{A}}=\frac{U_{\boldsymbol{A}}}{L_{\boldsymbol{A}}}$, where $L_{\boldsymbol{A}}$ and $U_{\boldsymbol{A}}$ represent the optimal lower and upper Lipschitz constants, respectively. We establish the  universal lower bound on $\beta_{\boldsymbol{A}}$ by demonstrating that for any ${\boldsymbol{A}}\in\mathbb{H}^{m\times d}$, 
\begin{equation*}
\beta_{\boldsymbol{A}}\geq \beta_0^{\mathbb{H}}=\begin{cases}
\sqrt{\frac{\pi}{\pi-2}}\,\,\approx\,\, 1.659  & \text{if $\mathbb{H}=\mathbb{R}$,}\\
\sqrt{\frac{4}{4-\pi}}\,\,\approx\,\, 2.159  & \text{if $\mathbb{H}=\mathbb{C}$.}
\end{cases}
\end{equation*} 
We prove that the condition number of a standard Gaussian matrix in $\mathbb{H}^{m\times d}$ asymptotically matches the lower bound $\beta_0^{\mathbb{H}}$ for both real and complex cases.
This result indicates that the constant lower bound $\beta_0^{\mathbb{H}}$ is asymptotically tight, holding true for both the real and complex scenarios.
As an application of this result, we utilize it to investigate the performance of quadratic models for phase retrieval.
Lastly, we establish that for any odd integer $m\geq 3$, the harmonic frame $\boldsymbol{E}_m\in \mathbb{R}^{m\times 2}$ possesses the minimum condition number among all $\boldsymbol{A}\in \mathbb{R}^{m\times 2}$.

To the best of our knowledge, our findings provide the first universal lower bound for the condition number in phase retrieval. Additionally, we have identified the first optimal vector set in $\mathbb{R}^2$ for phase retrieval. We are confident that these findings carry substantial implications for enhancing our understanding of phase retrieval.
\end{abstract}

\vspace{0.4cm}

\section{introduction}
\setcounter{equation}{0}

\subsection{Phase retrieval}
Assume that $\boldsymbol{A}=(\va_1,\ldots,\va_m)^*\in \BH^{m\times d}$, where $\va_j\in \BH^d$ are known vectors and $\BH\in \{\R, \C\}$.
The aim of phase retrieval is to recover $\vx\in \BH^d$ from the phaseless measurements $\abs{\innerp{\va_j,\vx}}, j=1,\ldots,m$.  For convenience, we define the nonlinear map $\Phi_{\boldsymbol{A}}:\BH^d\to \mathbb{R}_+^m$ as
\begin{equation*}
\Phi_{\boldsymbol{A}}(\boldsymbol{x})=\abs{\boldsymbol{A}\vx}:=(\abs{\innerp{\va_1,\vx}}, \abs{\innerp{\va_2,\vx}},\ldots,\abs{\innerp{\va_m,\vx}})^T\in \mathbb{R}_+^m.
\end{equation*}
We say $\boldsymbol{A}$ has {\em phase retrieval property} if $\abs{\boldsymbol{A} \vx}=\abs{\boldsymbol{A}\vy}$ implies $\vx=c\cdot \vy$ for some $c\in \BH$ with $\abs{c}=1$.
The existing literature has outlined certain conditions on $\boldsymbol{A}$ that ensure the phase retrieval property, as demonstrated in \cite{BCMN, 4d4, WangXu}. Specifically, it has been established that $m\geq 2d-1$ (or $m\geq 4d-4$) generic measurements are adequate for the precise recovery of $\vx\in \BH^d$, up to a unimodular constant, where $\BH=\R$ (or $\BH=\C$).

\subsection{Stability property of phase retrieval}

The stability of signal reconstruction is of utmost importance in the domain of signal recovery from phaseless measurements.  It not only bolsters the robustness of the reconstruction process but also preserves the essential characteristic of injectivity. In order to achieve reliable signal recovery, the literature on non-convex approaches to phase retrieval typically imposes a stability condition. Notably, research papers such as \cite{BW, BCMN, EM14, siamre, stablefunc} highlight the significance of this stability condition.

One way to quantify the robustness of the phase retrieval process for a given measure matrix ${\boldsymbol{A}}\in \BH^{m\times d}$ is in terms of the Lipschitz bound of the map $\Phi_{\boldsymbol{A}}$.
For any $\vx,\vy\in \BH^d$, we define the distance between $\vx$ and $\vy$ as  
\begin{equation}\label{dist-def2024}
\dist_\BH(\vx,\vy):=\min \{\|\vx-c\cdot \vy\|_2: c\in \BH, \abs{c}=1\}.
\end{equation}
Assume that the measure matrix $\boldsymbol{A}=(\va_1,\ldots,\va_m)^*\in \BH^{m\times d}$ has phase retrieval property.  
It has been demonstrated that the map $\Phi_{\boldsymbol{A}}$ is bi-Lipschitz \cite{BCMN,BW,CCD16,AG, BZ}, that is, there exist two positive constants $0<L\leq U<\infty$ such that for any $\vx,\vy\in\mathbb{H}^d$,
\begin{equation}\label{Lipschitz}
L \cdot \dist_\BH(\vx,\vy)\,\,\leq\,\, \|\abs{\boldsymbol{A}\vx}-\abs{\boldsymbol{A}\vy}\|_2\leq U\cdot\dist_\BH(\vx,\vy).
\end{equation}
A comprehensive overview of this topic is provided in \cite{siamre}. 
We denote the greatest possible $L$ and the smallest possible $U$ as $L_{\boldsymbol{A}}^\BH$  and $U_{\boldsymbol{A}}^\BH$, respectively. In other words, we set
\begin{equation*}
L_{\boldsymbol{A}}^\BH:= \inf_{\substack{\vx,\vy\in \mathbb{H}^d\\ \dist_\BH{(\vx,\vy)}\neq 0}}\frac{\|\abs{\boldsymbol{A}\vx}-\abs{\boldsymbol{A}\vy}\|_2}{\dist_\BH(\vx,\vy)}\quad\text{and}\quad
U_{\boldsymbol{A}}^\BH:= \sup_{\substack{\vx,\vy\in \mathbb{H}^d\\ \dist_\BH{(\vx,\vy)}\neq 0}}\frac{\|\abs{\boldsymbol{A}\vx}-\abs{\boldsymbol{A}\vy}\|_2}{\dist_\BH(\vx,\vy)}.
\end{equation*}
Numerous studies have been conducted to estimate or determine the optimal Lipschitz bounds $L_{\boldsymbol{A}}^\BH$ and $U_{\boldsymbol{A}}^\BH$ \cite{BCMN,BW,CCD16,AG,siamre}.

In this paper, our primary focus is on the {\em condition number} $\beta_{\boldsymbol{A}}^\BH$, which is defined as
\begin{equation*}
\beta_{\boldsymbol{A}}^\BH\,\,:=\,\, \frac{U_{\boldsymbol{A}}^\BH}{L_{\boldsymbol{A}}^\BH}.
\end{equation*}
The quantity $\beta_{\boldsymbol{A}}^{\BH}$ is referred to as the {\em distortion} in \cite{CIM23}.
The condition number serves as a measure of the stability of the measure matrix $\boldsymbol{A}$ for phase retrieval.
 If $\boldsymbol{A}$ lacks the phase retrieval property, i.e., ${L_{\boldsymbol{A}}^\BH}=0$, then we set $\beta_{\boldsymbol{A}}^\BH=+\infty$.
 On the other hand, if $\boldsymbol{A}$ possesses the phase retrieval property, $\beta_{\boldsymbol{A}}^\BH$ becomes a finite positive number. A smaller condition number $\beta_{\boldsymbol{A}}^\BH$ indicates that $\Phi_{\boldsymbol{A}}$ behaves more like a near-isometry. The objective of this paper is to analyze the stability of a given measure matrix $\boldsymbol{A}\in\mathbb{H}^{m\times d}$ by examining its condition number. Particularly, we are interested in the following questions:
\vspace{0.3cm} 
\begin{enumerate}[align=parleft, labelwidth=2cm]
\item[Question I] Does there exists a universal lower bound, denoted as $\beta_0^\BH>1$, so that $\beta_{\boldsymbol{A}}^\BH\geq \beta_0^\BH$ for all matrices $\boldsymbol{A}\in \BH^{m\times d}$?
\item [Question II]
For a given pair of integers $m$ and $d$, what is the optimal measurement matrix that has the minimal condition number?
In other words, can we identify a matrix $\boldsymbol{E}\in \BH^{m\times d}$ such that $\boldsymbol{E}\in {\rm argmin}_{\boldsymbol{A}\in \BH^{m\times d}} \beta_{\boldsymbol{A}}^\BH$?
\end{enumerate}
\vspace{0.3cm} 

To streamline the notation, we often omit the superscript of $\beta_{\boldsymbol{A}}^\BH$ and determine whether $\beta_{\boldsymbol{A}}$ is defined in a real or complex space solely based on the nature of the matrix $\boldsymbol{A}$, whether it is real or complex. Similarly, we can drop the superscript of $L_{\boldsymbol{A}}^\BH$, $U_{\boldsymbol{A}}^\BH$, $\beta_0^\BH$ and omit the subscript of $\dist_\BH(\vx,\vy)$.

\subsection{Our contribution}
The aim of this paper is trying to answer Question I and Question II.
We first give an affirmative answer to Question I by showing that there exists a constant lower bound on the condition number $\beta_{\boldsymbol{A}}$ for all $\boldsymbol{A}\in \mathbb{H}^{m\times d}$. Specifically, in Theorem \ref{lowerbound-ndim} of Section 2, we prove that for any $\boldsymbol{A}\in \mathbb{H}^{m\times d}$, 
\begin{equation}
\label{eq:mdlower1}
\beta_{\boldsymbol{A}}\geq \beta_0^{\mathbb{H}}:=\begin{cases}
\sqrt{\frac{\pi}{\pi-2}}\,\,\approx\,\, 1.659  & \text{if $\mathbb{H}=\mathbb{R}$,}\\
\sqrt{\frac{4}{4-\pi}}\,\,\approx\,\, 2.159  & \text{if $\mathbb{H}=\mathbb{C}$.}\\

\end{cases}
\end{equation} 	
To the best of our knowledge,  this result provides the first known constant lower bound on the condition number $\Phi_{\boldsymbol{A}}$ for both real and complex cases. 
Additionally, we show that this constant lower bound is asymptotically tight for both real and complex cases when $m\to\infty$. Specifically, in Theorem \ref{cos_lemma} of Section \ref{section:dim2} we calculate the condition number $\beta_{\boldsymbol{E}_m}$ of the harmonic frame $\boldsymbol{E}_m$ in $\R^2$:
\begin{equation}
\label{betaA-R2-2024xu}
\beta_{\boldsymbol{E}_m}=\begin{cases}
\frac{1}{\sqrt{1-\frac{2}{m\cdot \sin\frac{\pi}{m}}}}  & \text{if $m$ is even,}\\	
\frac{1}{\sqrt{1-\frac{1}{m\cdot \sin\frac{\pi}{2m}}}} & \text{if $m$ is odd.}
\end{cases}	
\end{equation}
Here, the harmonic frame $\boldsymbol{E}_m\in\R^{m\times 2}$ is defined as
\begin{equation*}
\boldsymbol{E}_m:=
\begingroup 
\arraycolsep=5pt\def\arraystretch{1.4}
\begin{pmatrix}
1 &  \cos(\frac{1}{m}\pi) & \cdots & \cos(\frac{m-1}{m}\pi)  \\
0 & \sin(\frac{1}{m}\pi) &\cdots &  \sin(\frac{m-1}{m}\pi) \\
\end{pmatrix}
\endgroup
^T.
\end{equation*}
From \eqref{betaA-R2-2024xu} we see that $\beta_{\boldsymbol{E}_m}$ asymptotically matches the lower bound $\beta_0^{\mathbb{R}}=\sqrt{\frac{\pi}{\pi-2}}$ as $m\to\infty$, confirming the tightness of $\beta_0^{\mathbb{R}}$ in the real case. 
Furthermore, in Section \ref{section:Gauss}, we establish that if $\boldsymbol{A}\in \mathbb{H}^{m\times d}$ is a standard Gaussian random matrix, where $\mathbb{H}\in\{\mathbb{R},\mathbb{C}\}$, then $\beta_{\boldsymbol{A}}$ approaches $\beta_0^{\mathbb{H}}$ asymptotically as $m\to\infty$. 
These results demonstrate that $\beta_0^{\mathbb{H}}$ is an asymptotically tight lower bound for both real and complex cases. 
 As a  application of this result, we employ it to examine the efficacy of quadratic models for phase retrieval  (see Corollary \ref{co:budeng}).

We next turn to Question II and mainly focus on the real case, i.e., $\mathbb{H}=\mathbb{R}$. In Theorem \ref{lowerbound-2dim}, we  improve the constant lower bound $\beta_0^{\mathbb{R}}=\sqrt{\frac{\pi}{\pi-2}}$ by showing that for any $\boldsymbol{A}\in\mathbb{R}^{m\times d}$ we have
\begin{equation}	\label{betterbound2024}
\beta_{\boldsymbol{A}}\,\,\geq\,\,  \frac{1}{\sqrt{1-\frac{1}{m\cdot \sin\frac{\pi}{2m}}}}.
\end{equation} 
Combining with \eqref{betaA-R2-2024xu}, we see that $\beta_{\boldsymbol{E}_m}$ matches the above lower bound for each odd integer $m\geq 3$. Therefore, $\boldsymbol{E}_m$ has the minimal condition number for each odd integer $m\geq 3$, i.e., $\boldsymbol{E}_m\in {\rm argmin}_{\boldsymbol{A}\in \mathbb{R}^{m\times 2}} \beta_{\boldsymbol{A}}$. 
This  addresses Question II for the real case with $d=2$. We believe that these findings provide insights into the general case of $\boldsymbol{A}\in\mathbb{H}^{m\times d}$.
\subsection{Related work}

Let $\boldsymbol{A}\in\mathbb{H}^{m\times d}$ be a measurement matrix that has phase retrieval property. Recall that we define the map $\Phi_{\boldsymbol{A}}: \BH^d\to \mathbb{R}_+^m$ as
\begin{equation*}
\Phi_{\boldsymbol{A}}(\boldsymbol{x})=\abs{\boldsymbol{A}\vx}=(\abs{\innerp{\va_1,\vx}}, \abs{\innerp{\va_2,\vx}},\ldots,\abs{\innerp{\va_m,\vx}})^T\in \mathbb{R}_+^m.
\end{equation*}
For convenience, we also define $\Phi^2_{\boldsymbol{A}}: \BH^d\to \mathbb{R}_+^m$ as
\begin{equation*}
\Phi_{\boldsymbol{A}}^2(\boldsymbol{x})=\abs{\boldsymbol{A}\vx}^2:=(\abs{\innerp{\va_1,\vx}}^2, \abs{\innerp{\va_2,\vx}}^2,\ldots,\abs{\innerp{\va_m,\vx}}^2)^T\in \mathbb{R}_+^m.
\end{equation*}
Most of the existing literature studied the stability of the phase retrieval process for ${\boldsymbol{A}}$ by analyzing the Lipschitz property of the map $\Phi_{\boldsymbol{A}}$ or $\Phi^2_{\boldsymbol{A}}$ with respect to different norms and metrics on the space $\BH^d$ \cite{EM14,BCMN,BW, BZ, CCD16,AG,Duchi,DDP20}. In the following we briefly introduce the existing results on the stability of phase retrieval.

\subsubsection{Lipschitz property of $\Phi_{\boldsymbol{A}}$}

Consider the real case $\mathbb{H}=\mathbb{R}$ first. Bandeiraa-Cahillb-Mixon-Nelson first study the bi-Lipschitz property \eqref{Lipschitz} of $\Phi_{\boldsymbol{A}}$ by estimating the optimal lower and upper Lipschitz constant \cite{BCMN}.  
Specifically, they established that for any $\boldsymbol{A}\in\mathbb{R}^{m\times d}$ we have 
\begin{equation}\label{relatedwork-upperbound}
U_{\boldsymbol{A}}=\|\boldsymbol{A}\|_2.	
\end{equation}
and
\begin{equation}\label{estimate-lowerbound}
\sigma_{\boldsymbol{A}}\leq L_{\boldsymbol{A}}\leq \sqrt{2}\cdot \sigma_{\boldsymbol{A}},
\end{equation}
where 
\begin{equation}\label{eq:sigma}
\sigma_{\boldsymbol{A}} \,\,:=\,\, \min_{I\subset [m]}{\rm max} \Big\{\sqrt{\lambda_{\rm min}(\boldsymbol{A}_I^*\boldsymbol{A}_I)},\sqrt{\lambda_{\rm min}(\boldsymbol{A}_{I^c}^*\boldsymbol{A}_{I^c})} \Big\}.
\end{equation}
Here, $[m]:=\{1,\ldots,m\}$,  $\boldsymbol{A}_I=(\va_j)_{j\in I}^*\in \mathbb{H}^{\# I\times d}$ denotes the row submatrix of $\boldsymbol{A}$, and $I^c:=[m]\setminus I$. Additionally, we use $\lambda_{\rm min}(\cdot)$ and $\lambda_{\rm max}(\cdot)$ to denote the minimal and maximal eigenvalues of a given Hermitian matrix, respectively. 
Their results immediately imply the following estimate on the condition number $\beta_{\boldsymbol{A}}$:
\begin{equation}\label{2024-neta-Mixon}
\frac{\|\boldsymbol{A}\|_2}{\sqrt{2}\cdot \sigma_{\boldsymbol{A}}} \leq \beta_{\boldsymbol{A}}=\frac{U_{\boldsymbol{A}}}{L_{\boldsymbol{A}}}=\frac{\|\boldsymbol{A}\|_2}{L_{\boldsymbol{A}}}\leq \frac{\|\boldsymbol{A}\|_2}{\sigma_{\boldsymbol{A}}}.
\end{equation}
Later, Balan-Wang \cite{BW} provided an exact value of the optimal lower Lipschitz constant:
\begin{equation*}
L_{\boldsymbol{A}}=\Delta_{\boldsymbol{A}} \,\,:=\,\, \min_{I\subset [m]} \sqrt{\lambda_{\rm min}(\boldsymbol{A}_I^*\boldsymbol{A}_I)+ \lambda_{\rm min}(\boldsymbol{A}_{I^c}^*\boldsymbol{A}_{I^c})}.
\end{equation*}
Consequently, we have
\begin{equation}\label{2024-beta-Wang}
\beta_{\boldsymbol{A}}=\frac{\|\boldsymbol{A}\|_2}{\Delta_{\boldsymbol{A}}}.
\end{equation}
It is worth noting  that $\sigma_{\boldsymbol{A}}\leq \Delta_{\boldsymbol{A}}\leq \sqrt{2}\cdot \sigma_{\boldsymbol{A}}$, making Balan-Wang's estimate in \eqref{2024-beta-Wang} an improvement over \eqref{2024-neta-Mixon}.

The complex case $\mathbb{H}=\mathbb{C}$ of the Lipschitz property of $\Phi_{\boldsymbol{A}}$ was later considered in \cite{CCD16,AG}. Both \cite{CCD16} and \cite{AG} considered the phase retrieval in a more general setting where the underlying space can be an infinite-dimensional Hilbert space. However, since our focus is on the complex space $\mathbb{C}^d$, we will only present their results for this case. The authors of \cite{CCD16} first showed that $\Phi_{\boldsymbol{A}}$ satisfies the Lipschitz property \eqref{Lipschitz} if $\boldsymbol{A}\in\mathbb{C}^{m\times d}$ has phase retrieval property. Specifically, they showed that $L_{\boldsymbol{A}}>0$ and $U_{\boldsymbol{A}}\leq \|\boldsymbol{A}\|_2$.
Later, the authors of \cite{AG} proved that $U_{\boldsymbol{A}}= \|\boldsymbol{A}\|_2$ and
\begin{equation*}
L_{\boldsymbol{A}}\leq 2\cdot \sqrt{\frac{\lambda_{\max}(\boldsymbol{A}^*\boldsymbol{A})}{\lambda_{\min}(\boldsymbol{A}^*\boldsymbol{A})}}\cdot \sigma_{\boldsymbol{A}}=\frac{2\cdot \|\boldsymbol{A}\|_2\cdot \sigma_{\boldsymbol{A}}}{\sqrt{\lambda_{\min}(\boldsymbol{A}^*\boldsymbol{A})}},
\end{equation*}
where $\sigma_{\boldsymbol{A}}$ is defined in \eqref{eq:sigma}. Therefore, in the complex case we have \cite[Corollary 3.10]{AG}
\begin{equation*}
\beta_{\boldsymbol{A}}\geq \frac{\sqrt{\lambda_{\min}(\boldsymbol{A}^*\boldsymbol{A})}}{2\cdot \sigma_{\boldsymbol{A}}}.
\end{equation*}

Recently, Alharbi et al. in \cite[Theorem 1.1]{AAFG24} demonstrated that for any matrix $\boldsymbol{A}\in\mathbb{H}^{m\times d}$, where $\mathbb{H}=\mathbb{R}$ or $\mathbb{C}$, the lower Lipschitz bound $L_{\boldsymbol{A}}$ can be expressed as follows:
\begin{equation*}
L_{\boldsymbol{A}}=\min_{\substack{\boldsymbol{x}\in \mathbb{H}^{d},\boldsymbol{y}\in \mathbb{H}^{d} \\ \|\boldsymbol{x}\|_2=1, \|\boldsymbol{y}\|_2\leq 1,\langle \boldsymbol{x},\boldsymbol{y}\rangle=0  }}
\frac{\||\boldsymbol{A}\boldsymbol{x}|-|\boldsymbol{A}\boldsymbol{y}|\|_2}
{\mathrm{dist}(\boldsymbol{x},\boldsymbol{y})}.
\end{equation*} 
Thus, to determine $L_{\boldsymbol{A}}$, it suffices to identify the minimum achievable ratio between 
$\||\boldsymbol{A}\boldsymbol{x}|-|\boldsymbol{A}\boldsymbol{y}|\|_2$ and $\mathrm{dist}(\boldsymbol{x},\boldsymbol{y})$ for orthogonal vectors $\boldsymbol{x},\boldsymbol{y}\in \mathbb{H}^d$.

\subsubsection{Lipschitz property of $\Phi_{\boldsymbol{A}}^2$}
The Lipschitz property of the map $\Phi_{\boldsymbol{A}}^2$ under the $\ell_1$-norm $\|\cdot \|_1$ was studied in \cite{EM14,Duchi,DDP20}.
Consider the real case first. We say that a measurement matrix $\boldsymbol{A}\in\mathbb{R}^{m\times d}$ is $\lambda\geq0$ stable if 
\begin{equation}\label{Eldar-condition}
\| |\boldsymbol{A}\vx|^2-|\boldsymbol{A}\vy|^2 \|_1\geq \lambda \cdot \|\vx-\vy\|_2\cdot \|\vx+\vy\|_2,\quad \text{ for all } \vx,\vy\in\mathbb{R}^d.	
\end{equation}
Eldar and Mendelson \cite{EM14} proved that the above condition holds with high probability if $\boldsymbol{A}$ is a sub-gaussian matrix.
In the complex case where $\mathbb{H}=\mathbb{C}$, Duchi and Ruan \cite{Duchi} extended the condition \eqref{Eldar-condition} to the following form: 
\begin{equation}
\| |\boldsymbol{A}\vx|^2-|\boldsymbol{A}\vy|^2 \|_1\geq \lambda\cdot  \inf_{\theta}\|\vx-e^{i\theta}\vy\|_2 \cdot \sup_{\theta}\|\vx-e^{i\theta}\vy\|_2,\quad \text{ for all } \vx,\vy\in\mathbb{C}^d.	
\end{equation}
They provided stability guarantees for both the real and complex cases by considering general classes of random matrices.

Previous research has also examined the bi-Lipschitz property of the map $\Phi_{\boldsymbol{A}}^2$ under the $\ell_2$-norm $\|\cdot \|_2$  \cite{BCMN,BW,BZ}. In the real case $\mathbb{H}=\mathbb{R}$, the authors of \cite{BCMN} first demonstrated that $\Phi_{\boldsymbol{A}}^2$ is not bi-Lipschitz under the $\ell_2$-norm $\|\cdot \|_2$ with respect to the distance $\mathrm{dist}(\vx,\vy)$ defined in \eqref{dist-def2024}. Subsequently, Balan-Wang \cite{BW} established that $\Phi_{\boldsymbol{A}}^2$ does satisfy the bi-Lipschitz property for the distance metric defined as:
\begin{equation*}
{\mathrm{d}}(\vx,\vy):=\|\vx\vx^T-\vy\vy^T\|_*\overset{(a)}=\|\vx-\vy\|_2\|\vx+\vy\|_2.
\end{equation*}
Here, $\|\boldsymbol{X}\|_*$ denotes the nuclear norm of $\boldsymbol{X}$, which is the sum of its singular values, and the equality denoted by $(a)$ is derived from \cite[Lemma 4.4]{BW}.
More specifically,  in \cite[Theorem 4.5]{BW}, Balan-Wang proved that there exist two positive constants $0<\widetilde{L}\leq \widetilde{U}\leq\infty$ such that for any $\vx,\vy\in\mathbb{R}^d$, 
\begin{equation*}
\widetilde{L}\cdot {\mathrm{d}}(\vx,\vy)\leq \||\boldsymbol{A}\boldsymbol{x}|^2 - |\boldsymbol{A}\boldsymbol{y}|^2 \|_2\leq \widetilde{U} \cdot {\mathrm{d}}(\vx,\vy).
\end{equation*}
Moreover, the constants $\widetilde{L}$ and $\widetilde{U}$ can be taken as 
\begin{equation*}
\widetilde{L}=\min\limits_{\|\vx\|_2=\|\vy\|_2=1}\bigg(\sum_{j=1}^m |\langle \vx,\boldsymbol{a}_j \rangle|^2|\langle \vy,\boldsymbol{a}_j \rangle|^2\bigg)^{1/2}\quad\text{and}\quad \widetilde{U}=\max\limits_{\|\vx\|_2=1}\bigg(\sum_{j=1}^m |\langle \vx,\boldsymbol{a}_j \rangle|^4 \bigg)^{1/2}	.
\end{equation*}
The exploration of the stability of the function $\Phi_{\boldsymbol{A}}^2$ with respect to alternative distance metrics has also been undertaken for the complex case $\mathbb{H}=\mathbb{C}$.
 For more detailed information, please refer to \cite{BZ}.
\subsection{Comparison to previous work} 

\subsubsection{The general case}
The primary focus of this paper is to investigate the bi-Lipschitz property of the map $\Phi_{\boldsymbol{A}}$ and to estimate the condition number $\beta_{\boldsymbol{A}}$.
We start with comparing our results with the previous estimates on the condition number for a general measure matrix $\boldsymbol{A}\in\mathbb{H}^{m\times d}$. In the real case where $\mathbb{H}=\mathbb{R}$, the previous best estimate on the condition number is in  \cite{BW}, that is, 
\begin{equation*}
\beta_{\boldsymbol{A}}=\frac{\|\boldsymbol{A}\|_2}{\Delta_{\boldsymbol{A}}}.	
\end{equation*}
However, it is nontrivial to estimate $\Delta_{\boldsymbol{A}}$ and hence  the condition number remains unknown. In this paper, we provide the  constant lower bound $\beta_0^{\mathbb{R}}=\sqrt{\frac{\pi}{\pi-2}}\approx 1.659$ on the condition number $\beta_{\boldsymbol{A}}$, which is asymptotically tight as $m\to \infty$. Moreover, we slightly improve upon this constant lower bound with the bound presented in \eqref{betterbound2024}, which is attained when $m\geq 3$ is odd and $\boldsymbol{A}$ is a harmonic frame in $\mathbb{R}^2$. 

In the complex case where $\mathbb{H}=\mathbb{C}$, the previous best estimate on the condition number is \cite{AG}
\begin{equation*}
\beta_{\boldsymbol{A}}\geq \frac{\sqrt{\lambda_{\min}(\boldsymbol{A}^*\boldsymbol{A})}}{2\cdot \sigma_{\boldsymbol{A}}}.
\end{equation*}
In this paper, we present the first constant lower bound $\beta_0^{\mathbb{C}}=\sqrt{\frac{4}{4-\pi}}\approx 2.159$, which is also asymptotically tight as $m\to \infty$.  

\subsubsection{The special case of $\boldsymbol{A}$ being a Gaussian random matrix}
For the case where $\boldsymbol{A}\in\mathbb{H}^{m\times d}$ is a standard Gaussian matrix, we show that $\beta_{\boldsymbol{A}}$ approaches $\beta_0^{\mathbb{H}}$ with high probability as $m\to \infty$. To the best of our knowledge, this is the first estimate on the condition number of a standard complex Gaussian matrix $\boldsymbol{A}\in\mathbb{C}^{m\times d}$. In the real case where $\mathbb{H}=\mathbb{R}$, our estimation improves upon the result obtained by Bandeiraa-Cahillb-Mixon-Nelson in \cite{BCMN}. 
Specifically, they showed that for a standard real Gaussian matrix $\boldsymbol{A}\in\mathbb{R}^{m\times d}$, the inequality 
\begin{equation*}
\sigma_{\boldsymbol{A}}\geq \frac{m-2d+2}{\sqrt{2}\cdot  e^{1+\frac{\epsilon}{R-2}}\cdot 2^{\frac{R}{R-2}}\cdot \sqrt{m}}
\end{equation*}
holds with probability at least $1-\exp(-\epsilon d)$, where $R:=\frac{m}{d}$ is assumed to be greater than $2$ \cite[Theorem 20]{BCMN}.
It is worth noting that $\|\boldsymbol{A}\|_2\leq (1+\epsilon)(\sqrt{d}+\sqrt{m})$ holds with probability $1-2\exp(-\frac{\epsilon}{2}(\sqrt{d}+\sqrt{m})^2)$ if $\boldsymbol{A}\in\mathbb{R}^{m\times d}$ is a standard Gaussian matrix \cite{DS01}. 
Therefore, using \eqref{2024-neta-Mixon}, we obtain the following upper bound on the condition number:
\begin{equation*}
\beta_{\boldsymbol{A}}\leq \frac{\|\boldsymbol{A}\|_2}{\sigma_{\boldsymbol{A}}}\leq 
(1+\epsilon)(\sqrt{d}+\sqrt{m}) \cdot \frac{ e^{1+\frac{\epsilon}{R-2}}\cdot 2^{\frac{3R-2}{2(R-2)}}\cdot \sqrt{m}}{m-2d+2}=(1+\epsilon)\cdot  e^{1+\frac{\epsilon}{R-2}}\cdot 2^{\frac{3R-2}{2(R-2)}} \cdot \frac{1+\frac{1}{\sqrt{R}}}{1-\frac{2}{R}+\frac{2}{m}},
\end{equation*}
which holds with probability at least $1-\exp(-\epsilon d)-2\exp(-\frac{\epsilon}{2}(\sqrt{d}+\sqrt{m})^2)$. If we let $R=\frac{m}{d}\to\infty$, then the above upper bound becomes 
\begin{equation}\label{relatedwork-Gauss2}
\beta_{\boldsymbol{A}}\leq (1+\epsilon)\cdot (1+o(1))\cdot e\cdot 2\sqrt{2}\leq 7.689\cdot (1+\epsilon).
\end{equation}
In contrast, our estimate is $\beta_0^{\mathbb{R}}\leq\beta_{\boldsymbol{A}}\leq \beta_0^{\mathbb{R}}+\epsilon$, where $\beta_0^{\mathbb{R}}=\sqrt{\frac{\pi}{\pi-2}}\approx 1.659$, provided that $m\geq C\log(1/\epsilon)\epsilon^{-2}d$ for a universal positive constant $C$. Therefore, our result improves the estimate in \eqref{relatedwork-Gauss2}.

\subsection{Notation}

Throughout this paper, we use the notation $\mathrm{i}$ to represent the imaginary unit, i.e., $\mathrm{i}=\sqrt{-1}$. For a complex number $a\in \mathbb{C}$, we use $\mathrm{Re}(a)$ and $\mathrm{Im}(a)$ to denote its real and imaginary parts, respectively.
We use $\| \boldsymbol{x}\|_2$ to denote the Euclidean norm of a vector $\boldsymbol{x}\in\mathbb{H}^d$, and we use $\| \boldsymbol{A}\|_2$ to denote the spectral norm of a matrix $\boldsymbol{A}\in\mathbb{H}^{m\times d}$. For a subset $I\subset[m]$ of size $k$, we use $\boldsymbol{A}_I\in\mathbb{H}^{k\times d}$ to denote the row submatrix of $\boldsymbol{A}$ consisting of rows whose indexes are in the subset $I$.
We use the notation $\mathbb{S}_{\mathbb{H}}^{d-1}$ to denote the unit sphere in $\mathbb{H}^d$, i.e., $$\mathbb{S}_{\mathbb{H}}^{d-1}=\{\boldsymbol{x}\in \mathbb{H}^d: \| \boldsymbol{x}\|_2=1\}.$$ For convenience, if $\mathbb{H}=\mathbb{R}$ then we simply write $\mathbb{S}_{\mathbb{R}}^{d-1}$ as $\mathbb{S}^{d-1}$.
We say a vector $\boldsymbol{a}\in\mathbb{H}^d$ is a standard Gaussian random vector if
\begin{equation*}
\begin{cases}
\boldsymbol{a}\sim \mathcal{N}(\boldsymbol{0},\boldsymbol{I}), &\text{if}\ \mathbb{H}=\mathbb{R};\\
\boldsymbol{a}\sim \mathcal{N}(\boldsymbol{0},\boldsymbol{I}/2)+\mathrm{i} \mathcal{N}(\boldsymbol{0},\boldsymbol{I}/2), &\text{if}\ \mathbb{H}=\mathbb{C}.
\end{cases}
\end{equation*}
We say a matrix $\boldsymbol{A}\in\mathbb{H}^{m\times d}$ is a standard Gaussian random matrix if the rows of $\boldsymbol{A}$ are i.i.d. standard Gaussian random vectors.

For any $A,B\in\mathbb{R}$, we use $A\gtrsim B$ to denote $A\geq C_0\cdot B$ where $C_0>0$ is an absolute constant. We define the notion $\lesssim$ in a similar way. 
Throughout this paper, we use $C$ and $c$, along with their subscripts or superscripts, to denote universal constants that may vary depending on the specific context.

\section{A universal lower bound for $\beta_{\boldsymbol{A}}$}

The aim of this section is to present a universal lower bound on $\beta_{\boldsymbol{A}}$ for $\boldsymbol{A}\in \mathbb{H}^{m\times d}$, where $\mathbb{H}=\mathbb{R}$ or $\mathbb{C}$. We need the following theorem, which shows that the optimal upper Lipschitz bound  $U_{\boldsymbol{A}}$ is equal to the spectral norm of ${\boldsymbol{A}}$ for both the real and complex cases. The real case of Theorem \ref{th:LUbound} was proved in \cite{BCMN,BW}, and the complex case was proved in \cite{AG}.
\begin{theorem}\label{th:LUbound}
\cite{AG,BCMN,BW}
Let $\boldsymbol{A}\in \mathbb{H}^{m\times d}$, where $\mathbb{H}=\mathbb{R}$ or $\mathbb{C}$. Then $U_{\boldsymbol{A}}=\|\boldsymbol{A}\|_2$.
\end{theorem}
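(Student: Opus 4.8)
\textbf{Proof proposal for Theorem \ref{th:LUbound}.}

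The plan is to prove the two inequalities $U_{\boldsymbol{A}}\leq\|\boldsymbol{A}\|_2$ and $U_{\boldsymbol{A}}\geq\|\boldsymbol{A}\|_2$ separately, working uniformly over $\mathbb{H}\in\{\mathbb{R},\mathbb{C}\}$ wherever possible. For the upper bound, I would start from the elementary pointwise inequality $\big||\langle\va_j,\vx\rangle|-|\langle\va_j,\vy\rangle|\big|\leq|\langle\va_j,\vx-c\vy\rangle|$, which holds for every unimodular $c\in\mathbb{H}$ by the reverse triangle inequality (note $|\langle\va_j,c\vy\rangle|=|\langle\va_j,\vy\rangle|$). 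Squaring and summing over $j$ gives $\||\boldsymbol{A}\vx|-|\boldsymbol{A}\vy|\|_2^2\leq\|\boldsymbol{A}(\vx-c\vy)\|_2^2\leq\|\boldsymbol{A}\|_2^2\,\|\vx-c\vy\|_2^2$. Taking the infimum over unimodular $c$ on the right-hand side turns $\|\vx-c\vy\|_2$ into $\dist_{\mathbb{H}}(\vx,\vy)$, yielding $\||\boldsymbol{A}\vx|-|\boldsymbol{A}\vy|\|_2\leq\|\boldsymbol{A}\|_2\cdot\dist_{\mathbb{H}}(\vx,\vy)$, hence $U_{\boldsymbol{A}}\leq\|\boldsymbol{A}\|_2$.

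For the matching lower bound, the idea is to test the ratio on a well-chosen family of pairs that forces the quotient arbitrarily close to $\|\boldsymbol{A}\|_2$. Let $\vu\in\mathbb{S}_{\mathbb{H}}^{d-1}$ be a top right singular vector of $\boldsymbol{A}$, so $\|\boldsymbol{A}\vu\|_2=\|\boldsymbol{A}\|_2$. The natural choice is to take $\vx=t\vu$ and $\vy=-t\vu$ (or, more robustly, $\vy=s\vu$ with $s$ near $-t$), and let the scale parameter go to a regime where the phases $\sign\langle\va_j,\vx\rangle$ and $\sign\langle\va_j,\vy\rangle$ become ``opposite'' on every coordinate, so that $|\langle\va_j,\vx\rangle|-|\langle\va_j,\vy\rangle|$ actually equals (up to sign) $\langle\va_j,\vx-\vy\rangle$ componentwise. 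Concretely, with $\vx=\vu$, $\vy=-\vu$ one has $|\langle\va_j,\vx\rangle|=|\langle\va_j,\vy\rangle|$ and the difference vanishes, so this crude choice is degenerate; instead I would perturb, e.g. take $\vx=\vu$ and $\vy=-(1-\delta)\vu$, compute $\||\boldsymbol{A}\vx|-|\boldsymbol{A}\vy|\|_2=\delta\|\boldsymbol{A}\vu\|_2=\delta\|\boldsymbol{A}\|_2$, while $\dist_{\mathbb{H}}(\vx,\vy)=\min_{|c|=1}\|\vu+c(1-\delta)\vu\|_2=\delta$ (the minimizing $c$ being $-1$), giving ratio exactly $\|\boldsymbol{A}\|_2$. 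This already shows $U_{\boldsymbol{A}}\geq\|\boldsymbol{A}\|_2$, completing the proof.

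The main obstacle, and the point requiring care, is the computation of $\dist_{\mathbb{H}}(\vx,\vy)$ for the chosen test pair — in particular verifying that the infimum over the unit circle (in the complex case, the whole circle rather than just $\pm1$) is attained at the value that makes the ratio equal to $\|\boldsymbol{A}\|_2$, and that $|\langle\va_j,\vy\rangle|$ is genuinely computed with $\vy=-(1-\delta)\vu$ rather than some phase-rotated version. Since the test vectors here are collinear with $\vu$, the minimization $\min_{|c|=1}\|\vu-c(1-\delta)\vu\|_2$ reduces to $\min_{|c|=1}|1-c(1-\delta)|=\delta$, which is clean in both $\mathbb{R}$ and $\mathbb{C}$; so in fact no real obstacle remains once the collinear family is used, and the argument is uniform in $\mathbb{H}$. (One may alternatively cite the references \cite{AG,BCMN,BW} for the non-degenerate analysis, but the collinear test pair gives a self-contained one-line lower bound.)
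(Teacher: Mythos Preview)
Your proof is correct. The paper itself does not supply a proof of this theorem; it simply cites \cite{AG,BCMN,BW}, so there is no in-paper argument to compare against. Your upper bound via the reverse triangle inequality and your lower bound via a collinear test pair are exactly the standard route.

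One minor simplification for the lower bound: you can dispense with the $-(1-\delta)$ perturbation entirely by taking $\vx=\vu$ and $\vy=\boldsymbol{0}$. Then $\||\boldsymbol{A}\vx|-|\boldsymbol{A}\vy|\|_2=\|\boldsymbol{A}\vu\|_2=\|\boldsymbol{A}\|_2$ and $\dist_{\mathbb{H}}(\vx,\vy)=\|\vu\|_2=1$, giving the ratio $\|\boldsymbol{A}\|_2$ immediately. (Alternatively, $\vy=(1-\delta)\vu$ with the sign flip absorbed into $c$ works identically and avoids the small sign inconsistency between your second and third paragraphs, where $\vx-c\vy$ appears once as $\vu+c(1-\delta)\vu$ and once as $\vu-c(1-\delta)\vu$; since $c$ ranges over the whole unit circle this is harmless, but it may confuse a reader.)
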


Our main result of this section is Theorem \ref{lowerbound-ndim}, which presents a universal lower bound on the condition number $\beta_{\boldsymbol{A}}=\frac{U_{\boldsymbol{A}}}{L_{\boldsymbol{A}}}$ for all $\boldsymbol{A}\in\mathbb{H}^{m\times d}$, where $\mathbb{H}=\mathbb{R}$ or $\mathbb{C}$.
In order to obtain a lower bound on the condition number, we separately estimate the optimal Lipschitz bounds $U_{\boldsymbol{A}}$ and $L_{\boldsymbol{A}}$.
To establish a lower bound on $U_{\boldsymbol{A}}$, we utilize Theorem \ref{th:LUbound} and estimate the spectral norm of $\boldsymbol{A}$.
To obtain an upper bound on $L_{\boldsymbol{A}}$, we estimate the minimum possible ratio between $\|\abs{\boldsymbol{A}\vx}-\abs{\boldsymbol{A}\vy}\|_2$ and $\dist(\vx,\vy)$ when the pair $(\vx,\vy)$ belongs to a carefully chosen subset of $\{(\boldsymbol{z},\boldsymbol{w}):\langle \boldsymbol{z},\boldsymbol{w}\rangle=0\}$.

\begin{theorem}\label{lowerbound-ndim}
Let $\boldsymbol{A}=(\boldsymbol{a}_1,\ldots,\boldsymbol{a}_m)^*\in\mathbb{H}^{m\times d}$, where $\mathbb{H}=\mathbb{R}$ or $\mathbb{C}$. 
Then we have
\begin{equation}\label{eq:mdlower}
\beta_{\boldsymbol{A}}\geq \beta_0^{\mathbb{H}}=\begin{cases}
\sqrt{\frac{\pi}{\pi-2}}\,\,\approx\,\, 1.659  & \text{if $\mathbb{H}=\mathbb{R}$,}\\
\sqrt{\frac{4}{4-\pi}}\,\,\approx\,\, 2.159  & \text{if $\mathbb{H}=\mathbb{C}$.}\\

\end{cases}
\end{equation} 	

\end{theorem}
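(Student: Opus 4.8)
The plan is to read off $U_{\boldsymbol{A}}=\|\boldsymbol{A}\|_2$ from Theorem~\ref{th:LUbound} and to complement it with the estimate $L_{\boldsymbol{A}}^2\le c_{\mathbb{H}}\,\|\boldsymbol{A}\|_2^2$, where $c_{\mathbb{R}}=1-\tfrac2\pi$ and $c_{\mathbb{C}}=1-\tfrac\pi4$. Once this is in hand, $\beta_{\boldsymbol{A}}=U_{\boldsymbol{A}}/L_{\boldsymbol{A}}=\|\boldsymbol{A}\|_2/L_{\boldsymbol{A}}\ge 1/\sqrt{c_{\mathbb{H}}}$, and $1/\sqrt{c_{\mathbb{R}}}=\sqrt{\pi/(\pi-2)}$, $1/\sqrt{c_{\mathbb{C}}}=\sqrt{4/(4-\pi)}$, which is exactly \eqref{eq:mdlower}. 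If $\boldsymbol{A}$ does not have the phase retrieval property then $L_{\boldsymbol{A}}=0$ and $\beta_{\boldsymbol{A}}=+\infty$, so there is nothing to prove; and $d=1$ is trivial since then $\||\boldsymbol{A}\vx|-|\boldsymbol{A}\vy|\|_2=\|\boldsymbol{A}\|_2\cdot\dist(\vx,\vy)$. Hence throughout we may assume $\boldsymbol{A}\neq \boldsymbol{0}$ and $d\ge 2$.

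To bound $L_{\boldsymbol{A}}$ from above I would test the defining infimum on a rotation-invariant ensemble of orthogonal pairs pinned to the top singular direction. Let $\boldsymbol{u}\in\mathbb{S}_{\mathbb{H}}^{d-1}$ satisfy $\|\boldsymbol{A}\boldsymbol{u}\|_2=\|\boldsymbol{A}\|_2$, fix any unit $\boldsymbol{v}\perp\boldsymbol{u}$, and let $V=\mathrm{span}_{\mathbb{H}}(\boldsymbol{u},\boldsymbol{v})$. For $O$ Haar-distributed on the unitary (orthogonal) group of $V$, extended by the identity on $V^{\perp}$, set $\boldsymbol{x}=O\boldsymbol{u}$, $\boldsymbol{y}=O\boldsymbol{v}$; these are orthonormal, so $\langle\boldsymbol{x},\boldsymbol{y}\rangle=0$ and hence $\dist(\boldsymbol{x},\boldsymbol{y})^2=\|\boldsymbol{x}\|_2^2+\|\boldsymbol{y}\|_2^2=2$ for every $O$. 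A short computation gives $\langle\boldsymbol{a}_j,\boldsymbol{x}\rangle=\langle O^{*}P_V\boldsymbol{a}_j,\boldsymbol{u}\rangle$ and $\langle\boldsymbol{a}_j,\boldsymbol{y}\rangle=\langle O^{*}P_V\boldsymbol{a}_j,\boldsymbol{v}\rangle$ (the $V^{\perp}$ part of $\boldsymbol{a}_j$ is killed), so that for each fixed $j$ the pair $(\langle\boldsymbol{a}_j,\boldsymbol{x}\rangle,\langle\boldsymbol{a}_j,\boldsymbol{y}\rangle)$ is distributed as $\|P_V\boldsymbol{a}_j\|_2$ times a uniformly random point $(s_1,s_2)$ on the unit sphere of $\mathbb{H}^2$.

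The computational heart is then the single scalar
\[
c_{\mathbb{H}}:=\mathbb{E}\big(|s_1|-|s_2|\big)^2=\mathbb{E}|s_1|^2+\mathbb{E}|s_2|^2-2\,\mathbb{E}|s_1||s_2|=1-2\,\mathbb{E}|s_1||s_2|.
\]
I would evaluate $\mathbb{E}|s_1||s_2|$ by realizing $(s_1,s_2)$ as $\boldsymbol{g}/\|\boldsymbol{g}\|_2$ for a standard Gaussian vector $\boldsymbol{g}=(g_1,g_2)\in\mathbb{H}^2$ and using the independence of $\boldsymbol{g}/\|\boldsymbol{g}\|_2$ and $\|\boldsymbol{g}\|_2$, which gives $\mathbb{E}|s_1||s_2|=\mathbb{E}(|g_1||g_2|)/\mathbb{E}\|\boldsymbol{g}\|_2^2=(\mathbb{E}|g_1|)^2/2$. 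Since $\mathbb{E}|g_1|=\sqrt{2/\pi}$ in the real case and $\mathbb{E}|g_1|=\sqrt{\pi}/2$ in the complex case, this yields $\mathbb{E}|s_1||s_2|=1/\pi$ with $c_{\mathbb{R}}=1-2/\pi$, and $\mathbb{E}|s_1||s_2|=\pi/8$ with $c_{\mathbb{C}}=1-\pi/4$. (Equivalently one may note that $|s_1|^2$ has the $\mathrm{Beta}(\tfrac12,\tfrac12)$ law in the real case and the $\mathrm{Uniform}[0,1]$ law in the complex case, and integrate $\sqrt{u(1-u)}$ directly.)

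Putting the pieces together, linearity of expectation gives
\[
\mathbb{E}\,\big\||\boldsymbol{A}\boldsymbol{x}|-|\boldsymbol{A}\boldsymbol{y}|\big\|_2^2=c_{\mathbb{H}}\sum_{j=1}^m\|P_V\boldsymbol{a}_j\|_2^2=c_{\mathbb{H}}\big(\|\boldsymbol{A}\boldsymbol{u}\|_2^2+\|\boldsymbol{A}\boldsymbol{v}\|_2^2\big)\le 2c_{\mathbb{H}}\,\|\boldsymbol{A}\|_2^2,
\]
using $\|\boldsymbol{A}\boldsymbol{u}\|_2=\|\boldsymbol{A}\|_2$ and $\|\boldsymbol{A}\boldsymbol{v}\|_2\le\|\boldsymbol{A}\|_2$. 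Since $\dist(\boldsymbol{x},\boldsymbol{y})^2\equiv 2$ on this ensemble and the infimum of a quantity over the ensemble is at most its average, $L_{\boldsymbol{A}}^2\le \tfrac12\,\mathbb{E}\||\boldsymbol{A}\boldsymbol{x}|-|\boldsymbol{A}\boldsymbol{y}|\|_2^2\le c_{\mathbb{H}}\|\boldsymbol{A}\|_2^2$, which completes the argument. The only genuine work is the sphere integral $\mathbb{E}|s_1||s_2|$; the conceptual point—and the reason the constant is sharp rather than merely some $\beta_0^{\mathbb{H}}>1$—is the choice of ensemble, namely orthonormal rotations of a plane whose first axis is a maximal singular vector, so that the estimate $\|\boldsymbol{A}\boldsymbol{u}\|_2^2+\|\boldsymbol{A}\boldsymbol{v}\|_2^2\le 2\|\boldsymbol{A}\|_2^2$ cannot be improved when $\boldsymbol{A}$ is a tight frame.
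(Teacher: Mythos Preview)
Your proof is correct and rests on the same averaging idea as the paper: restrict to a two-dimensional subspace, test $L_{\boldsymbol{A}}$ against all orthonormal pairs obtained by rotating a fixed pair, and bound the minimum by the average. The key constants $\mathbb{E}|s_1||s_2|=1/\pi$ (real) and $\pi/8$ (complex) match exactly the integrals the paper computes.

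The execution differs in two places worth noting. First, the paper reduces to $d=2$ by passing to the first two columns $\boldsymbol{B}$ and then uses the trace inequality $\|\boldsymbol{B}\|_2^2\ge\tfrac12\operatorname{Tr}(\boldsymbol{B}^*\boldsymbol{B})$; you instead pick the plane through a top singular vector and use $\|\boldsymbol{A}\boldsymbol{u}\|_2^2+\|\boldsymbol{A}\boldsymbol{v}\|_2^2\le 2\|\boldsymbol{A}\|_2^2$. These are dual forms of the same inequality and yield the same constant. Second, for the averaging step the paper parametrizes $\mathcal{X}^{\mathbb{H}}$ explicitly in angles and, in the complex case, pushes the integral to $\mathbb{S}^2$ via the Hopf-type map and evaluates it with the Poisson surface-integral formula; your Gaussian factorization $\mathbb{E}|s_1||s_2|=(\mathbb{E}|g_1|)^2/\mathbb{E}\|\boldsymbol{g}\|_2^2$ is a coordinate-free shortcut to the same number. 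Your packaging is cleaner and transfers verbatim between $\mathbb{R}$ and $\mathbb{C}$; the paper's explicit parametrization is heavier but entirely elementary and self-contained, requiring no probabilistic facts about Haar measure or Gaussian normalization.
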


\begin{proof}
First, we make the assumption that (\ref{eq:mdlower}) holds for $d=2$. We will now prove that it holds for any $d> 2$.
Let $\boldsymbol{B}=(\boldsymbol{b}_1,\ldots,\boldsymbol{b}_m)^*\in\mathbb{H}^{m\times 2}$ be the matrix consisting of the first two columns of $\boldsymbol{A}\in \mathbb{H}^{m\times d}$.
According to the definition of $L_{\boldsymbol{A}}$ and $U_{\boldsymbol{A}}$, we have
\begin{equation*}
\begin{aligned}
L_{\boldsymbol{A}}&= \inf_{\substack{\boldsymbol{x},\boldsymbol{y}\in \mathbb{H}^d\\ \dist{(\boldsymbol{x},\boldsymbol{y})}\neq 0}}\frac{\|\abs{\boldsymbol{A}\boldsymbol{x}}-\abs{\boldsymbol{A}\boldsymbol{y}}\|_2}{\dist(\boldsymbol{x},\boldsymbol{y})}
\leq \inf_{\substack{\boldsymbol{x},\boldsymbol{y}\in \mathbb{H}^2\\ \dist{(\boldsymbol{x},\boldsymbol{y})}\neq 0}}\frac{\|\abs{\boldsymbol{B}\boldsymbol{x}}-\abs{\boldsymbol{B}\boldsymbol{y}}\|_2}{\dist(\boldsymbol{x},\boldsymbol{y})}=L_{\boldsymbol{B}}\\
\end{aligned} 
\end{equation*}
and
\begin{equation*}
\begin{aligned}
U_{\boldsymbol{A}}&= \sup_{\substack{\boldsymbol{x},\boldsymbol{y}\in \mathbb{H}^d\\ \dist{(\boldsymbol{x},\boldsymbol{y})}\neq 0}}\frac{\|\abs{\boldsymbol{A}\boldsymbol{x}}-\abs{\boldsymbol{A}\boldsymbol{y}}\|_2}{\dist(\boldsymbol{x},\boldsymbol{y})}
\geq \sup_{\substack{\boldsymbol{x},\boldsymbol{y}\in \mathbb{H}^2\\ \dist{(\boldsymbol{x},\boldsymbol{y})}\neq 0}}\frac{\|\abs{\boldsymbol{B}\boldsymbol{x}}-\abs{\boldsymbol{B}\boldsymbol{y}}\|_2}{\dist(\boldsymbol{x},\boldsymbol{y})}=U_{\boldsymbol{B}}.
\end{aligned} 
\end{equation*}
Therefore, we have
\begin{equation*}\label{rk6}
\beta_{\boldsymbol{A}}=\frac{U_{\boldsymbol{A}}}{L_{\boldsymbol{A}}}\geq \frac{U_{\boldsymbol{B}}}{L_{\boldsymbol{B}}}=\beta_{\boldsymbol{B}}\geq \beta_0^{\mathbb{H}}.
\end{equation*}
This completes the proof.

It remains to prove  (\ref{eq:mdlower}) holds for $d=2$, i.e., $\boldsymbol{A}=(\boldsymbol{a}_1, \boldsymbol{a}_2,\ldots,\boldsymbol{a}_m)^*\in \mathbb{H}^{m\times 2}$. 
Without loss of generality, for each $1\leq i\leq m$ we write 
\begin{equation*}
\boldsymbol{a}_i=t_i\begin{pmatrix}
\cos\phi_i\cos\alpha_i\\
\sin\phi_i\cos\beta_i	
\end{pmatrix}
+t_i\cdot \mathrm{i} \cdot \begin{pmatrix}
\cos\phi_i\sin\alpha_i\\
\sin\phi_i\sin\beta_i	
\end{pmatrix},
\end{equation*}
where $t_i=\Vert \boldsymbol{a}_i\Vert_2\geq0$, $(\phi_i,\alpha_i,\beta_i)\in I^{\mathbb{H}}$ and
\begin{equation*}
I^{\mathbb{H}}:=\begin{cases}
[0,\pi]\times \{0\} \times \{0\} & \text{if $\mathbb{H}=\mathbb{R}$,}\\	
[0,\pi]\times [0,2\pi] \times [0,2\pi] & \text{if $\mathbb{H}=\mathbb{C}$.}\\	
\end{cases}	
\end{equation*}
Since $\boldsymbol{A}^*\boldsymbol{A}$ is a $2\times 2$ matrix, we have
\begin{equation*}
\Vert\boldsymbol{A}\Vert_2^2=\Vert\boldsymbol{A}^*\boldsymbol{A}\Vert_2\geq \frac{1}{2}\cdot \mathrm{Tr}(\boldsymbol{A}^*\boldsymbol{A})=\frac{1}{2}\cdot \mathrm{Tr}(\boldsymbol{A}\boldsymbol{A}^*)=\frac{1}{2}\sum_{i=1}^{m}\Vert\boldsymbol{a}_i\Vert_2^2=\frac{1}{2}\sum_{i=1}^{m}t_i^2.
\end{equation*}
Also note that $U_{\boldsymbol{A}}=\Vert\boldsymbol{A}\Vert_2$ and 
\begin{equation}\label{xH2024}
\begin{aligned}
(L_{\boldsymbol{A}})^2&=\inf_{\boldsymbol{x},\boldsymbol{y}\in \mathbb{H}^{2},\mathrm{dist}(\boldsymbol{x},\boldsymbol{y})\neq 0}\frac{\||\boldsymbol{A}\boldsymbol{x}|-|\boldsymbol{A}\boldsymbol{y}|\|_2^2}{\mathrm{dist}^2(\boldsymbol{x},\boldsymbol{y})}
&\leq  \min_{(\boldsymbol{x},\boldsymbol{y})\in \mathcal{X}^{\mathbb{H}}}\frac{\||\boldsymbol{A}\boldsymbol{x}|-|\boldsymbol{A}\boldsymbol{y}|\|_2^2}{\mathrm{dist}^2(\boldsymbol{x},\boldsymbol{y})}=:M_{\boldsymbol{A}},
\end{aligned}
\end{equation} 
where $\mathcal{X}^{\mathbb{H}}\subset \mathbb{H}^2\times \mathbb{H}^2$ is defined as
\begingroup\fontsize{9pt}{12pt}\selectfont
\begin{equation*}
\mathcal{X}^{\mathbb{H}}:=\left\{ (\boldsymbol{x},\boldsymbol{y})\ \Bigg|\ \boldsymbol{x}=\begin{pmatrix}
\cos\theta\cos\alpha\\
\sin\theta\cos\beta	
\end{pmatrix}+\mathrm{i} 
\begin{pmatrix}
\cos\theta\sin\alpha\\
\sin\theta\sin\beta	
\end{pmatrix},
\boldsymbol{y}=\begin{pmatrix}
\sin\theta\cos\alpha\\
-\cos\theta\cos\beta	
\end{pmatrix}+\mathrm{i}  
\begin{pmatrix}
\sin\theta\sin\alpha\\
-\cos\theta\sin\beta	
\end{pmatrix},
(\theta,\alpha,\beta)\in I^{\mathbb{H}}
\right\}	.
\end{equation*}
\endgroup
Therefore, we have
\begin{equation}\label{rk7}
\beta_{\boldsymbol{A}}\,\,=\,\,\frac{U_{\boldsymbol{A}}}{L_{\boldsymbol{A}}}\,\,=\,\,\frac{\Vert\boldsymbol{A}\Vert_2}{L_{\boldsymbol{A}}}\geq\sqrt{\frac{\sum_{i=1}^{m}t_i^2}{2\cdot  M_{\boldsymbol{A}}}}.	
\end{equation} 
Then, to prove $\beta_{\boldsymbol{A}}\geq \beta_0^{\mathbb{H}}$, it is enough to show that 
\begin{equation}\label{eqxu:2}
M_{\boldsymbol{A}}\leq \frac{1}{2\cdot (\beta_0^{\mathbb{H}})^2}\sum_{i=1}^{m}t_i^2.
\end{equation}
A simple calculation shows that for any $(\theta,\alpha,\beta)\in I^{\mathbb{H}}$, by letting
\begin{equation*}
\boldsymbol{x}=\begin{pmatrix}
\cos\theta\cos\alpha\\
\sin\theta\cos\beta	
\end{pmatrix}+\mathrm{i} 
\begin{pmatrix}
\cos\theta\sin\alpha\\
\sin\theta\sin\beta	
\end{pmatrix}\quad\text{and}\quad 
\boldsymbol{y}=\begin{pmatrix}
\sin\theta\cos\alpha\\
-\cos\theta\cos\beta	
\end{pmatrix}+\mathrm{i} 
\begin{pmatrix}
\sin\theta\sin\alpha\\
-\cos\theta\sin\beta	
\end{pmatrix},
\end{equation*}
we have $(\boldsymbol{x},\boldsymbol{y})\in \mathcal{X}^{\mathbb{H}}$, $\langle \boldsymbol{x},\boldsymbol{y}\rangle=0$, $\mathrm{dist}^2(\boldsymbol{x},\boldsymbol{y})=\Vert\boldsymbol{x}-\boldsymbol{y}\Vert_2^2=2$, and
\begin{equation*}
\begin{aligned}
|\boldsymbol{a}_i^*\boldsymbol{x}|^2&= \frac{1}{2}t_i^2+\frac{1}{2}t_i^2\Big(\cos2\phi_i\cos2\theta+\sin2\phi_i\sin2\theta\cos(\alpha-\beta-\alpha_i+\beta_i)\Big),\\
|\boldsymbol{a}_i^*\boldsymbol{y}|^2&= \frac{1}{2}t_i^2-\frac{1}{2}t_i^2\Big(\cos2\phi_i\cos2\theta+\sin2\phi_i\sin2\theta\cos(\alpha-\beta-\alpha_i+\beta_i)\Big).
\end{aligned}
\end{equation*}
Therefore, we have
\begin{equation}\label{eqxu12:2}
\begin{aligned}
M_{\boldsymbol{A}}&=\min_{(\boldsymbol{x},\boldsymbol{y})\in \mathcal{X}^{\mathbb{H}}}\frac{\||\boldsymbol{A}\boldsymbol{x}|-|\boldsymbol{A}\boldsymbol{y}|\|_2^2}{\mathrm{dist}^2(\boldsymbol{x},\boldsymbol{y})}=\min_{(\boldsymbol{x},\boldsymbol{y})\in \mathcal{X}^{\mathbb{H}}}\frac{1}{2}\sum_{i=1}^m (|\boldsymbol{a}_i^*\boldsymbol{x}|^2+|\boldsymbol{a}_i^*\boldsymbol{y}|^2)- \sum_{i=1}^m |\boldsymbol{a}_i^*\boldsymbol{x}\|\boldsymbol{a}_i^*\boldsymbol{y}|\\
&=\min_{(\theta,\alpha,\beta)\in I^{\mathbb{H}}}\frac{1}{2}\sum_{i=1}^{m}t_i^2-f(\theta,\alpha,\beta)=\frac{1}{2}\sum_{i=1}^{m}t_i^2-\max_{(\theta,\alpha,\beta)\in I^{\mathbb{H}}}	f(\theta,\alpha,\beta),
\end{aligned}
\end{equation}
where $f(\theta,\alpha,\beta)$ is defined as
\begin{equation*}
f(\theta,\alpha,\beta):=\frac{1}{2}\sum_{i=1}^m t_i^2 \sqrt{ 1-\Big(\cos2\phi_i\cos2\theta+\sin2\phi_i\sin2\theta\cos(\alpha-\beta-\alpha_i+\beta_i)\Big)^2  }.	
\end{equation*}
We claim that for all $t_i\geq 0$ and  $(\phi_i,\alpha_i,\beta_i)\in I^{\mathbb{H}}$, $i=1,\ldots,m$,
\begin{equation}\label{eqxu12:1}
\max\limits_{(\theta,\alpha,\beta)\in I^{\mathbb{H}}}f(\theta,\alpha,\beta)\geq \Big(\frac{1}{2}-\frac{1}{2\cdot  (\beta_0^{\mathbb{H}})^2}\Big)\sum_{i=1}^{m}t_i^2= \begin{cases}
\frac{1}{\pi}\sum_{i=1}^m t_i^2	& \text{if $\mathbb{H}=\mathbb{R}$,}\\
\frac{\pi}{8}\sum_{i=1}^m t_i^2	& \text{if $\mathbb{H}=\mathbb{C}$.}
\end{cases}
\end{equation}
Then, plugging \eqref{eqxu12:1} into \eqref{eqxu12:2}, we immediately obtain \eqref{eqxu:2}.

It remains to prove \eqref{eqxu12:1}. We divide the proof into two cases.

\textbf{ Case 1: $\mathbb{H}=\mathbb{R}$.}
According to the definition of $I^{\mathbb{H}}$, it is enough to prove \eqref{eqxu12:1} when $\alpha=\beta=0$ and $\alpha_i=\beta_i=0$, $i=1,\ldots,m$. In this case, $f(\theta,0,0)$ can be simplified as
\begin{equation*}
f(\theta,0,0)=\frac{1}{2}\sum_{i=1}^m t_i^2 \sqrt{ 1-\Big(\cos2\phi_i\cos2\theta+\sin2\phi_i\sin2\theta\Big)^2 }
=\frac{1}{2}\sum_{i=1}^m t_i^2 |\sin(2\theta-2\phi_i)|.	
\end{equation*}
Since $f(\theta,0,0)$ is nonnegative for each $\theta$,
we have
\begin{equation}\label{eqxu:5}
\max\limits_{(\theta,\alpha,\beta)\in I^{\mathbb{R}}}f(\theta,\alpha,\beta)=\max\limits_{\theta\in [0,\pi]}f(\theta,0,0)\geq \frac{1}{\pi} \int_{0}^{\pi} f(\theta,0,0)\text{d}\theta=\frac{1}{\pi} \int_{0}^{\pi} \frac{1}{2} \sum_{i=1}^{m}t_i^2 |\sin(2\theta-2\phi_i)|\text{d}\theta.	
\end{equation}
A simple calculation shows that
\begin{equation}\label{eqxu:51}
\int_{0}^{\pi}\frac{1}{2} \sum_{i=1}^{m}t_i^2|\sin(2\theta-2\phi_i)|\text{d}\theta
=\sum_{i=1}^m \frac{t_i^2}{2}\int_{0}^{\pi}  |\sin(2\theta-2\phi_i)|\text{d}\theta\overset{(a)}=\sum_{i=1}^m t_i^2\int_{\phi_i}^{\phi_i+\frac{\pi}{2}}  \sin(2\theta-2\phi_i)\text{d}\theta=\sum_{i=1}^m t_i^2,	
\end{equation}
where ($a$) follows from the fact that $|\sin(2\theta-2\phi_i)|$ is a periodic function in $\theta$ with period $\frac{\pi}{2}$. Substituting equation \eqref{eqxu:51} into \eqref{eqxu:5}, we arrive at \eqref{eqxu12:1} when $\mathbb{H}=\mathbb{R}$.

\textbf{ Case 2: $\mathbb{H}=\mathbb{C}$.} 
Since $\cos(\alpha-\beta-\alpha_i+\beta_i)=\cos(\alpha-\beta)\cos(\alpha_i-\beta_i)+\sin(\alpha-\beta)\sin(\alpha_i-\beta_i)$ for each $1\leq i\leq m$, we can rewrite $f(\theta,\alpha,\beta)$ as
\begin{equation*}
f(\theta,\alpha,\beta)=\frac{1}{2}	\sum_{i=1}^m t_i^2  \sqrt{ 1-(a_i\cdot x+b_i\cdot y+c_i\cdot z)^2  }=:h(x,y,z),
\end{equation*}
where $x=\cos2\theta$, $y=\sin2\theta\cos(\alpha-\beta)$, $z=\sin2\theta\sin(\alpha-\beta)$, $a_i=\cos2\phi_i$, $b_i=\sin2\phi_i\cos(\alpha_i-\beta_i)$, $c_i=\sin2\phi_i\sin(\alpha_i-\beta_i)$.
Note that $x^2+y^2+z^2=1$ and $a_i^2+b_i^2+c_i^2=1$, $i=1,\ldots,m$. 
Therefore, proving \eqref{eqxu12:1} for the case $\mathbb{H}=\mathbb{C}$ is equivalent to proving that for all $(a_i,b_i,c_i)\in\mathbb{S}^2$, $i=1,\ldots,m$, we have
\begin{equation}\label{eqxx4}
\max\limits_{(x,y,z)\in \mathbb{S}^{2} }	h(x,y,z)\geq  \frac{\pi}{8} \sum_{i=1}^{m}t_i^2.
\end{equation}	
Here, $\mathbb{S}^{2}$ denotes the set of unit-norm vectors in $\mathbb{R}^3$. Note that $h(x,y,z)$ is nonnegative for all $(x,y,z)\in \mathbb{S}^{2}$, so we have
\begin{equation}\label{eqxx3}
\max\limits_{(x,y,z)\in \mathbb{S}^{2} }	h(x,y,z)\geq \frac{1}{4\pi\cdot 1^2} \iint_{\mathbb{S}^{2}}	 h(x,y,z) \text{d}S=\frac{1}{8\pi}	\sum_{i=1}^m t_i^2 \iint_{\mathbb{S}^{2}}	   \sqrt{ 1-(a_i\cdot x+b_i\cdot y+c_i\cdot z)^2  } \text{d}S.
\end{equation}
According to Poisson formula in surface integrals, for any continuous univariate function $p(u)$ and for any real numbers $a,b,c$, we have
\begin{equation}\label{Poisson formula}
\iint_{\mathbb{S}^{2}}	  p(a\cdot x+b\cdot y+c\cdot z)  \text{d}S
=2\pi  	\int_{-1}^{1}	  p(u\cdot \sqrt{a^2+b^2+c^2} )  \text{d}u.
\end{equation}
Substituting $a=a_i$, $b=b_i$, $c=c_i$, $p(u)=\sqrt{1-u^2}$ into \eqref{Poisson formula} and using the fact that $a_i^2+b_i^2+c_i^2=1$, we obtain
\begin{equation*}
\iint_{\mathbb{S}^{2}}	   \sqrt{ 1-(a_i\cdot x+b_i\cdot y+c_i\cdot z)^2  } \text{d}S	
=	2\pi  	\int_{-1}^{1}	  \sqrt{1-u^2}  \text{d}u=2\pi\cdot \frac{1}{2}\pi \cdot 1^2=\pi^2.
\end{equation*}
Substituting the above equation into \eqref{eqxx3}, we arrive at \eqref{eqxx4}. This completes the proof.

\end{proof}

\section{Estimate the condition number $\beta_{\boldsymbol{A}}$ for $\boldsymbol{A}\in\mathbb{R}^{m\times 2}$}\label{section:dim2}

In this section we focus on the case $d=2$ and give a deeper investigation on the condition number $\beta_{\boldsymbol{A}}$ where $\boldsymbol{A}\in\mathbb{H}^{m\times 2}$. We mainly focus on the real case $\mathbb{H}=\mathbb{R}$, and our objective is to determine the matrix $\boldsymbol{A}\in \mathbb{R}^{m\times 2} $ that minimizes $\beta_{\boldsymbol{A}}$ among all $m\times 2$ real matrices. 
The findings presented in this section are anticipated to offer significant insights into answering Question II, i.e., identifying the optimal vector set for phase retrieval. 

Throughout this section we denote the harmonic frame in $\R^2$ as $\boldsymbol{E}_m$. In other words, the rows of $\boldsymbol{E}_m$ consists of $m$ equidistant points on the semicircle, i.e., 
\begin{equation}\label{equidistant}
\boldsymbol{E}_m:=
\begingroup
\arraycolsep=5pt\def\arraystretch{1.4}
\begin{pmatrix}
1 &  \cos(\frac{1}{m}\pi) & \cdots & \cos(\frac{m-1}{m}\pi)  \\
0 & \sin(\frac{1}{m}\pi) &\cdots &  \sin(\frac{m-1}{m}\pi) \\
\end{pmatrix}
\endgroup
^T\,\,\in\,\,\R^{m\times 2}.
\end{equation}

We now present the central outcome of this section, which establishes that for any odd integer $m\geq 3$, a collection of $m$ equidistant points on the semicircle attains the minimum condition number $\beta_{\boldsymbol{A}}$ for $\boldsymbol{A}\in \R^{m\times 2}$.

\begin{theorem}\label{th:opti}
Let $m\geq 3$ be an odd integer, and let $\boldsymbol{E}_m$ be defined as in \eqref{equidistant}.
Then, 
\begin{equation*}
\boldsymbol{E}_m\,\,\in\,\, \argmin{\boldsymbol{A}\in \R^{m\times 2}} \beta_{\boldsymbol{A}},
\end{equation*}
i.e., $\beta_{\boldsymbol{E}_m}=\min_{\boldsymbol{A}\in \R^{m\times 2}} \beta_{\boldsymbol{A}}$.
\end{theorem}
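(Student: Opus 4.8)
The strategy has two halves: a lower bound valid for all $\boldsymbol{A}\in\mathbb{R}^{m\times 2}$, and an exact computation of $\beta_{\boldsymbol{E}_m}$ showing equality when $m$ is odd. For the lower bound, I would sharpen the argument already used in the proof of Theorem \ref{lowerbound-ndim}. Writing $\boldsymbol{a}_i = t_i(\cos\phi_i,\sin\phi_i)^T$, we again have $U_{\boldsymbol{A}}^2 = \|\boldsymbol{A}\|_2^2 \geq \tfrac12\sum_i t_i^2$ and $(L_{\boldsymbol{A}})^2 \leq \tfrac12\sum_i t_i^2 - \max_\theta f(\theta)$ where $f(\theta) = \tfrac12\sum_i t_i^2\,|\sin(2\theta-2\phi_i)|$. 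The crude bound $\max_\theta f(\theta) \geq \tfrac1\pi\int_0^\pi f$ gave the constant $\beta_0^{\mathbb{R}}$; to reach the sharper bound \eqref{betterbound2024} I would instead evaluate $f$ at the $2m$ points $\theta = \phi_j/1 + k\pi/(2m)$ type lattice — more precisely, average $f$ over a cleverly chosen finite set of $\theta$'s (an arithmetic progression with spacing $\pi/(2m)$, or all the "midpoint" angles $\tfrac{\phi_j+\phi_k}{2}$) rather than over the whole circle, exploiting that $|\sin|$ evaluated at equally spaced points of a half-period sums to something strictly larger than its integral average. This discrete averaging, combined with a concavity/rearrangement step to show the worst case is when all $\phi_i$ coincide and $\sum t_i^2$ is distributed adversarially, should yield $\beta_{\boldsymbol{A}} \geq (1 - \tfrac{1}{m\sin(\pi/(2m))})^{-1/2}$ for all real $m\times 2$ matrices.

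For the matching upper bound, I would directly compute $\beta_{\boldsymbol{E}_m} = \|\boldsymbol{E}_m\|_2 / \Delta_{\boldsymbol{E}_m}$ using Balan-Wang's formula \eqref{2024-beta-Wang}. Since the rows of $\boldsymbol{E}_m$ are $m$ equispaced points on the semicircle, $\boldsymbol{E}_m^*\boldsymbol{E}_m = \tfrac{m}{2}\boldsymbol{I}_2$ by a standard sum-of-cosines identity, so $\|\boldsymbol{E}_m\|_2 = \sqrt{m/2}$. The harder quantity is $\Delta_{\boldsymbol{E}_m} = \min_{I\subset[m]}\sqrt{\lambda_{\min}(\boldsymbol{E}_{m,I}^*\boldsymbol{E}_{m,I}) + \lambda_{\min}(\boldsymbol{E}_{m,I^c}^*\boldsymbol{E}_{m,I^c})}$. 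For a subset $I$ of the $m$ equispaced directions on the semicircle, $\lambda_{\min}(\boldsymbol{E}_{m,I}^*\boldsymbol{E}_{m,I}) = \tfrac12(\#I) - \tfrac12|\sum_{j\in I} e^{2\mathrm{i}\phi_j}|$; hence $\lambda_{\min}(\boldsymbol{E}_{m,I}^*\boldsymbol{E}_{m,I}) + \lambda_{\min}(\boldsymbol{E}_{m,I^c}^*\boldsymbol{E}_{m,I^c}) = \tfrac{m}{2} - \tfrac12(|v_I| + |v_{I^c}|)$ where $v_I = \sum_{j\in I}e^{2\mathrm{i}\phi_j}$ and $v_{I^c} = -v_I$ since the full sum over all $m$ doubled angles $\tfrac{2k\pi}{m}$ vanishes. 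So $\Delta_{\boldsymbol{E}_m}^2 = \tfrac{m}{2} - \max_{I}|\sum_{j\in I} e^{2\mathrm{i}k\pi/m}|$, and the problem reduces to maximizing $|\sum_{j\in I} \zeta^j|$ over subsets $I$ of the $m$-th roots of unity $\zeta^j = e^{2\pi\mathrm{i}j/m}$ (where when $m$ is odd, $j\mapsto 2j$ is a bijection mod $m$, so the doubled angles are again all $m$-th roots of unity). The maximum of $|\sum_{j\in I}\zeta^j|$ over subsets $I$ of $m$-th roots of unity is a classical extremal problem: the optimum is a contiguous arc of roughly half the roots, giving $\max_I |\sum_{j\in I}\zeta^j| = \tfrac{1}{2\sin(\pi/(2m))}$ (the modulus of a sum of $\lceil m/2\rceil$ consecutive unit vectors spaced by angle $2\pi/m$, evaluated via the Dirichlet-kernel formula $|\sum_{k=0}^{n-1} e^{\mathrm{i}k\psi}| = |\sin(n\psi/2)/\sin(\psi/2)|$). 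Plugging in gives $\Delta_{\boldsymbol{E}_m}^2 = \tfrac{m}{2} - \tfrac{1}{2\sin(\pi/(2m))}$, hence $\beta_{\boldsymbol{E}_m} = (1 - \tfrac{1}{m\sin(\pi/(2m))})^{-1/2}$ as in \eqref{betaA-R2-2024xu}, matching the lower bound exactly for odd $m$.

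The main obstacle, I expect, is the lower bound — specifically pinning down the sharp discrete averaging inequality and reducing the general configuration $\{(t_i,\phi_i)\}$ to the extremal one. The issue is that the upper bound on $L_{\boldsymbol{A}}$ comes from a minimization over $\theta$ (and over $\boldsymbol{x},\boldsymbol{y}$ orthogonal, via the characterization in \cite{AAFG24}), while the upper bound on $\beta_{\boldsymbol{A}}$ from $\|\boldsymbol{A}\|_2 \geq \sqrt{\tfrac12\sum t_i^2}$ is tight only when $\boldsymbol{A}^*\boldsymbol{A}$ is a scalar multiple of the identity — so I must check that the averaging-over-$\theta$ bound is still strong enough when $\boldsymbol{A}^*\boldsymbol{A}$ is \emph{not} a multiple of the identity (in which case $\|\boldsymbol{A}\|_2$ is strictly larger, which only helps). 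Quantitatively: the claim is that $\max_\theta \tfrac12\sum_i t_i^2|\sin(2\theta-2\phi_i)| \geq \tfrac{1}{m\sin(\pi/(2m))}\cdot\tfrac12\sum_i t_i^2$ with the discrete average over $m$ equally spaced $\theta$'s, and the worst case is when all $t_i^2$ mass sits at angles forming the "most spread out" configuration against that progression — which by a convexity argument in each $\phi_i$ separately reduces to a single pure frequency, where the inequality becomes the elementary identity $\tfrac1m\sum_{k=0}^{m-1}|\sin(\tfrac{2k\pi}{m} + \varphi)| \geq \tfrac{1}{m\sin(\pi/(2m))}$ minimized over the phase $\varphi$. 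I would prove this last one-variable inequality by the Dirichlet-kernel / Fourier-coefficient computation for $|\sin|$, then lift back. Everything else is bookkeeping with the Balan-Wang formula and trigonometric sums.
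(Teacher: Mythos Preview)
Your computation of $\beta_{\boldsymbol{E}_m}$ via Balan--Wang's formula $\beta_{\boldsymbol{A}} = \|\boldsymbol{A}\|_2/\Delta_{\boldsymbol{A}}$ is correct and is a genuinely different route from the paper's, which instead invokes the characterization $L_{\boldsymbol{A}} = \min_{\langle\boldsymbol{x},\boldsymbol{y}\rangle=0}\||\boldsymbol{A}\boldsymbol{x}|-|\boldsymbol{A}\boldsymbol{y}|\|_2/\mathrm{dist}(\boldsymbol{x},\boldsymbol{y})$ from \cite{AAFG24} and evaluates the minimum directly (Lemma~\ref{upperbound-orth} and Lemma~\ref{prop-calculation}). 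Your reduction to $\Delta_{\boldsymbol{E}_m}^2 = \tfrac{m}{2} - \max_I\bigl|\sum_{j\in I}\zeta^j\bigr|$ with $\zeta = e^{2\pi\mathrm{i}/m}$ is clean, and the Dirichlet-kernel evaluation $\max_I\bigl|\sum_{j\in I}\zeta^j\bigr| = \tfrac{1}{2\sin(\pi/(2m))}$ for odd $m$ is right; you should add the one-line justification that the optimal $I$ is a contiguous arc (for each direction $u\in\mathbb{S}^1$, the subset maximizing $\mathrm{Re}(\bar u\sum_{j\in I}\zeta^j)$ picks exactly the roots on one side of $u^\perp$).

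The lower bound, however, has a genuine gap. The scheme ``$\max_\theta g(\theta) \geq \tfrac{1}{N}\sum_k g(\theta_k)$ and then bound each $\phi_i$-term'' cannot reach the sharp constant $\tfrac{1}{m\sin(\pi/(2m))}$. By linearity in $t_i^2$ your argument reduces, exactly as you say, to the one-variable claim
\[
\min_\varphi\,\frac{1}{m}\sum_{k=0}^{m-1}\Bigl|\sin\Bigl(\tfrac{2k\pi}{m}+\varphi\Bigr)\Bigr| \;\geq\; \frac{1}{m\sin(\pi/(2m))},
\]
but this inequality is \emph{false}: already for $m=3$ and $\varphi=0$ the left side is $\tfrac{1}{3}\bigl(0+\tfrac{\sqrt 3}{2}+\tfrac{\sqrt 3}{2}\bigr) = \tfrac{\sqrt 3}{3}\approx 0.577$, while the right side is $\tfrac{1}{3\sin(\pi/6)}=\tfrac{2}{3}\approx 0.667$. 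More structurally, for \emph{any} fixed set $\{\theta_k\}$ the function $\varphi\mapsto\tfrac{1}{N}\sum_k|\sin(\theta_k-\varphi)|$ has $\varphi$-average exactly $\tfrac{2}{\pi}$, so its minimum over $\varphi$ is at most $\tfrac{2}{\pi}$, which is strictly smaller than $\tfrac{1}{m\sin(\pi/(2m))}$ for every finite $m$ (since $m\sin(\pi/(2m))<\pi/2$). Hence no averaging over $\theta$ --- discrete or continuous, with nodes depending on the $\phi_i$'s or not --- can yield the bound term-by-term; and if you try to exploit cancellation across different $i$'s, the argument becomes circular (equal weights $t_i^2$ force you back to exactly the inequality you are trying to prove). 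Your remark that ``the worst case is when all $\phi_i$ coincide'' is also backwards: in that case $\max_\theta g(\theta)=\sum_i t_i^2$, which is the \emph{easiest} case; the tight configuration is the harmonic frame itself.

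The paper's Lemma~\ref{beta_A:d2m3} avoids this obstruction by an entirely different mechanism: write $g(\theta)=\sum_i t_i^2|\sin(\theta-\phi_i)|$ as a single sinusoid $r_k\sin(\theta-\theta_k)$ on each interval $[\phi_k,\phi_{k+1}]$, bound $\int_{\phi_k}^{\phi_{k+1}}g \leq 2r_k\sin\bigl(\tfrac{\phi_{k+1}-\phi_k}{2}\bigr)$, apply Jensen to the concave function $\sin$ on the gap lengths to get $\sum_k\sin\bigl(\tfrac{\phi_{k+1}-\phi_k}{2}\bigr)\leq m\sin\bigl(\tfrac{\pi}{2m}\bigr)$, and then prove the non-obvious pointwise estimate $r_k\leq\max_\theta g(\theta)$ by a monotonicity-chaining argument across consecutive intervals. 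This last step is the crux and has no analogue in an averaging approach.
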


In order to establish the proof of Theorem \ref{th:opti}, we introduce the following two theorems.

\begin{theorem}\label{lowerbound-2dim}
Let $d\geq 2$, $m\geq 3$ and let $\boldsymbol{A}=(\boldsymbol{a}_1,\ldots,\boldsymbol{a}_m)^T\in\mathbb{R}^{m\times d}$.  Then we have
\begin{equation}	\label{eqxu:6}
\beta_{\boldsymbol{A}}\,\,\geq\,\,  \frac{1}{\sqrt{1-\frac{1}{m\cdot \sin\frac{\pi}{2m}}}}.
\end{equation} 	
\end{theorem}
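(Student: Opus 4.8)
The plan is to follow the same strategy as in the proof of Theorem \ref{lowerbound-ndim}: reduce to $d=2$, use $U_{\boldsymbol{A}}=\|\boldsymbol{A}\|_2\geq \sqrt{\tfrac12\sum_i t_i^2}$ from Theorem \ref{th:LUbound} and the trace bound, and produce an upper bound on $L_{\boldsymbol{A}}$ by evaluating $\||\boldsymbol{A}\boldsymbol{x}|-|\boldsymbol{A}\boldsymbol{y}|\|_2/\dist(\boldsymbol{x},\boldsymbol{y})$ on the family $\mathcal{X}^{\mathbb{R}}$ of orthonormal pairs $(\boldsymbol{x},\boldsymbol{y})$ parametrized by the rotation angle $\theta\in[0,\pi]$. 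The reduction to $d=2$ is verbatim the column-restriction argument already used. Writing $\boldsymbol{a}_i=t_i(\cos\phi_i,\sin\phi_i)^T$ with $t_i=\|\boldsymbol{a}_i\|_2$ and $\phi_i\in[0,\pi]$, the computation in \eqref{eqxu12:2} gives
\begin{equation*}
(L_{\boldsymbol{A}})^2\leq \frac{1}{2}\sum_{i=1}^m t_i^2-\max_{\theta\in[0,\pi]} g(\theta),\qquad g(\theta):=\frac{1}{2}\sum_{i=1}^m t_i^2\,|\sin(2\theta-2\phi_i)|.
\end{equation*}
Hence $\beta_{\boldsymbol{A}}^2\geq \tfrac12\sum t_i^2/\big(\tfrac12\sum t_i^2-\max_\theta g(\theta)\big)$, and it suffices to prove the sharper averaging bound
\begin{equation}\label{eq:keyplanbound}
\max_{\theta\in[0,\pi]} g(\theta)\ \geq\ \Big(\frac12-\frac12\Big(1-\tfrac{1}{m\sin\frac{\pi}{2m}}\Big)\Big)\sum_{i=1}^m t_i^2=\frac{1}{2m\sin\frac{\pi}{2m}}\sum_{i=1}^m t_i^2.
\end{equation}

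The key improvement over Theorem \ref{lowerbound-ndim} is that instead of averaging $g$ against the uniform measure on $[0,\pi]$ (which only gives $\max_\theta g\geq \tfrac{1}{\pi}\sum t_i^2$, i.e.\ the bound $\beta_0^{\mathbb{R}}$), I would evaluate $g$ only at the $m$ equally spaced points $\theta_k=\tfrac{k\pi}{m}+\delta$, $k=0,\dots,m-1$, for a suitable phase shift $\delta$, and take the average over these $m$ values. Since $\max_\theta g(\theta)\geq \tfrac1m\sum_{k=0}^{m-1}g(\theta_k)$, I need a lower bound on $\tfrac1m\sum_k g(\theta_k)=\tfrac{1}{2m}\sum_i t_i^2\sum_{k=0}^{m-1}|\sin(2\theta_k-2\phi_i)|$. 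The inner sum $\sum_{k=0}^{m-1}|\sin(\tfrac{2k\pi}{m}+\psi_i)|$ (where $\psi_i=2\delta-2\phi_i$) is a sum of $|\sin|$ over $m$ equally spaced points; because $|\sin|$ has period $\pi$, these $m$ points hit $\gcd$-many distinct residues mod $\pi$, and one computes $\sum_{k=0}^{m-1}|\sin(\tfrac{2k\pi}{m}+\psi)|$. When $m$ is odd the $m$ points $\{\tfrac{2k\pi}{m}\bmod\pi\}$ are exactly the $m$ equally spaced points $\{\tfrac{k\pi}{m}\}$ in $[0,\pi)$, and the minimum over $\psi$ of $\sum_{k=0}^{m-1}|\sin(\tfrac{k\pi}{m}+\psi)|$ equals $\tfrac{1}{\sin\frac{\pi}{2m}}$ (attained at the symmetric configuration, and this is the standard closed form $\sum_{k=0}^{m-1}\sin(\tfrac{k\pi}{m}+c)$ via a telescoping/Dirichlet-kernel identity, minimized over the sign pattern). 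Choosing $\delta$ to realize this worst case uniformly is not needed — I only need the bound $\sum_{k=0}^{m-1}|\sin(\tfrac{2k\pi}{m}+\psi_i)|\geq \tfrac{1}{\sin\frac{\pi}{2m}}$ to hold for \emph{every} $\psi_i$, which is exactly the statement that the minimum of this trigonometric sum over the phase is $\tfrac{1}{\sin\frac{\pi}{2m}}$. Plugging this in termwise gives $\tfrac1m\sum_k g(\theta_k)\geq \tfrac{1}{2m\sin\frac{\pi}{2m}}\sum_i t_i^2$, which is precisely \eqref{eq:keyplanbound}.

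The main obstacle is the exact evaluation of $\min_{\psi}\sum_{k=0}^{m-1}|\sin(\tfrac{2k\pi}{m}+\psi)|$ for odd $m$ and showing it equals $\tfrac{1}{\sin(\pi/(2m))}$: one must identify the sign pattern of the $m$ terms at the minimizing phase, reduce the absolute-value sum to a genuine sum $\pm\sin(\tfrac{k\pi}{m}+c)$ over a half-period, and then apply the closed-form geometric-series identity $\sum_{k=0}^{m-1}\sin(a+k\eta)=\tfrac{\sin(m\eta/2)}{\sin(\eta/2)}\sin\!\big(a+\tfrac{(m-1)\eta}{2}\big)$ with $\eta=\pi/m$, together with a convexity/rearrangement argument to confirm that the configuration with the terms symmetric about a zero of $\sin$ is the true minimizer. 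Once this lemma is in hand, everything else is the bookkeeping already laid out above, and the bound \eqref{eqxu:6} follows. (This also explains why the harmonic frame $\boldsymbol{E}_m$ with $m$ odd will turn out to be extremal: for $\boldsymbol{E}_m$ all $t_i=1$ and the $\phi_i$ are the equally spaced angles, so the chain of inequalities — the trace bound and the averaging bound — are simultaneously tight.)
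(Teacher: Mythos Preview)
Your reduction to $d=2$ and the upper bound $(L_{\boldsymbol{A}})^2\leq \tfrac12\sum t_i^2-\max_\theta g(\theta)$ are fine, but the discrete-averaging step contains a genuine error. You need, for every phase $\psi$,
\[
\sum_{k=0}^{m-1}\Big|\sin\Big(\tfrac{2k\pi}{m}+\psi\Big)\Big|\ \geq\ \frac{1}{\sin\frac{\pi}{2m}},
\]
i.e.\ that the \emph{minimum} over $\psi$ of this sum equals $1/\sin(\pi/(2m))$. In fact this is the \emph{maximum}, not the minimum; this is precisely Lemma~\ref{prop-calculation} (for odd $m$). The sum is $\tfrac{1}{\sin(\pi/(2m))}\cos(\psi-\tfrac{\pi}{2m})$ on a period, so its minimum is $\cot(\pi/(2m))$. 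Concretely, for $m=3$ and $\psi=0$ one gets $|\sin 0|+|\sin\tfrac{\pi}{3}|+|\sin\tfrac{2\pi}{3}|=\sqrt{3}\approx1.732<2=1/\sin(\pi/6)$. Your averaging therefore only yields
\[
\max_\theta g(\theta)\ \geq\ \frac{\cot(\pi/(2m))}{2m}\sum_i t_i^2,
\]
which gives the strictly weaker bound $\beta_{\boldsymbol{A}}\geq \big(1-\tfrac{\cos(\pi/(2m))}{m\sin(\pi/(2m))}\big)^{-1/2}$. For even $m$ the situation is worse still, since the sample points $\{2k\pi/m\bmod \pi\}$ collapse to only $m/2$ distinct values.

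The paper's argument is substantially different and does not sample $g$ at a fixed grid. Instead (Lemma~\ref{beta_A:d2m3}) it orders the angles $0=\phi_1\leq\cdots\leq\phi_m\leq\pi$, writes the piecewise-smooth function $g(\theta)=\sum_i t_i^2|\sin(\theta-\phi_i)|$ on each interval $[\phi_k,\phi_{k+1}]$ as $g_k(\theta)=r_k\sin(\theta-\theta_k)$, and proves $r_k\leq\max_\theta g(\theta)$ for every $k$ via a monotonicity/continuation argument along adjacent intervals. This gives
\[
\sum_i t_i^2=\tfrac12\int_0^\pi g(\theta)\,\mathrm{d}\theta\leq\sum_{k=1}^m r_k\sin\!\Big(\tfrac{\phi_{k+1}-\phi_k}{2}\Big)\leq \max_\theta g(\theta)\cdot\sum_{k=1}^m\sin\!\Big(\tfrac{\phi_{k+1}-\phi_k}{2}\Big),
\]
and Jensen applied to the concave function $\sin$ on the gap lengths (which sum to $\pi$) yields the factor $m\sin(\pi/(2m))$. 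The crucial and nontrivial ingredient is the amplitude bound $r_k\leq\max_\theta g(\theta)$; this is what your equispaced-sampling idea cannot supply.
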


\begin{theorem}\label{cos_lemma}
Let $m\geq 3$ and let $\boldsymbol{E}_m$ be defined as in \eqref{equidistant}.
Then
\begin{equation}\label{betaA-R2}
\beta_{\boldsymbol{E}_m}=\begin{cases}
\frac{1}{\sqrt{1-\frac{2}{m\cdot \sin\frac{\pi}{m}}}}  & \text{if $m$ is even,}\\	
\frac{1}{\sqrt{1-\frac{1}{m\cdot \sin\frac{\pi}{2m}}}} & \text{if $m$ is odd.}
\end{cases}	
\end{equation}

\end{theorem}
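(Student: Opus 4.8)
The plan is to combine the two exact formulas available for Lipschitz constants in the real case. By Theorem \ref{th:LUbound} we have $U_{\boldsymbol{E}_m}=\|\boldsymbol{E}_m\|_2$, and by the Balan--Wang identity recalled in the related work, $L_{\boldsymbol{E}_m}=\Delta_{\boldsymbol{E}_m}=\min_{I\subseteq[m]}\sqrt{\lambda_{\min}(\boldsymbol{E}_{m,I}^\top\boldsymbol{E}_{m,I})+\lambda_{\min}(\boldsymbol{E}_{m,I^c}^\top\boldsymbol{E}_{m,I^c})}$, so it suffices to evaluate both sides. Using $\sum_{k=0}^{m-1}e^{2\pi\mathrm{i}k/m}=0$ one gets $\boldsymbol{E}_m^\top\boldsymbol{E}_m=\frac m2 I_2$, whence $U_{\boldsymbol{E}_m}=\sqrt{m/2}$ and $\boldsymbol{E}_m$ is a tight frame.

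Next I would compute the submatrix eigenvalues. Writing $\theta_k=k\pi/m$ and using $2\cos^2\theta=1+\cos2\theta$, for any $S\subseteq[m]$ one has $\boldsymbol{E}_{m,S}^\top\boldsymbol{E}_{m,S}=\frac{|S|}{2}I_2+\frac12\begin{pmatrix}c_S & s_S\\ s_S & -c_S\end{pmatrix}$, where $c_S+\mathrm{i}s_S=\sum_{k\in S}e^{2\pi\mathrm{i}k/m}=:v_S$. The traceless symmetric $2\times2$ block has eigenvalues $\pm|v_S|$, so $\lambda_{\min}(\boldsymbol{E}_{m,S}^\top\boldsymbol{E}_{m,S})=\tfrac12(|S|-|v_S|)$. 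Since $v_I+v_{I^c}=0$ (hence $|v_{I^c}|=|v_I|$) and $|I|+|I^c|=m$, the two minimal eigenvalues add up to $\frac m2-|v_I|$, regardless of the internal structure of $I$. Therefore $\Delta_{\boldsymbol{E}_m}^2=\frac m2-\max_{I\subseteq[m]}|v_I|$, and consequently $\beta_{\boldsymbol{E}_m}^2=\dfrac{m/2}{m/2-\max_I|v_I|}=\Bigl(1-\tfrac{2}{m}\max_I|v_I|\Bigr)^{-1}$.

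It then remains to compute $\max_{I\subseteq[m]}|v_I|$, the maximal modulus of a subset sum of the $m$-th roots of unity; this is the heart of the argument. For the upper bound, fix $I$, set $\psi=\arg v_I$, and note $|v_I|=\sum_{k\in I}\cos(2\pi k/m-\psi)\le\sum_{k=0}^{m-1}\bigl(\cos(2\pi k/m-\psi)\bigr)_+$, where $(\cdot)_+$ is the positive part. The indices $k$ with $\cos(2\pi k/m-\psi)>0$ are exactly the roots of unity in an open half-plane through the origin, hence form $j$ cyclically consecutive roots for some $0\le j\le m$; their sum is a geometric series of modulus $\sin(j\pi/m)/\sin(\pi/m)$. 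Thus $|v_I|\le\max_{1\le j\le m-1}\sin(j\pi/m)/\sin(\pi/m)$, and equality is achieved by taking $I$ to be the optimal arc, so $\max_I|v_I|=\max_{1\le j\le m-1}\sin(j\pi/m)/\sin(\pi/m)$. The integer $j$ nearest $m/2$ wins: for $m$ even, $j=m/2$ gives $1/\sin(\pi/m)$; for $m$ odd, $j=(m\pm1)/2$ gives $\cos(\pi/2m)/\sin(\pi/m)=1/(2\sin(\pi/2m))$. Plugging these into the formula for $\beta_{\boldsymbol{E}_m}^2$ yields \eqref{betaA-R2}; as a consistency check, the odd-$m$ value coincides with the lower bound \eqref{eqxu:6}.

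The main obstacle is the final step: ruling out, among all $2^m$ subsets, anything beating a contiguous arc of roots of unity. The half-plane/positive-part argument above is the key idea, but it requires a little care about roots lying exactly on the boundary line (they may be included or excluded without changing the value) and about confirming that an arc of length $j\approx m/2$ actually realizes the claimed modulus. Everything else — the eigenvalue computation of the $2\times2$ blocks and the trigonometric simplifications distinguishing the even and odd cases — is routine, and I expect the even/odd dichotomy in \eqref{betaA-R2} to arise precisely according to whether $m/2$ is an admissible (integer) arc length.
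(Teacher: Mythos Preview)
Your argument is correct, and it takes a genuinely different route from the paper's. The paper computes $L_{\boldsymbol{E}_m}$ via Lemma~\ref{upperbound-orth} (the orthogonal-pair characterisation of \cite{AAFG24}): parametrising orthogonal unit vectors $\boldsymbol{x},\boldsymbol{y}\in\mathbb{R}^2$ by an angle $\theta$, the ratio $\||\boldsymbol{E}_m\boldsymbol{x}|-|\boldsymbol{E}_m\boldsymbol{y}|\|_2^2/\mathrm{dist}^2(\boldsymbol{x},\boldsymbol{y})$ becomes $\tfrac m2-\tfrac12 G_m(\theta)$ with $G_m(\theta)=\sum_{j=0}^{m-1}|\sin(2j\pi/m+2\theta)|$, and a separate trigonometric lemma (Lemma~\ref{prop-calculation}, proved by Lagrange's identities) supplies $\max_\theta G_m(\theta)$. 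You instead invoke the Balan--Wang identity $L_{\boldsymbol{E}_m}=\Delta_{\boldsymbol{E}_m}$, compute the $2\times2$ blocks explicitly, and reduce to the discrete problem $\max_{I\subseteq[m]}|v_I|$ for $m$-th roots of unity, solved by the half-plane/arc argument. The two reductions are in fact equivalent: since $\sum_k\sin(2k\pi/m+2\theta)=0$ one has $\tfrac12\max_\theta G_m(\theta)=\max_\theta\sum_k(\sin(2k\pi/m+2\theta))_+=\max_I|v_I|$, so both routes end at the same optimisation, just parametrised continuously versus combinatorially. Your approach has the virtue of being self-contained (no appeal to \cite{AAFG24}) and reduces the even/odd split to the transparent question of whether $m/2$ is an integer arc length; the paper's approach avoids enumerating $2^m$ subsets and handles the boundary/tie issue automatically, at the cost of the auxiliary calculation in Lemma~\ref{prop-calculation}. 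One small point worth making explicit in your write-up: in the chain $|v_I|\le\sum_k(\cos(2\pi k/m-\psi))_+\le|v_J|$, the second inequality is $\mathrm{Re}(e^{-\mathrm{i}\psi}v_J)\le|v_J|$ rather than an equality, since $\psi=\arg v_I$ need not equal $\arg v_J$; the argument still goes through because arcs are admissible subsets and hence $|v_J|\le\max_I|v_I|$.
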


We next state a formal proof of Theorem \ref{th:opti}.
\begin{proof}[Proof of Theorem \ref{th:opti}]
The proof of Theorem \ref{th:opti} readily follows from the combination of Theorem \ref{lowerbound-2dim} and Theorem \ref{cos_lemma}. Specifically, Theorem \ref{lowerbound-2dim} establishes a  lower bound on $\beta_{\boldsymbol{A}}$ for any $\boldsymbol{A}\in \R^{m\times d}$. On the other hand, Theorem \ref{cos_lemma} provides the precise value of $\beta_{\boldsymbol{E}_m}$, which perfectly matches the lower bound stated in  Theorem \ref{lowerbound-2dim} when $d=2$ and $m\geq 3$ is an odd integer. 
 As a result, we achieve the intended outcome of Theorem \ref{th:opti}.
\end{proof}

\begin{remark}
 Theorem \ref{lowerbound-2dim} provides a  lower bound on $\beta_{\boldsymbol{A}}$ for all $\boldsymbol{A}\in \R^{m\times d}$, which slightly improves the constant lower bound $\sqrt{\frac{\pi}{\pi-2}}$ in \eqref{eq:mdlower}.
 Moreover, Theorem \ref{cos_lemma} shows that $$\lim_{m\rightarrow \infty}{\beta_{\boldsymbol{E}_m}}\,\,=\,\,\sqrt{\frac{\pi}{\pi-2}},$$ which indicates that the constant lower bound $\sqrt{\frac{\pi}{\pi-2}}$ in \eqref{eq:mdlower} is asymptotic optimal  as $m\to\infty$. 
\end{remark}

Inspired by Theorem \ref{th:opti}, we propose the following conjecture for any even integer $m \geq 4$.

\begin{conjecture}\label{conj1}
Assume that $m\geq 4$ is an even integer. Then, 
\begin{equation*}
\boldsymbol{E}_m\,\,\in\,\, \argmin{\boldsymbol{A}\in \R^{m\times 2}} \beta_{\boldsymbol{A}}.
\end{equation*}

\end{conjecture}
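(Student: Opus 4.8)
The plan is to mirror the proof of Theorem~\ref{th:opti}: combine the exact value of $\beta_{\boldsymbol{E}_m}$ with a matching universal lower bound over all of $\R^{m\times2}$. Since Theorem~\ref{cos_lemma} already gives $\beta_{\boldsymbol{E}_m}=\bigl(1-\tfrac{2}{m\sin(\pi/m)}\bigr)^{-1/2}$ for even $m$, the entire content of Conjecture~\ref{conj1} is the inequality $\beta_{\boldsymbol{A}}\ge\bigl(1-\tfrac{2}{m\sin(\pi/m)}\bigr)^{-1/2}$ for every $\boldsymbol{A}\in\R^{m\times2}$. The first step would be to turn this into a purely planar inequality by pushing the Balan--Wang formula $L_{\boldsymbol{A}}=\Delta_{\boldsymbol{A}}$ all the way, rather than only bounding $L_{\boldsymbol{A}}$ from above as in the proof of Theorem~\ref{lowerbound-ndim}. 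Write each row as $\boldsymbol{a}_i=t_i(\cos\phi_i,\sin\phi_i)^{T}$, set $\boldsymbol{v}_i:=t_i^{2}(\cos2\phi_i,\sin2\phi_i)^{T}\in\R^2$, and put $T:=\sum_i t_i^{2}=\sum_i\|\boldsymbol{v}_i\|_2$, $\boldsymbol{V}:=\sum_i\boldsymbol{v}_i$, $\boldsymbol{v}_I:=\sum_{i\in I}\boldsymbol{v}_i$. Since $\boldsymbol{a}_i\boldsymbol{a}_i^{*}=\tfrac{t_i^{2}}{2}\boldsymbol{I}+\tfrac12\!\left(\begin{smallmatrix}\cos2\phi_i&\sin2\phi_i\\\sin2\phi_i&-\cos2\phi_i\end{smallmatrix}\right)$ and a symmetric $2\times2$ matrix of the latter shape has eigenvalues $\pm\|\boldsymbol{v}_I\|_2$, one obtains $\lambda_{\max}(\boldsymbol{A}^{*}\boldsymbol{A})=\tfrac12(T+\|\boldsymbol{V}\|_2)$ (consistent with $U_{\boldsymbol{A}}=\|\boldsymbol{A}\|_2$ from Theorem~\ref{th:LUbound}) and $L_{\boldsymbol{A}}^{2}=\Delta_{\boldsymbol{A}}^{2}=\tfrac12\bigl(T-\max_{I\subseteq[m]}(\|\boldsymbol{v}_I\|_2+\|\boldsymbol{v}_{I^c}\|_2)\bigr)$. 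Hence
\[
\beta_{\boldsymbol{A}}^{2}=\frac{T+\|\boldsymbol{V}\|_2}{\,T-\max_{I\subseteq[m]}\bigl(\|\boldsymbol{v}_I\|_2+\|\boldsymbol{v}_{I^c}\|_2\bigr)\,},
\]
so Conjecture~\ref{conj1} is equivalent to the scale-invariant planar inequality
\[
\|\boldsymbol{V}\|_2+\max_{I\subseteq[m]}\bigl(\|\boldsymbol{v}_I\|_2+\|\boldsymbol{v}_{I^c}\|_2\bigr)\ \ge\ \frac{2}{m\sin(\pi/m)}\,\bigl(T+\|\boldsymbol{V}\|_2\bigr)\qquad(\star)
\]
for arbitrary $\boldsymbol{v}_1,\dots,\boldsymbol{v}_m\in\R^2$ (the case $\Delta_{\boldsymbol{A}}=0$ being trivial since $m\sin(\pi/m)>2$).

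The plan for $(\star)$ would be geometric. After normalizing $T=1$, the harmonic configuration $\boldsymbol{v}_i=\tfrac1m(\cos\tfrac{2\pi i}{m},\sin\tfrac{2\pi i}{m})^{T}$ has $\boldsymbol{V}=\boldsymbol{0}$, consists of $m/2$ antipodal pairs of equally spaced unit directions, and makes both sides of $(\star)$ equal to $\tfrac{2}{m\sin(\pi/m)}$. The three stages would be: (i) reduce to $\boldsymbol{V}=\boldsymbol{0}$ by an averaging/symmetrization argument, using that $\max_I(\|\boldsymbol{v}_I\|_2+\|\boldsymbol{v}_{I^c}\|_2)\ge\|\boldsymbol{V}\|_2$ always; (ii) reduce to equal weights $\|\boldsymbol{v}_i\|_2=1/m$ via a convexity/exchange argument; (iii) for equal weights, bound $\max_I(\|\boldsymbol{v}_I\|_2+\|\boldsymbol{v}_{I^c}\|_2)$ from below by restricting to half-plane cuts $I=\{i:\langle\boldsymbol{v}_i,\boldsymbol{n}\rangle>0\}$ and averaging over $\boldsymbol{n}\in\mathbb{S}^1$, which reduces $(\star)$ to a one-dimensional rearrangement estimate of the same flavour as the computation \eqref{eqxu:51} in the odd case, namely that equally spaced directions minimize $\max_{\boldsymbol{n}}\sum_i\|\boldsymbol{v}_i\|_2\,\bigl|\langle \boldsymbol{v}_i/\|\boldsymbol{v}_i\|_2,\boldsymbol{n}\rangle\bigr|$.

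The main obstacle is that $(\star)$ is extremely tight, and stages (i)--(ii) cannot be carried out unconditionally. A short computation shows $\beta_{\boldsymbol{E}_{m-1}}<\beta_{\boldsymbol{E}_m}$ for even $m$ (both are of the form $(1-1/(k\sin(\pi/2k)))^{-1/2}$ with $k=m-1$ versus $k=m/2$, and $k\sin(\pi/2k)$ is increasing); appending a zero row to the odd harmonic frame $\boldsymbol{E}_{m-1}$ — which is optimal in $\R^{(m-1)\times2}$ by Theorem~\ref{th:opti} — leaves both $\|\boldsymbol{A}\|_2$ and $\Delta_{\boldsymbol{A}}$ unchanged, hence produces a matrix in $\R^{m\times2}$ with condition number $\beta_{\boldsymbol{E}_{m-1}}<\beta_{\boldsymbol{E}_m}$, i.e.\ a configuration with one $\boldsymbol{v}_i=\boldsymbol{0}$ that violates $(\star)$. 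Thus the plan must be run inside a restricted class, most naturally the unit-norm matrices ($\|\boldsymbol{a}_i\|_2=1$ for all $i$, equivalently $\|\boldsymbol{v}_i\|_2=1$ and $T=m$), and Conjecture~\ref{conj1} should presumably be stated for that class, where I expect $\boldsymbol{E}_m$ to remain extremal. Even then, stage (ii)/(iii) is the crux: one needs an exchange/perturbation argument showing $\boldsymbol{E}_m$ is the unique local minimizer (up to symmetry) of $\beta_{\boldsymbol{A}}^2=\frac{m+\|\boldsymbol{V}\|_2}{\,m-\max_I(\|\boldsymbol{v}_I\|_2+\|\boldsymbol{v}_{I^c}\|_2)\,}$ over unit-norm configurations, together with the half-plane-cut estimate of stage (iii) to exclude competitors far from $\boldsymbol{E}_m$. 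I expect identifying the extremal configuration to be the hard part, as it was for Theorem~\ref{lowerbound-2dim}, and harder here because for even $m$ the extremum is the less rigid, antipodally paired configuration.
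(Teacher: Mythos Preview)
The paper offers no proof of this statement: it is explicitly labeled a \emph{conjecture} and left open. So there is nothing to compare your attempt against. That said, your proposal contains something more valuable than a proof sketch: a counterexample showing the conjecture is \emph{false as stated}.

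Your reduction via the Balan--Wang formula is correct. With $\boldsymbol{v}_i=t_i^{2}(\cos2\phi_i,\sin2\phi_i)^{T}$, $T=\sum_i t_i^{2}$, $\boldsymbol{V}=\sum_i\boldsymbol{v}_i$ one indeed gets
\[
\beta_{\boldsymbol{A}}^{2}=\frac{T+\|\boldsymbol{V}\|_2}{T-\max_{I}(\|\boldsymbol{v}_I\|_2+\|\boldsymbol{v}_{I^c}\|_2)},
\]
and the equivalence with $(\star)$ is right. Your key observation is also correct: writing $f(k)=k\sin(\pi/(2k))$, the formulas in Theorem~\ref{cos_lemma} read $\beta_{\boldsymbol{E}_{m-1}}=(1-1/f(m-1))^{-1/2}$ and $\beta_{\boldsymbol{E}_m}=(1-1/f(m/2))^{-1/2}$ for even $m$; since $f$ is strictly increasing and $m-1>m/2$ for $m\ge4$, one has $\beta_{\boldsymbol{E}_{m-1}}<\beta_{\boldsymbol{E}_m}$. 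Padding $\boldsymbol{E}_{m-1}$ by a zero row changes neither $\|\boldsymbol{A}\|_2$ nor $\||\boldsymbol{A}\boldsymbol{x}|-|\boldsymbol{A}\boldsymbol{y}|\|_2$, hence neither $U_{\boldsymbol{A}}$ nor $L_{\boldsymbol{A}}$, and the resulting matrix in $\R^{m\times2}$ has condition number $\beta_{\boldsymbol{E}_{m-1}}<\beta_{\boldsymbol{E}_m}$. Concretely, for $m=4$ one gets $\sqrt{3}\approx1.732<\sqrt{2+\sqrt{2}}\approx1.848$. Since $\beta_{\boldsymbol{A}}$ is continuous in $\boldsymbol{A}$ on the phase-retrievable set, replacing the zero row by an arbitrarily small nonzero row still beats $\boldsymbol{E}_m$, so restricting to ``all rows nonzero'' does not rescue the conjecture either.

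Your proposed repair---restricting to unit-norm rows---is a natural candidate, and your outline (reduce to $\boldsymbol{V}=\boldsymbol{0}$, then show equispaced directions minimize the half-plane-cut functional) is a reasonable plan of attack. But as you yourself note, none of stages (i)--(iii) is actually carried out, and the even case is genuinely more delicate because the extremal configuration is antipodally paired and less rigid than in the odd case. So what you have is a disproof of the conjecture as written, plus a plausible reformulation that remains open.
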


\subsection{Proof of Theorem \ref{lowerbound-2dim}}

In this subsection we prove Theorem \ref{lowerbound-2dim}. We first prove the following lemma, which slightly improves the lower bound \eqref{eqxu12:1} for the case $\mathbb{H}=\mathbb{R}$.

\begin{lem}\label{beta_A:d2m3}
For any real numbers $\phi_1, \phi_2,\ldots,\phi_{m+1}$ satisfying $0=\phi_1\leq\phi_2\leq\cdots\leq \phi_m\leq \phi_{m+1}= \pi$, and for all $t_1,\ldots,t_m\in\mathbb{R}$, we have 
\begin{equation}\label{eqxu3:00}
\max_{\theta\in[0,\pi]}\sum_{i=1}^{m}t_i^2|\sin(\theta-\phi_i)|\geq \frac{1}{m\cdot \sin\frac{\pi}{2m}}\sum_{i=1}^{m}t_i^2.
\end{equation}
\end{lem}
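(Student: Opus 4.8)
\textbf{Proof proposal for Lemma \ref{beta_A:d2m3}.}

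The plan is to lower bound the maximum by an average over a well-chosen finite set of sample points, exploiting periodicity just as in the proof of Theorem \ref{lowerbound-ndim}, but now averaging over a discrete uniform grid rather than over the full interval $[0,\pi]$. Since $g(\theta):=\sum_{i=1}^m t_i^2|\sin(\theta-\phi_i)|$ is nonnegative, for any finite sample set $S\subset[0,\pi]$ one has $\max_{\theta\in[0,\pi]} g(\theta)\geq \frac{1}{|S|}\sum_{\theta\in S} g(\theta)$. The natural candidate is the equally spaced grid $S=\{\theta_k: k=0,1,\ldots,m-1\}$ with $\theta_k=\frac{k\pi}{m}+\delta$ for a shift $\delta$ to be optimized (or averaged over). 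Swapping the order of summation reduces everything to understanding, for a fixed $\phi_i$, the quantity $\frac{1}{m}\sum_{k=0}^{m-1}|\sin(\theta_k-\phi_i)|$, i.e. the average of $|\sin|$ over an arithmetic progression of step $\pi/m$.

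The key computation is therefore: for any $\psi\in\mathbb{R}$, evaluate (or sharply bound below) $\frac{1}{m}\sum_{k=0}^{m-1}\bigl|\sin\bigl(\psi+\tfrac{k\pi}{m}\bigr)\bigr|$. I expect this to equal $\frac{1}{m}\cot\bigl(\frac{\pi}{2m}\bigr)\cdot$ (something) at the symmetric choice of $\psi$ and to be minimized in $\psi$ at a value giving exactly $\frac{1}{m\sin(\pi/2m)}$; the cleanest route is to note that $\sum_{k=0}^{m-1}|\sin(\psi+k\pi/m)|$ is a continuous, $\frac{\pi}{m}$-periodic function of $\psi$, compute its minimum over one period by elementary calculus (the minimum occurs where the term "wrapping around" changes sign, i.e. when one of the sample points sits at $0$ or $\pi$), and use the closed form $\sum_{k=0}^{m-1}\sin\bigl(\frac{k\pi}{m}+\frac{\pi}{2m}\bigr)=\frac{1}{\sin(\pi/2m)}$, which follows from summing a geometric series. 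Concretely, if the grid is placed so that the points of $[0,\pi]$ hit $\phi_i$ "symmetrically," the $m$ values $|\sin(\theta_k-\phi_i)|$ are a cyclic shift of $\{\sin(\frac{j\pi}{m}+\frac{\pi}{2m}): j=0,\ldots,m-1\}$ up to the one or two terms near the endpoints, and the worst case gives precisely the constant $\frac{1}{m\sin(\pi/2m)}$.

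Once the one-variable estimate $\min_\psi \frac{1}{m}\sum_{k=0}^{m-1}|\sin(\psi+k\pi/m)| = \frac{1}{m\sin(\pi/2m)}$ is established, the lemma follows immediately: choosing the shift $\delta$ appropriately (or averaging over $\delta\in[0,\pi/m)$ and pulling the minimum outside),
\begin{equation*}
\max_{\theta\in[0,\pi]} g(\theta)\geq \frac{1}{m}\sum_{k=0}^{m-1} g(\theta_k) = \sum_{i=1}^m t_i^2\cdot \frac{1}{m}\sum_{k=0}^{m-1}|\sin(\theta_k-\phi_i)| \geq \frac{1}{m\sin(\pi/2m)}\sum_{i=1}^m t_i^2.
\end{equation*}
The ordering hypothesis $0=\phi_1\leq\cdots\leq\phi_{m+1}=\pi$ is presumably used to make the "symmetric" placement of the grid legitimate — i.e. to guarantee a single choice of $\delta$ works simultaneously for all $i$, perhaps by taking $\delta$ related to the gaps $\phi_{i+1}-\phi_i$. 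The main obstacle I anticipate is precisely this uniform choice of shift: the per-$i$ average $\frac{1}{m}\sum_k|\sin(\theta_k-\phi_i)|$ depends on $\delta$ through where $\phi_i$ falls relative to the grid, and one must verify that some single $\delta$ (or an averaging argument over $\delta$) yields the bound $\frac{1}{m\sin(\pi/2m)}$ for every $i$ at once — this is where the discreteness, as opposed to the continuous average of Theorem \ref{lowerbound-ndim}, genuinely bites and the endpoint constraint on the $\phi_i$ must be invoked.
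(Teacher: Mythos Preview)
Your central computational claim is false, and the approach does not reach the sharp constant. You assert that
\[
\min_{\psi}\ \frac{1}{m}\sum_{k=0}^{m-1}\bigl|\sin(\psi+k\pi/m)\bigr|\ =\ \frac{1}{m\sin(\pi/2m)},
\]
but the right-hand side is in fact the \emph{maximum} over $\psi$, attained at $\psi=\pi/(2m)$; the minimum (attained at $\psi=0$) equals $\frac{1}{m}\cot\!\bigl(\tfrac{\pi}{2m}\bigr)$. This follows from your own observation that on a period all terms are nonnegative and the arithmetic-progression sum is $\frac{1}{\sin(\pi/2m)}\sin\!\bigl(\psi+\tfrac{(m-1)\pi}{2m}\bigr)$. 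Consequently, for a fixed shift $\delta$ the per-$i$ average $\frac{1}{m}\sum_k|\sin(\theta_k-\phi_i)|$ depends on $\phi_i\bmod \pi/m$ and ranges over $\bigl[\tfrac{1}{m}\cot\tfrac{\pi}{2m},\ \tfrac{1}{m\sin(\pi/2m)}\bigr]$; there is no single $\delta$ that makes every $i$ hit the top of this range unless all $\phi_i$ are congruent modulo $\pi/m$, which is not assumed. Averaging over $\delta$ does not help either: it collapses to the continuous average $\frac{2}{\pi}$, strictly smaller than $\frac{1}{m\sin(\pi/2m)}$. So the grid argument, as written, yields at best $\frac{1}{m}\cot\tfrac{\pi}{2m}\sum_i t_i^2$, which is too weak.

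The paper's proof proceeds along a different line that genuinely exploits the ordering hypothesis. On each subinterval $[\phi_k,\phi_{k+1}]$ the function $g$ agrees with a single sinusoid $g_k(\theta)=r_k\sin(\theta-\theta_k)$, so $\int_{\phi_k}^{\phi_{k+1}} g\le 2r_k\sin\!\bigl(\tfrac{\phi_{k+1}-\phi_k}{2}\bigr)$. The key step (and the only nontrivial one) is to show that each amplitude $r_k$ is bounded by $\max_{[0,\pi]} g$; this is done by a monotonicity argument tracking how $r_k$ changes across adjacent intervals until the global maximum is reached. Summing over $k$, using $\frac{1}{2}\int_0^\pi g=\sum_i t_i^2$ and Jensen's inequality $\sum_{k=1}^m \sin\!\bigl(\tfrac{\phi_{k+1}-\phi_k}{2}\bigr)\le m\sin\!\bigl(\tfrac{\pi}{2m}\bigr)$, gives exactly the stated bound. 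The ordering of the $\phi_i$ is what allows the partition of $[0,\pi]$ into these $m$ sinusoidal pieces; this structural information is precisely what your discrete-grid averaging discards.
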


\begin{proof}
For convenience, we set $g(\theta):=	\sum_{i=1}^{m}t_i^2|\sin(\theta-\phi_i)|$, and  we define
\begin{equation}\label{rk3}
g_k(\theta):=\sum_{i=1}^{k}t_i^2\sin(\theta-\phi_i) -\sum_{i=k+1}^{m}t_i^2\sin(\theta-\phi_i)=r_k\sin(\theta-\theta_k)
\end{equation}
for each $1\leq k\leq m$, where $\theta_k\in[0,\pi]$ and $r_k\geq 0$ satisfy that
\begin{equation*}
\cos\theta_k= \frac{1}{r_k}\cdot \Big(\sum_{i=1}^{k}t_i^2\cos\phi_i-\sum_{i= k+1}^{m}t_i^2\cos\phi_i\Big),\quad \sin\theta_k= \frac{1}{r_k}\cdot \Big(\sum_{i=1}^{k}t_i^2\sin\phi_i-\sum_{i= k+1}^{m}t_i^2\sin\phi_i\Big)
\end{equation*}
and
\begin{equation*}
r_k^2=	\Big(\sum_{i=1}^{k}t_i^2\cos\phi_i -\sum_{i= k+1}^{m}t_i^2\cos\phi_i\Big)^2+\Big(\sum_{i= 1}^{k}t_i^2\sin\phi_i -\sum_{i= k+1}^{m}t_i^2\sin\phi_i\Big)^2.
\end{equation*}
For each $k=1,\ldots,m$, note that $g(\theta)=g_k(\theta)$ when $\theta\in [\phi_k,\phi_{k+1}]$, so we have 
\begin{equation*}
\begin{aligned}
\int_{\phi_k}^{\phi_{k+1}}	g(\theta)\text{d}\theta&=\int_{\phi_k}^{\phi_{k+1}}	r_k\sin(\theta-\theta_k)\text{d}\theta=r_k\cdot \Big(\cos(\phi_k-\theta_k)-\cos(\phi_{k+1}-\theta_k)   \Big)\\
&=2\cdot r_k\cdot \sin(\frac{\phi_k+\phi_{k+1}}{2}-\theta_k)\sin(\frac{\phi_{k+1}-\phi_k}{2})\leq 2\cdot r_k\cdot \sin(\frac{\phi_{k+1}-\phi_k}{2}).
\end{aligned}
\end{equation*}
We assert that, for each $k=1,\ldots,m$, the following inequality holds:
\begin{equation}\label{rk}
r_k\leq 	\max_{\theta\in[0,\pi]}g(\theta).
\end{equation}
Then we have
\begin{equation*}
\begin{aligned}
&\quad \sum_{i=1}^mt_i^2\overset{(a)}=\frac{1}{2}\int_{0}^{\pi}	g(\theta)\text{d}\theta=\frac{1}{2}\sum_{k=1}^{m}\int_{\phi_k}^{\phi_{k+1}}	g(\theta)\text{d}\theta\leq \sum_{k=1}^{m} r_k\cdot \sin(\frac{\phi_{k+1}-\phi_k}{2})\\
&\leq \max_{\theta\in[0,\pi]}g(\theta) \cdot   \sum_{k=1}^{m} \sin(\frac{\phi_{k+1}-\phi_k}{2})\overset{(b)}\leq 	\max_{\theta\in[0,\pi]}g(\theta) \cdot m\cdot   \sin\Big(\sum_{k=1}^{m} \frac{\phi_{k+1}-\phi_k}{2m}\Big)=m\sin(\frac{\pi}{2m})\cdot 	\max_{\theta\in[0,\pi]}g(\theta),
\end{aligned}
\end{equation*}
which  implies the desired result in \eqref{eqxu3:00}. Here, equation ($a$) follows from \eqref{eqxu:51}, and  inequality ($b$) follows from Jensen's inequality and the fact that $\sin x$ is a concave function on $[0,\pi]$. 

It remains to prove \eqref{rk}. Let $k\in\{1,\ldots,m\}$ be fixed. 
Note that $g(\theta) =g_k(\theta)= r_k\sin(\theta-\theta_k) \neq 0$ when $\theta \in (\phi_k, \phi_{k+1})$. Hence, two cases can be observed: either $g(\theta)$ is monotonic on the interval $[\phi_k, \phi_{k+1}]$, or there exists a $\theta_* \in [\phi_k, \phi_{k+1}]$ such that $g(\theta)$ is monotonically increasing on the subinterval $[\phi_k, \theta_*]$ and monotonically decreasing on the subinterval $[\theta_*, \phi_{k+1}]$.
If such $\theta_*$ exists, then we have
\begin{equation*}
r_k = g_k(\theta_*) = \max_{\theta\in[\phi_k,\phi_{k+1}]} g_k(\theta)= \max_{\theta\in[\phi_k,\phi_{k+1}]} g(\theta) \leq \max_{\theta\in[0,\pi]} g(\theta),
\end{equation*}
which directly leads to the desired result. 
Next, let us consider the case when $g(\theta)$ is monotonic on the interval $[\phi_k,\phi_{k+1}]$. We will divide the proof into two separate cases.

\textbf{ Case 1: $g(\theta)$ is monotonically increasing on $[\phi_k,\phi_{k+1}]$.}
For each $j=2,3,\ldots,k$, set $\phi_{j+m}=\pi+\phi_j$, $r_{j+m}=r_j$, $\theta_{j+m}=\theta_j$  and $g_{j+m}(\theta)=g_j(\theta)$, where $r_j$, $\theta_j$ and $g_j(\theta)$ are defined in \eqref{rk3}. Then we have $g(\theta)=g_j(\theta)$ for each $\theta\in[\phi_{j},\phi_{j+1}]$, $j=k,\ldots,k+m-1$.
Note that $\phi_{k+m}=\pi+\phi_{k}$, and that $g(\theta)$ is a periodic function on $\mathbb{R}$ with period $\pi$, it is enough to prove that 
\begin{equation}\label{rk2024}
r_k\leq 	\max_{\theta\in[\phi_k,\phi_{k+m}]}g(\theta).
\end{equation}

Since $g(\theta)$ is continuous and has period $\pi$, $g(\theta)$ can not be monotonically increasing on $[\phi_k,\phi_{k+m}]$. Let $l\in\{k+1,\ldots,m+k-1\}$ be the smallest integer $K$ such that $g(\theta)$ is not monotonically increasing on $[\phi_{K},\phi_{K+1}]$. We will prove \eqref{rk2024} by showing that
\begin{equation}\label{rk20242-10}
r_k\leq r_l\leq \max_{\theta\in[\phi_k,\phi_{k+m}]}g(\theta).
\end{equation}

For each $j=k,k+1,\ldots,l-1$, since $g(\theta)=g_j(\theta)$ is monotonically increasing on $[\phi_{j},\phi_{j+1}]$, we have $g_j'(\phi_{j+1})\geq0$. 
By a direct calculation, for each $j\in\{k,k+1,\ldots,l-1\}$ we have
\begin{equation*}
r_{j+1}^2-r_j^2=	4t_{j+1}^2\cdot g_j'(\phi_{j+1})+4t_{j+1}^4\geq 0\quad\text{and}\quad g_{j+1}'(\phi_{j+1})=g_{j}'(\phi_{j+1})+2t_{j+1}^2\geq0.
\end{equation*}
In particular, we have $g_{l}'(\phi_{l})\geq 0$ and
\begin{equation}\label{rk1}
r_k\leq r_{k+1}\leq \cdots\leq r_{l-1}\leq r_{l}.
\end{equation}
Recall that $g_{l}(\theta)=r_{l}\sin(\theta-\theta_{l})$ is not monotonically increasing on $[\phi_{l},\phi_{l+1}]$, so the non-negativity of $g_{l}'(\phi_{l})$ guarantees the existence of a $\phi_*\in[\phi_{l},\phi_{l+1}]$, such that $g_{l}(\theta)$ is monotonically increasing on $[\phi_{l},\phi_*]$ and monotonically decreasing on $[\phi_*,\phi_{l+1}]$. This means that
\begin{equation}\label{rk2}
r_{l}=g_{l}(\phi_*)=\max_{\theta\in[\phi_{l},\phi_{l+1}]}g_{l}(\theta)=\max_{\theta\in[\phi_{l},\phi_{l+1}]}g(\theta)\leq\max_{\theta\in[\phi_{k},\phi_{k+m}]}g(\theta).	
\end{equation}
Combining  \eqref{rk1} with \eqref{rk2}, we arrive at \eqref{rk20242-10}.

\textbf{ Case 2: $g(\theta)$ is monotonically decreasing on $[\phi_k,\phi_{k+1}]$.} The analysis is similar with Case 1. 
For each $j=1,2,\ldots,m-k$, set $\phi_{-j}=-\pi+\phi_{m+1-j}$, $r_{-j}=r_{m+1-j}$, $\theta_{-j}=\theta_{m+1-j}$  and $g_{-j}(\theta)=g_{m+1-j}(\theta)$. 
Note that $\phi_{-(m-k)}=-\pi+\phi_{k+1}$.
The continuity and periodicity of $g(\theta)$ guarantee that $g(\theta)$ is not monotonically decreasing on $[-\pi+\phi_{k+1},\phi_{k+1}]$. Let $l\in\{-(m-k),\ldots,k-1\}$ be the largest integer $K$ such that $g(\theta)$ is not monotonically decreasing on $[\phi_{K},\phi_{K+1}]$. Using a similar analysis as in Case 1, it can be shown that
\begin{equation*}
r_k\leq r_{k-1}\leq\cdots\leq r_{l+1}\leq  r_l=\max_{\theta\in[\phi_{l},\phi_{l+1}]}g_{l}(\theta)=\max_{\theta\in[\phi_{l},\phi_{l+1}]}g(\theta)\leq \max_{\theta\in[-\pi+\phi_{k+1},\phi_{k+1}]}g(\theta)=\max_{\theta\in[0,\pi]}g(\theta).
\end{equation*}
This completes the proof of \eqref{rk}.

\end{proof}

Now we can present a proof of Theorem \ref{lowerbound-2dim}.

\begin{proof}[Proof of Theorem \ref{lowerbound-2dim}]
According to the analysis of Theorem \ref{lowerbound-ndim}, it is enough to prove the theorem  for $d=2$.
Note that $\beta_{\boldsymbol{A}}$ is invariant if we switch any $\boldsymbol{a}_i$ to $-\boldsymbol{a}_i$. Therefore, without loss of generality, we assume that $\boldsymbol{a}_i=(t_i\cos\phi_i,t_i\sin\phi_i)^T$, $i=1,\ldots,m$, where $t_i=\Vert\boldsymbol{a}_i\Vert_2$ and $0= \phi_1\leq \phi_2\leq \cdots\leq \phi_m\leq \frac{\pi}{2}$. Recall that in Theorem \ref{lowerbound-ndim} we obtain the following lower bound on $\beta_{\boldsymbol{A}}$ (see equation \eqref{rk7}):
\begin{equation}\label{eq2024-1}
\beta_{\boldsymbol{A}}\geq 	\sqrt{\frac{\sum_{i=1}^{m}t_i^2}{2\cdot M_{\boldsymbol{A}}}},	
\end{equation}
where
\begin{equation*}
M_{\boldsymbol{A}}=\min_{(\boldsymbol{x},\boldsymbol{y})\in \mathcal{X}^{\mathbb{R}}}\frac{\||\boldsymbol{A}\boldsymbol{x}|-|\boldsymbol{A}\boldsymbol{y}|\|_2^2}{\mathrm{dist}^2(\boldsymbol{x},\boldsymbol{y})}\overset{(a)}=\frac{1}{2}\sum_{i=1}^{m}t_i^2-\max_{\theta\in [0,\pi]}\frac{1}{2}\sum_{i=1}^m t_i^2 |\sin(2\theta-2\phi_i)|.
\end{equation*}
Here, $\mathcal{X}^{\mathbb{R}}$ is defined in \eqref{xH2024}, and equality  ($a$) follows from \eqref{eqxu12:2}.
By Lemma \ref{beta_A:d2m3}, we have
\begin{equation}\label{ea2024}
M_{\boldsymbol{A}}\leq \frac{1}{2}\sum_{i=1}^{m}t_i^2-\frac{1}{2m\cdot \sin\frac{\pi}{2m}}\sum_{i=1}^{m}t_i^2 .
\end{equation}
Substituting \eqref{ea2024} into \eqref{eq2024-1}, we obtain 
\begin{equation*}
\beta_{\boldsymbol{A}}\,\,\geq\,\,  \frac{1}{\sqrt{1-\frac{1}{m\cdot \sin(\frac{\pi}{2m})}}},	
\end{equation*}
as desired.

\end{proof}

\subsection{Proof of Theorem \ref{cos_lemma}}

In this subsection, we provide a proof for Theorem \ref{cos_lemma}. To support our argument, we begin by introducing two lemmas that prove to be useful.

\begin{lem}\cite[Theorem 1.1]{AAFG24}\label{upperbound-orth}
Let $\boldsymbol{A}=(\boldsymbol{a}_1,\ldots,\boldsymbol{a}_m)^*\in\mathbb{H}^{m\times d}$, where $\mathbb{H}=\mathbb{R}$ or $\mathbb{C}$. 
Then we have
\begin{equation*}
L_{\boldsymbol{A}}=\min_{\substack{\boldsymbol{x}\in \mathbb{H}^{d},\boldsymbol{y}\in \mathbb{H}^{d} \\ \|\boldsymbol{x}\|_2=1, \|\boldsymbol{y}\|_2\leq 1,\langle \boldsymbol{x},\boldsymbol{y}\rangle=0  }}
\frac{\||\boldsymbol{A}\boldsymbol{x}|-|\boldsymbol{A}\boldsymbol{y}|\|_2}
{\mathrm{dist}(\boldsymbol{x},\boldsymbol{y})}.
\end{equation*} 	

\end{lem}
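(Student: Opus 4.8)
\textbf{Proof plan for Theorem~\ref{cos_lemma}.}
The strategy is to exploit the formula $\beta_{\boldsymbol{E}_m}=\|\boldsymbol{E}_m\|_2/L_{\boldsymbol{E}_m}$ (Theorem~\ref{th:LUbound}, with $U_{\boldsymbol{E}_m}=\|\boldsymbol{E}_m\|_2$), compute the numerator exactly, and then compute the denominator via the orthogonal-pair characterization of Lemma~\ref{upperbound-orth}. For the numerator: since the rows of $\boldsymbol{E}_m$ are $(\cos\frac{k\pi}{m},\sin\frac{k\pi}{m})$ for $k=0,\dots,m-1$, a direct computation gives $\boldsymbol{E}_m^T\boldsymbol{E}_m=\sum_{k=0}^{m-1}\begin{pmatrix}\cos^2\frac{k\pi}{m}&\cos\frac{k\pi}{m}\sin\frac{k\pi}{m}\\ \cos\frac{k\pi}{m}\sin\frac{k\pi}{m}&\sin^2\frac{k\pi}{m}\end{pmatrix}$. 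Using $\sum_{k=0}^{m-1}\cos\frac{2k\pi}{m}=0$ and $\sum_{k=0}^{m-1}\sin\frac{2k\pi}{m}=0$, the off-diagonal entry vanishes and both diagonal entries equal $m/2$; hence $\boldsymbol{E}_m^T\boldsymbol{E}_m=\frac{m}{2}\boldsymbol{I}_2$, so $\|\boldsymbol{E}_m\|_2=\sqrt{m/2}$. (This is the standard tight-frame property of the harmonic frame.)

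For the denominator, I would apply Lemma~\ref{upperbound-orth}: $L_{\boldsymbol{E}_m}=\min\{\||\boldsymbol{E}_m\boldsymbol{x}|-|\boldsymbol{E}_m\boldsymbol{y}|\|_2/\mathrm{dist}(\boldsymbol{x},\boldsymbol{y}): \|\boldsymbol{x}\|_2=1,\|\boldsymbol{y}\|_2\le1,\langle\boldsymbol{x},\boldsymbol{y}\rangle=0\}$. In $\R^2$, an orthonormal pair is forced up to sign: after a rotation I can write $\boldsymbol{x}=(\cos\theta,\sin\theta)^T$ and the admissible $\boldsymbol{y}$ must be a scalar multiple (of modulus $\le 1$) of $(-\sin\theta,\cos\theta)^T$. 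Parametrizing and following the same reduction already carried out in the proof of Theorem~\ref{lowerbound-ndim} (equations \eqref{xH2024}--\eqref{eqxu12:2}), one finds that the minimizing $\boldsymbol{y}$ is the full unit vector, and that
\[
L_{\boldsymbol{E}_m}^2=\min_{\theta}\Big(\tfrac12\sum_{k=0}^{m-1}t_k^2-\tfrac12\sum_{k=0}^{m-1}t_k^2|\sin(2\theta-2\phi_k)|\Big)=\tfrac{m}{2}-\tfrac12\max_{\theta}\sum_{k=0}^{m-1}|\sin(2\theta-2\tfrac{k\pi}{m})|,
\]
where $t_k=1$ and $\phi_k=\frac{k\pi}{m}$. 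So everything reduces to evaluating $S_m:=\max_{\theta\in[0,\pi]}\sum_{k=0}^{m-1}\big|\sin\!\big(\theta-\tfrac{2k\pi}{m}\big)\big|$ (after the harmless substitution $2\theta\mapsto\theta$, noting the points $\frac{2k\pi}{m}$ run over the same set modulo $\pi$ as $k$ varies, with appropriate care for parity).

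The crux is this trigonometric extremal computation, and the parity split in the statement comes precisely from here. For $m$ odd, the $m$ phases $\frac{2k\pi}{m}\bmod\pi$ are $m$ equally spaced points on $[0,\pi)$ with spacing $\frac{\pi}{m}$; for $m$ even, they collapse to $m/2$ equally spaced points each with multiplicity $2$, spacing $\frac{2\pi}{m}$. I would use the closed form $\sum_{k}|\sin(\theta-\psi_k)|$ for equally spaced $\psi_k$: on each sub-interval between consecutive phases the sum is a single sinusoid $r\sin(\theta-c)$ with $r$ computable from a geometric-series sum of unit vectors, and the maximum is attained by symmetry at a midpoint between two adjacent phases. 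Carrying this out, for $n$ equally spaced phases on a half-period one gets $\max_\theta\sum_{k=0}^{n-1}|\sin(\theta-\tfrac{k\pi}{n})|=\frac{1}{\sin\frac{\pi}{2n}}$ (a cotangent-type identity, provable by summing $\sum_{j}\sin\frac{(2j+1)\pi}{2n}=\frac{1}{2\sin\frac{\pi}{2n}}$ or equivalently evaluating $\cot\frac{\pi}{2n}$-style sums). Applying this with $n=m$ (odd case) gives $S_m=\frac{1}{\sin\frac{\pi}{2m}}$, so $L_{\boldsymbol{E}_m}^2=\frac{m}{2}-\frac{1}{2\sin\frac{\pi}{2m}}$ and $\beta_{\boldsymbol{E}_m}^2=\frac{m/2}{m/2-\frac{1}{2\sin(\pi/2m)}}=\frac{1}{1-\frac{1}{m\sin(\pi/2m)}}$. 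Applying it with $n=m/2$ and multiplicity $2$ (even case) gives $S_m=\frac{2}{\sin\frac{\pi}{m}}$, yielding $\beta_{\boldsymbol{E}_m}^2=\frac{1}{1-\frac{2}{m\sin(\pi/m)}}$, matching \eqref{betaA-R2}. The main obstacle is getting the extremal trigonometric sum $S_m$ right — in particular verifying that the maximum of the piecewise-sinusoidal function $\sum_k|\sin(\theta-\psi_k)|$ is attained at a midpoint between adjacent phases (this needs a short convexity/symmetry argument, or the monotonicity bookkeeping already developed in Lemma~\ref{beta_A:d2m3}) and handling the parity-dependent geometry of the phase set carefully.
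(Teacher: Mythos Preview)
Your proposal does not address Lemma~\ref{upperbound-orth}; it is a proof plan for Theorem~\ref{cos_lemma}, which \emph{invokes} Lemma~\ref{upperbound-orth} as a black box. Lemma~\ref{upperbound-orth} is quoted from \cite{AAFG24} and is not proved in the present paper, so there is no proof here to compare against.

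If the intended target was Theorem~\ref{cos_lemma}, then your approach coincides with the paper's. Both compute $\|\boldsymbol{E}_m\|_2=\sqrt{m/2}$ from $\boldsymbol{E}_m^T\boldsymbol{E}_m=\tfrac{m}{2}\boldsymbol{I}_2$, apply Lemma~\ref{upperbound-orth} to restrict to orthogonal pairs, parametrize that pair by an angle $\theta$ and a scalar $\|\boldsymbol{y}\|_2\le 1$, and reduce $L_{\boldsymbol{E}_m}^2$ to $\tfrac{m}{2}-\tfrac{1}{2}\max_\theta G_m(\theta)$ with $G_m(\theta)=\sum_{j=0}^{m-1}|\sin(\tfrac{2j\pi}{m}+2\theta)|$. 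The paper evaluates $\max_\theta G_m(\theta)$ via Lemma~\ref{prop-calculation} (Lagrange's identities on each interval of constant sign pattern); your reduction of the phases modulo $\pi$ and geometric-series summation is an equivalent route to the same closed form. One small gap: you justify ``the minimizing $\boldsymbol{y}$ is the full unit vector'' by pointing to \eqref{xH2024}--\eqref{eqxu12:2}, but those equations already restrict to $\|\boldsymbol{y}\|_2=1$ and so do not establish this. The paper handles it by the one-line AM--GM bound
\[
\frac{G_m(\theta)}{\tfrac{1}{\|\boldsymbol{y}\|_2}+\|\boldsymbol{y}\|_2}\le \tfrac{1}{2}G_m(\theta),
\]
with equality at $\|\boldsymbol{y}\|_2=1$; you should insert this step explicitly.
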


\begin{lem}\label{prop-calculation}
Let $m\geq 3$ be a positive integer. 
For any $\theta\in\mathbb{R}$, set
\begin{equation}\label{eq1:6}
G_m(\theta):=\sum_{j=0}^{m-1}\left|\sin\left(\frac{2j\pi}{m}+2\theta\right)\right|.
\end{equation}
Then we have
\begin{equation*}
\max\limits_{\theta\in[0,\pi]}G_m(\theta)=
\begin{cases}
\frac{2}{\sin\frac{\pi}{m}} & \text{if $m$ is even,}\\	
\frac{1}{\sin\frac{\pi}{2m}} & \text{if $m$ is odd.}
\end{cases} 	
\end{equation*}
In the case where $m$ is an even number, the equality is achieved if $\theta=\frac{\pi}{2m}$. On the other hand, if $m$ is odd, the equality is attained if $\theta=\frac{\pi}{4m}$.

\end{lem}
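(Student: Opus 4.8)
\textbf{Proof proposal for Lemma \ref{prop-calculation}.}

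The plan is to reduce the optimization of $G_m$ to a finite combinatorial problem over the break-points of the absolute values, then exploit the symmetry of the configuration of points $\{2j\pi/m\}$. First I would record the periodicity: since each summand $|\sin(2j\pi/m+2\theta)|$ has period $\pi/2$ in $\theta$, the function $G_m$ has period $\pi/(2m)$ once we also use the fact that shifting $\theta$ by $\pi/(2m)$ permutes the indices $j\mapsto j+1 \pmod m$ (this is exactly where the equispacing is used). Hence it suffices to maximize $G_m$ over the single interval $\theta\in[0,\pi/(2m)]$. On such an interval, the sign of each $\sin(2j\pi/m+2\theta)$ is constant except possibly for those few $j$ where the argument crosses a multiple of $\pi$; I would identify which terms are "active" (i.e.\ change sign) and conclude that on each maximal subinterval between consecutive break-points, $G_m(\theta)$ is a single sinusoid $r\sin(2\theta-2\theta_0)$ plus constants — hence piecewise of the form $R\cos(2\theta)+S\sin(2\theta)+c$, whose extrema on a subinterval occur either at an endpoint or at an interior critical point.

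Next I would compute, on the relevant subinterval, the explicit trigonometric sum. Writing the sum of $\pm\sin(2j\pi/m+2\theta)$ with the appropriate signs, one collects it into $A\cos 2\theta + B\sin 2\theta$ where $A,B$ are sums of the form $\sum \pm\sin(2j\pi/m)$, $\sum\pm\cos(2j\pi/m)$. These are geometric-type sums over $m$-th roots of unity (or over half of them), and they evaluate in closed form using $\sum_{j} e^{2\pi i j/m}$ partial-sum identities; the cotangent/cosecant values $1/\sin(\pi/m)$ and $1/\sin(\pi/(2m))$ will emerge precisely from $\sum_{j=0}^{m-1} |\sin(2\pi j/m + \text{phase})|$-type evaluations, splitting into the even-$m$ and odd-$m$ cases because the multiset $\{2\pi j/m \bmod \pi\}$ behaves differently: for even $m$ the points pair up antipodally (giving the factor $2$ and argument $\pi/m$), while for odd $m$ they are all distinct mod $\pi$ and sweep a finer grid of spacing $\pi/m$, but the relevant half-spacing $\pi/(2m)$ shows up in the maximizer. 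I would then verify by direct substitution that $\theta=\pi/(4m)$ (odd) resp.\ $\theta=\pi/(2m)$ (even) makes the active sinusoid attain its peak, yielding the claimed values $1/\sin(\pi/(2m))$ and $2/\sin(\pi/m)$.

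Finally I would argue that no other subinterval can do better: by the periodicity reduction every critical value of $G_m$ is a translate of one already examined, and the endpoint values are dominated by the interior peak just found (a short monotonicity check, or the observation that at the candidate maximizer the derivative vanishes and the second derivative is negative). Assembling these gives $\max_{\theta\in[0,\pi]}G_m(\theta)$ as stated.

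\textbf{Main obstacle.} The genuinely delicate part is the exact evaluation of the signed trigonometric sums on the active subinterval and, correspondingly, the clean split into the parity cases: one must track carefully which of the $m$ terms $\sin(2\pi j/m+2\theta)$ are positive and which negative for $\theta$ just above $0$, since this bookkeeping differs for even and odd $m$ (for odd $m$ one of the arguments passes through $0$ asymmetrically), and an off-by-one error there changes the constant. I expect the rest — periodicity reduction, "piecewise single sinusoid" structure, and the endpoint check — to be routine.
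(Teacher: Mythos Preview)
Your plan is essentially the paper's proof: reduce by periodicity to a short fundamental interval, pin down the signs of all $m$ terms there, sum the resulting $\pm\sin(2j\pi/m+2\theta)$ into a single sinusoid via Lagrange-type identities, and read off the maximum. The paper does exactly this, obtaining $G_m(\theta)=\frac{2\cos(2\theta-\pi/m)}{\sin(\pi/m)}$ on $[0,\pi/m]$ for even $m$ and $G_m(\theta)=\frac{\cos(2\theta-\pi/(2m))}{\sin(\pi/(2m))}$ on $[0,\pi/(2m)]$ for odd $m$; in particular there are no interior break-points on the fundamental interval, so your ``on each maximal subinterval'' worry evaporates, and there is no additive constant (the signed sum is a pure sinusoid, not ``sinusoid plus constants'').

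One genuine slip to fix: the shift $\theta\mapsto\theta+\pi/(2m)$ does \emph{not} permute the indices $j\mapsto j+1$; that would require adding $2\pi/m$ to the argument, i.e.\ shifting $\theta$ by $\pi/m$. The extra reduction to period $\pi/(2m)$ works only for odd $m$, where $\theta\mapsto\theta+\pi/(2m)$ sends $j$ to $j+\tfrac{m+1}{2}\pmod m$ (using that $|\sin|$ has period $\pi$); for even $m$ the fundamental period is $\pi/m$, not $\pi/(2m)$. This is precisely where the parity split enters, so get that bookkeeping right before computing the signed sums.
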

\begin{proof}
The proof is presented in Appendix \ref{Proof of prop-calculation}.
\end{proof}

Now we can present a proof of Theorem \ref{cos_lemma}.

\begin{proof}[Proof of Theorem \ref{cos_lemma}]
For convenience, we denote $\boldsymbol{A}=\boldsymbol{E}_m$ and denote the $i$-th row of $\boldsymbol{A}$ by $\boldsymbol{a}_i$.
Note that $\boldsymbol{A}^T\boldsymbol{A}=\frac{m}{2}\boldsymbol{I}$, so the upper Lipschitz bound $U_{\boldsymbol{A}}=\sqrt{\frac{m}{2}}$. 
Based on Lemma \ref{upperbound-orth}, we claim that
\begin{equation}\label{LA-R2}
(L_{\boldsymbol{A}})^2=\min_{\substack{\boldsymbol{x}\in \mathbb{H}^{d},\boldsymbol{y}\in \mathbb{H}^{d} \\ \|\boldsymbol{x}\|_2=1, \|\boldsymbol{y}\|_2\leq 1,\langle \boldsymbol{x},\boldsymbol{y}\rangle=0  }}\frac{\||\boldsymbol{A}\boldsymbol{x}|-|\boldsymbol{A}\boldsymbol{y}|\|_2^2}{\mathrm{dist}^2(\boldsymbol{x},\boldsymbol{y})}
=\begin{cases}
\frac{m}{2}-\frac{1}{\sin\frac{\pi}{m}} & \text{if $m$ is even,}\\	
\frac{m}{2}-\frac{1}{2\sin\frac{\pi}{2m}} & \text{if $m$ is odd.}\end{cases}	
\end{equation}
Combining with $U_{\boldsymbol{A}}=\sqrt{\frac{m}{2}}$,  we arrive at \eqref{betaA-R2}.

It remains to prove (\ref{LA-R2}).  We first consider the case when $\boldsymbol{y}= \boldsymbol{0}$. In this case we have 
\begin{equation}\label{LA-R22}
\frac{\||\boldsymbol{A}\boldsymbol{x}|-|\boldsymbol{A}\boldsymbol{y}|\|_2^2}{\mathrm{dist}^2(\boldsymbol{x},\boldsymbol{y})}
=\frac{\|\boldsymbol{A}\boldsymbol{x}\|_2^2}{\|\boldsymbol{x}\|^2_2}
=\frac{m}{2}.	
\end{equation}
We next consider the case when $\boldsymbol{y}\neq \boldsymbol{0}$. Without loss of generality, we can assume that $\boldsymbol{x}= (\cos\theta,-\sin\theta)^{\top}$ and $ \boldsymbol{y}=\|\boldsymbol{y}\|_2\cdot (-\sin\theta,-\cos\theta)^{\top}$,
where $\theta\in[0,\pi)$. Therefore, we have $\mathrm{dist}^2(\boldsymbol{x},\boldsymbol{y})=\|\boldsymbol{x}-\boldsymbol{y}\|_2^2=1+\|\boldsymbol{y}\|_2^2$. 
Recalling that $\boldsymbol{A}^T\boldsymbol{A}=\frac{m}{2}\boldsymbol{I}$, we can proceed with a direct calculation:
\begin{equation*}
\||\boldsymbol{A}\boldsymbol{x}|-|\boldsymbol{A}\boldsymbol{y}|\|_2^2=\boldsymbol{x}^T\boldsymbol{A}^T\boldsymbol{A}\boldsymbol{x}+\boldsymbol{y}^T\boldsymbol{A}^T\boldsymbol{A}\boldsymbol{y}-2\sum_{j=1}^{m}| \boldsymbol{x}^T\boldsymbol{a}_j\boldsymbol{a}_j^T\boldsymbol{y}|=\frac{m}{2}(1+\|\boldsymbol{y}\|_2^2)-\|\boldsymbol{y}\|_2\cdot G_m(\theta),
\end{equation*}
where $G_m(\theta)$ is defined in \eqref{eq1:6}.
Then we have 
\begin{equation}\label{LA-R222}
\frac{\||\boldsymbol{A}\boldsymbol{x}|-|\boldsymbol{A}\boldsymbol{y}|\|_2^2}{\mathrm{dist}^2(\boldsymbol{x},\boldsymbol{y})}
=\frac{m}{2}-\frac{G_m(\theta)}{\frac{1}{\|\boldsymbol{y}\|_2}+\|\boldsymbol{y}\|_2}
\overset{(a)}\geq \frac{m}{2}-\frac{1}{2}G_m(\theta)
\overset{(b)}\geq \begin{cases}
\frac{m}{2}-\frac{1}{\sin\frac{\pi}{m}} & \text{if $m$ is even,}\\	
\frac{m}{2}-\frac{1}{2\sin\frac{\pi}{2m}} & \text{if $m$ is odd.}
\end{cases}
\end{equation}
Here, in inequality ($a$) we use Cauchy-Schwarz inequality, and the equality is obtained when $\|\boldsymbol{y}\|_2=1$.  Inequality ($b$) follows from Lemma \ref{prop-calculation}.
Therefore, combining \eqref{LA-R22} and \eqref{LA-R222}, we arrive at \eqref{LA-R2}. 
If $m$ is an even integer, then the equality in \eqref{LA-R2} is attained if $\boldsymbol{x}= (\cos\frac{\pi}{2m},-\sin\frac{\pi}{2m})^{\top}$ and $ \boldsymbol{y}=(-\sin\frac{\pi}{2m},-\cos \frac{\pi}{2m})^{\top}$. Otherwise, if $m$ is an odd integer, then the equality in \eqref{LA-R2} is attained if $\boldsymbol{x}= (\cos\frac{\pi}{4m},-\sin\frac{\pi}{4m})^{\top}$ and $ \boldsymbol{y}=(-\sin\frac{\pi}{4m},-\cos \frac{\pi}{4m})^{\top}$.

\end{proof}

\section{Estimate the condition number $\beta_{\boldsymbol{A}}$ for Gaussian random matrix}\label{section:Gauss}
In this section, we estimate $\beta_\BA$ for a standard Gaussian random matrix $\boldsymbol{A}\in \mathbb{H}^{m\times d}$, where $\mathbb{H}=\mathbb{R}$ or $\mathbb{C}$. More concretely, in the following statements, the rows of $\boldsymbol{A}$ are independently drawn from $\mathcal{N}(\boldsymbol{0},\boldsymbol{I})$ when $\mathbb{H}=\mathbb{R}$, and  the rows of $\boldsymbol{A}$ are independently drawn from $\mathcal{N}(\boldsymbol{0},\boldsymbol{I}/2)+\mathrm{i}\mathcal{N}(\boldsymbol{0},\boldsymbol{I}/2)$ when $\mathbb{H}=\mathbb{C}$.  

Recall that in Theorem \ref{lowerbound-ndim} we derive a universal lower bound on the condition number $\beta_{\boldsymbol{A}}$ for all matrices $\boldsymbol{A}\in\mathbb{H}^{m\times d}$:
\begin{equation}\label{eq:mdlower-new}
\beta_{\boldsymbol{A}}\geq \beta_0^{\mathbb{H}}=\begin{cases}
\sqrt{\frac{\pi}{\pi-2}}\,\,\approx\,\, 1.659  & \text{if $\mathbb{H}=\mathbb{R}$,}\\
\sqrt{\frac{4}{4-\pi}}\,\,\approx\,\, 2.159  & \text{if $\mathbb{H}=\mathbb{C}$.}\\

\end{cases}
\end{equation}

The main result of this section is the following theorem, which presents an estimate on the condition number $\beta_{\boldsymbol{A}}$ for a  standard Gaussian random matrix $\boldsymbol{A}\in \mathbb{H}^{m\times d}$, where $\mathbb{H}=\mathbb{R}$ or $\mathbb{C}$. 

\begin{theorem}\label{thm: complex_beta}
Assume that $\boldsymbol{A}\in \mathbb{H}^{m\times d}$ is a standard Gaussian random matrix, where $\mathbb{H}=\mathbb{R}$ or $\mathbb{C}$. 
For any $0<\delta<0.4$, with probability at least  $1-2\exp(-c\delta^2m)$, we have 
\begin{equation}\label{thm: complex_beta:2024-1}
\beta_0^{\mathbb{H}}\leq \beta_{\boldsymbol{A}}\leq \beta_0^{\mathbb{H}}+\delta,
\end{equation}
provided that $m\gtrsim \log(1/\delta)\delta^{-2}d$.  
Here, $c$ is a universal positive constant, and $\beta_0^{\mathbb{H}}$ is the constant defined in \eqref{eq:mdlower-new}.
\end{theorem}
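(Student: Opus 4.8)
\textbf{Proof proposal for Theorem \ref{thm: complex_beta}.}
The plan is to control $U_{\boldsymbol{A}}$ and $L_{\boldsymbol{A}}$ separately and then combine the estimates, since $\beta_{\boldsymbol{A}} = U_{\boldsymbol{A}}/L_{\boldsymbol{A}}$. The lower bound $\beta_{\boldsymbol{A}}\geq\beta_0^{\mathbb{H}}$ is already furnished by Theorem \ref{lowerbound-ndim}, so the entire task is the upper bound $\beta_{\boldsymbol{A}}\leq\beta_0^{\mathbb{H}}+\delta$. By Theorem \ref{th:LUbound}, $U_{\boldsymbol{A}}=\|\boldsymbol{A}\|_2$, and standard results on the extreme singular values of Gaussian matrices (e.g. \cite{DS01} together with a Bernstein/Gaussian-concentration argument) give that, with probability at least $1-2\exp(-c\delta^2 m)$, $\|\boldsymbol{A}\|_2^2 \leq (1+C\delta)\cdot\kappa_{\mathbb{H}}\cdot m$ once $m\gtrsim\delta^{-2}d$, where $\kappa_{\mathbb{R}}=1$ and $\kappa_{\mathbb{C}}=1$ after the normalization in the paper's definition of a complex Gaussian vector (each row has $\mathbb{E}\|\boldsymbol{a}_i\|_2^2=d$; I would track the constant carefully so that $U_{\boldsymbol{A}}^2\approx m$ up to a $(1+\delta)$ factor). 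So the real work is a sharp \emph{lower} bound on $L_{\boldsymbol{A}}$ of the form $L_{\boldsymbol{A}}^2 \geq (1-C\delta)\cdot\frac{m}{(\beta_0^{\mathbb{H}})^2}$.

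For the lower bound on $L_{\boldsymbol{A}}$ I would invoke Lemma \ref{upperbound-orth}: $L_{\boldsymbol{A}}^2 = \min\{\,\||\boldsymbol{A}\boldsymbol{x}|-|\boldsymbol{A}\boldsymbol{y}|\|_2^2/\mathrm{dist}^2(\boldsymbol{x},\boldsymbol{y}) : \|\boldsymbol{x}\|_2=1,\ \|\boldsymbol{y}\|_2\leq 1,\ \langle\boldsymbol{x},\boldsymbol{y}\rangle=0\,\}$. For a fixed orthonormal-type pair $(\boldsymbol{x},\boldsymbol{y})$ with $\|\boldsymbol{y}\|_2=s\leq 1$, expanding the square gives $\||\boldsymbol{A}\boldsymbol{x}|-|\boldsymbol{A}\boldsymbol{y}|\|_2^2 = \|\boldsymbol{A}\boldsymbol{x}\|_2^2 + \|\boldsymbol{A}\boldsymbol{y}\|_2^2 - 2\sum_i |\boldsymbol{a}_i^*\boldsymbol{x}|\,|\boldsymbol{a}_i^*\boldsymbol{y}|$; dividing by $\mathrm{dist}^2=1+s^2$ and using concentration of each of the three sums of i.i.d.\ terms around its mean, the pointwise expectation works out (by the same one- and two-dimensional Gaussian integrals that produced $\beta_0^{\mathbb{H}}$ in Theorem \ref{lowerbound-ndim} — namely $\mathbb{E}|g| = \sqrt{2/\pi}$ and the corresponding complex moment) to $\frac{m}{(\beta_0^{\mathbb{H}})^2}$ exactly when $s=1$, and to something no smaller for $s<1$. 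So for each fixed pair the ratio concentrates around a value $\geq m/(\beta_0^{\mathbb{H}})^2$; I would show that with probability $1-2\exp(-c\delta^2 m)$ it is at least $(1-C\delta)\,m/(\beta_0^{\mathbb{H}})^2$ using sub-exponential (Bernstein-type) tail bounds, since $|\boldsymbol{a}_i^*\boldsymbol{x}|^2$ and the cross terms are sub-exponential with $O(1)$ parameters.

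The delicate part is upgrading this pointwise-in-$(\boldsymbol{x},\boldsymbol{y})$ concentration to a uniform bound over the whole constraint set, because the minimum ranges over a continuum. The natural tool is a net argument: the relevant configuration space (pairs $(\boldsymbol{x},\boldsymbol{y})$ reduced modulo the symmetries of $\mathrm{dist}$, effectively a low-dimensional manifold once one restricts to the two-dimensional span, as in the $\mathcal{X}^{\mathbb{H}}$ parametrization of Theorem \ref{lowerbound-ndim}) admits a $\delta$-net of cardinality $(C/\delta)^{O(d)}$, and the map $(\boldsymbol{x},\boldsymbol{y})\mapsto \||\boldsymbol{A}\boldsymbol{x}|-|\boldsymbol{A}\boldsymbol{y}|\|_2/\mathrm{dist}(\boldsymbol{x},\boldsymbol{y})$ is Lipschitz with constant $O(\|\boldsymbol{A}\|_2)=O(\sqrt m)$ on the event controlling $\|\boldsymbol{A}\|_2$. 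Taking a union bound over the net, with deviation level $\delta$ and failure probability $(C/\delta)^{O(d)}\exp(-c\delta^2 m)$, forces the requirement $m\gtrsim \log(1/\delta)\,\delta^{-2}d$ — this is exactly where the $\log(1/\delta)$ factor in the hypothesis originates. I expect the main obstacle to be making this net/Lipschitz step clean: one must argue that it suffices to net the (at most) two-dimensional subspace $\mathrm{span}(\boldsymbol{x},\boldsymbol{y})$ together with the angular parameters (so the effective dimension is $O(d)$, not $O(d^2)$), verify the Lipschitz estimate for the non-smooth function $\boldsymbol{u}\mapsto|\boldsymbol{A}\boldsymbol{u}|$ (here $\big||\boldsymbol{A}\boldsymbol{u}|-|\boldsymbol{A}\boldsymbol{v}|\big|\leq\|\boldsymbol{A}(\boldsymbol{u}-\boldsymbol{v})\|_2$ saves the day), and handle the small-$\|\boldsymbol{y}\|_2$ regime where $\mathrm{dist}\approx 1$ separately so that no division-by-small-number instability arises. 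Once the uniform lower bound $L_{\boldsymbol{A}}^2\geq(1-C\delta)m/(\beta_0^{\mathbb{H}})^2$ and the upper bound $U_{\boldsymbol{A}}^2\leq(1+C\delta)m$ are both in force, $\beta_{\boldsymbol{A}}\leq\beta_0^{\mathbb{H}}\sqrt{(1+C\delta)/(1-C\delta)}\leq\beta_0^{\mathbb{H}}+\delta'$ after rescaling $\delta$, completing the proof.
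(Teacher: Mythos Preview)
Your proposal is correct and follows essentially the same route as the paper: bound $U_{\boldsymbol{A}}=\|\boldsymbol{A}\|_2$ via Gaussian singular-value concentration (the paper's Lemma \ref{complex_fund1}), reduce $L_{\boldsymbol{A}}$ to orthogonal pairs via Lemma \ref{upperbound-orth}, expand $\||\boldsymbol{A}\boldsymbol{x}|-|\boldsymbol{A}\boldsymbol{y}|\|_2^2$, compute $\mathbb{E}|\boldsymbol{a}^*\boldsymbol{x}||\boldsymbol{a}^*\boldsymbol{y}|=1-1/(\beta_0^{\mathbb{H}})^2$ for orthonormal $\boldsymbol{x},\boldsymbol{y}$, apply Bernstein for the cross term, and pass to a uniform bound via an $\epsilon$-net of cardinality $(C/\epsilon)^{O(d)}$ together with the Lipschitz estimate $\big||\boldsymbol{A}\boldsymbol{u}|-|\boldsymbol{A}\boldsymbol{v}|\big|\leq|\boldsymbol{A}(\boldsymbol{u}-\boldsymbol{v})|$. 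The only cosmetic difference is that for $\|\boldsymbol{y}\|_2=s<1$ the paper first reduces to $s=1$ by a deterministic inequality (losing $2\delta$), whereas you observe directly that $2s/(1+s^2)\leq 1$ makes the expected ratio monotone in $s$; both arguments arrive at the same place.
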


 It is worth noting that, by letting $\delta\rightarrow0$ in \eqref{thm: complex_beta:2024-1}, the condition number $\beta_{\boldsymbol{A}}$ of a standard Gaussian matrix $\boldsymbol{A}\in \mathbb{H}^{m\times d}$ approaches the constant $\beta_0^{\mathbb{H}}$ in \eqref{eq:mdlower-new}. 
Therefore, Theorem \ref{thm: complex_beta} indicates that the universal lower bound $\beta_0^{\mathbb{H}}$ is asymptotic optimal for both the real case and the complex case.

To establish the validity of Theorem \ref{thm: complex_beta}, we introduce several auxiliary results that provide estimations for the values of $U_\BA$ and $L_\BA$. 

Recall that by Theorem \ref{th:LUbound} we have $U_\BA=\|\boldsymbol{A}\|_2$ for all $\boldsymbol{A}\in \mathbb{H}^{m\times d}$.
The following lemma provides an estimate on $\|\boldsymbol{A}\|_2$ for a standard Gaussian random matrix $\boldsymbol{A}\in \mathbb{H}^{m\times d}$, which immediately gives an upper bound on $U_\BA$.

\begin{lem}\label{complex_fund1}\cite{CSV13,CS}
Assume that $\boldsymbol{A}\in \mathbb{H}^{m\times d}$ is a standard Gaussian random matrix, where $\mathbb{H}=\mathbb{R}$ or $\mathbb{C}$. 
For any $0<\delta<1$,  with probability at least $1-2\exp(-c_1\delta^2m)$, we have
\begin{equation}\label{eqn: union_RIP0}
(1-\delta)\|\boldsymbol{x}\|_2^2\leq \frac{1}{m}\|\boldsymbol{A}\boldsymbol{x}\|_2^2\leq (1+\delta)\|\boldsymbol{x}\|_2^2, \quad\forall \boldsymbol{x}\in\mathbb{H}^d,
\end{equation}
provided that $m\gtrsim \delta^{-2}d$. Here, $c_1$ is a universal positive constant. 
\end{lem}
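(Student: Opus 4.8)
By homogeneity it is enough to prove the two-sided bound for unit vectors, and since $\frac{1}{m}\|\boldsymbol{A}\boldsymbol{x}\|_2^2-\|\boldsymbol{x}\|_2^2=\langle(\tfrac{1}{m}\boldsymbol{A}^*\boldsymbol{A}-\boldsymbol{I})\boldsymbol{x},\boldsymbol{x}\rangle$, the inequality \eqref{eqn: union_RIP0} is equivalent to the operator-norm estimate $\|\boldsymbol{M}\|_2\le\delta$ for the Hermitian matrix $\boldsymbol{M}:=\tfrac{1}{m}\boldsymbol{A}^*\boldsymbol{A}-\boldsymbol{I}$. The plan is the classical concentration-plus-net argument (as in \cite{CSV13,CS}): fix a $\tfrac14$-net $\mathcal{N}$ of the unit sphere $\mathbb{S}_{\mathbb{H}}^{d-1}$, which, viewed inside $\mathbb{R}^{D}$ with $D=\dim_{\mathbb{R}}\mathbb{H}^d\le 2d$, can be taken with $|\mathcal{N}|\le 9^{D}\le 9^{2d}$. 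For any Hermitian $\boldsymbol{M}$ one has the elementary bound $\|\boldsymbol{M}\|_2\le 2\max_{\boldsymbol{x}\in\mathcal{N}}|\langle\boldsymbol{M}\boldsymbol{x},\boldsymbol{x}\rangle|$: picking a maximizing unit vector $\boldsymbol{x}^*$ and the nearest net point $\boldsymbol{y}$, the difference $\langle\boldsymbol{M}\boldsymbol{x}^*,\boldsymbol{x}^*\rangle-\langle\boldsymbol{M}\boldsymbol{y},\boldsymbol{y}\rangle$ is bounded in modulus by $2\|\boldsymbol{M}\|_2\,\|\boldsymbol{x}^*-\boldsymbol{y}\|_2\le\tfrac12\|\boldsymbol{M}\|_2$. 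So it suffices to show $|\langle\boldsymbol{M}\boldsymbol{x},\boldsymbol{x}\rangle|\le\delta/2$ simultaneously for all $\boldsymbol{x}\in\mathcal{N}$.

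\textbf{Pointwise concentration.} Fix $\boldsymbol{x}\in\mathbb{S}_{\mathbb{H}}^{d-1}$ and write $\langle\boldsymbol{M}\boldsymbol{x},\boldsymbol{x}\rangle=\frac{1}{m}\sum_{j=1}^m\bigl(|\langle\boldsymbol{a}_j,\boldsymbol{x}\rangle|^2-1\bigr)$. In the real case $\langle\boldsymbol{a}_j,\boldsymbol{x}\rangle\sim\mathcal{N}(0,1)$, so $|\langle\boldsymbol{a}_j,\boldsymbol{x}\rangle|^2$ is $\chi_1^2$-distributed with mean $1$; in the complex case the real and imaginary parts of $\langle\boldsymbol{a}_j,\boldsymbol{x}\rangle$ are independent $\mathcal{N}(0,\tfrac12)$, so $|\langle\boldsymbol{a}_j,\boldsymbol{x}\rangle|^2\sim\mathrm{Exp}(1)$, again with mean $1$. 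In both cases the summands are independent, centered, and subexponential with an absolute bound on their subexponential norm, so Bernstein's inequality yields, for $0<\delta<1$,
\begin{equation*}
\Pr\Bigl[\,|\langle\boldsymbol{M}\boldsymbol{x},\boldsymbol{x}\rangle|>\delta/2\,\Bigr]\le 2\exp\bigl(-c_2\, m\,\delta^2\bigr),
\end{equation*}
with $c_2>0$ universal (the choice $\delta/2<\tfrac12$ keeps us in the quadratic regime of Bernstein's bound).

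\textbf{Union bound and conclusion.} Taking a union bound over $\mathcal{N}$,
\begin{equation*}
\Pr\Bigl[\,\max_{\boldsymbol{x}\in\mathcal{N}}|\langle\boldsymbol{M}\boldsymbol{x},\boldsymbol{x}\rangle|>\delta/2\,\Bigr]\le 2\cdot 9^{2d}\exp\bigl(-c_2\, m\,\delta^2\bigr)=2\exp\bigl(2d\log 9-c_2\, m\,\delta^2\bigr).
\end{equation*}
Choosing the absolute constant in the hypothesis $m\gtrsim\delta^{-2}d$ large enough that $2d\log 9\le\tfrac{c_2}{2}m\delta^2$, the right-hand side is at most $2\exp(-c_1\delta^2 m)$ with $c_1:=c_2/2$. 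On the complementary event, the net inequality of the first paragraph gives $\|\boldsymbol{M}\|_2\le 2\cdot(\delta/2)=\delta$, which is precisely \eqref{eqn: union_RIP0}.

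\textbf{Main obstacle.} The statement is classical, so there is no genuine difficulty; the one point requiring care is retaining the sample-complexity requirement $m\gtrsim\delta^{-2}d$ with \emph{no} $\log(1/\delta)$ factor. This is exactly why the argument uses a net at the fixed scale $\tfrac14$ combined with the eigenvalue-via-net inequality $\|\boldsymbol{M}\|_2\le 2\max_{\mathcal{N}}|\langle\boldsymbol{M}\boldsymbol{x},\boldsymbol{x}\rangle|$, rather than attempting to bound $\sup_{\boldsymbol{x}}|\frac{1}{m}\|\boldsymbol{A}\boldsymbol{x}\|_2^2-1|$ directly by a $\delta$-net (whose cardinality would be $(1+2/\delta)^{O(d)}$ and would force the extra logarithm). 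A minor bookkeeping issue is treating $\mathbb{H}=\mathbb{R}$ and $\mathbb{H}=\mathbb{C}$ uniformly, which is handled by the observation that $|\langle\boldsymbol{a}_j,\boldsymbol{x}\rangle|^2$ is a mean-one subexponential random variable in both cases.
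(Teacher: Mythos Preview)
Your proof is correct and is precisely the standard concentration-plus-net argument for this result. The paper does not supply its own proof of this lemma; it is quoted as a known fact with citations to \cite{CSV13,CS}, so there is nothing to compare against beyond noting that your argument is the classical one those references contain.
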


In the following theorem we estimate the lower Lipschitz bound $L_{\boldsymbol{A}}$ for a standard Gaussian random matrix $\boldsymbol{A}\in \mathbb{H}^{m\times d}$.

\begin{theorem}\label{lem: large_distance_2}
Assume that $\boldsymbol{A}=(\boldsymbol{a}_1,\ldots,\boldsymbol{a}_m)^*\in \mathbb{H}^{m\times d}$ is a standard Gaussian random matrix, where $\mathbb{H}=\mathbb{R}$ or $\mathbb{C}$. 
For any $0<\delta<0.05$, there exists a universal constant $c_2>0$, such that if $m\gtrsim \log(1/\delta)\delta^{-2}d$, then with probability at least $1-4\exp(-c_2\delta^2 m)$ it holds that
\begin{equation}\label{xulem2024-8}
\frac{1}{\sqrt{m}}\cdot L_{\boldsymbol{A}}
\geq \frac{1}{\beta_0^{\mathbb{H}}}-\delta.
\end{equation}
\end{theorem}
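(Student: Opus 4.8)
The plan is to combine the characterization of the lower Lipschitz bound in Lemma~\ref{upperbound-orth} with a uniform concentration argument over orthonormal pairs. By Lemma~\ref{upperbound-orth},
\[
L_{\boldsymbol{A}}^2=\min\Big\{\tfrac{\||\boldsymbol{A}\boldsymbol{x}|-|\boldsymbol{A}\boldsymbol{y}|\|_2^2}{\dist^2(\boldsymbol{x},\boldsymbol{y})}: \|\boldsymbol{x}\|_2=1,\ \|\boldsymbol{y}\|_2\le 1,\ \langle\boldsymbol{x},\boldsymbol{y}\rangle=0\Big\}.
\]
Since $\langle\boldsymbol{x},\boldsymbol{y}\rangle=0$ one has $\dist^2(\boldsymbol{x},\boldsymbol{y})=1+\|\boldsymbol{y}\|_2^2$, so writing $\boldsymbol{y}=r\boldsymbol{y}'$ with $r=\|\boldsymbol{y}\|_2\in[0,1]$ and $\boldsymbol{y}'\in\mathbb{S}_{\mathbb{H}}^{d-1}$, $\boldsymbol{y}'\perp\boldsymbol{x}$, I would expand
\[
\tfrac1m\||\boldsymbol{A}\boldsymbol{x}|-|\boldsymbol{A}\boldsymbol{y}|\|_2^2=\tfrac1m\|\boldsymbol{A}\boldsymbol{x}\|_2^2+\tfrac{r^2}{m}\|\boldsymbol{A}\boldsymbol{y}'\|_2^2-\tfrac{2r}{m}\sum_{j=1}^m|\boldsymbol{a}_j^*\boldsymbol{x}|\,|\boldsymbol{a}_j^*\boldsymbol{y}'|,
\]
and bound the first two terms from below by Lemma~\ref{complex_fund1} (the RIP for Gaussian matrices), which gives $\tfrac1m\|\boldsymbol{A}\boldsymbol{x}\|_2^2\ge 1-\delta$ and $\tfrac1m\|\boldsymbol{A}\boldsymbol{y}'\|_2^2\ge 1-\delta$ uniformly once $m\gtrsim\delta^{-2}d$, and also $\tfrac1m\|\boldsymbol{A}\|_2^2\le 1+\delta$.

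The crux is a uniform upper bound on the cross term $W(\boldsymbol{x},\boldsymbol{y}'):=\tfrac1m\sum_{j=1}^m|\boldsymbol{a}_j^*\boldsymbol{x}|\,|\boldsymbol{a}_j^*\boldsymbol{y}'|$ over all orthonormal pairs. For a fixed such pair, $\boldsymbol{a}_j^*\boldsymbol{x}$ and $\boldsymbol{a}_j^*\boldsymbol{y}'$ are independent (real, resp.\ complex) standard Gaussians, so $|\boldsymbol{a}_j^*\boldsymbol{x}|\,|\boldsymbol{a}_j^*\boldsymbol{y}'|$ is a product of sub-Gaussians, hence sub-exponential, with mean
\[
\gamma_{\mathbb{H}}:=\E\big[|\boldsymbol{a}_j^*\boldsymbol{x}|\,|\boldsymbol{a}_j^*\boldsymbol{y}'|\big]=\big(\E|\boldsymbol{a}_j^*\boldsymbol{x}|\big)^2=\begin{cases}2/\pi & \mathbb{H}=\R,\\[2pt] \pi/4 & \mathbb{H}=\C,\end{cases}
\]
using the first absolute moment of $\mathcal N(0,1)$, resp.\ of a Rayleigh variable. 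Bernstein's inequality then gives $\Prob\big(W(\boldsymbol{x},\boldsymbol{y}')>\gamma_{\mathbb{H}}+\delta\big)\le 2\exp(-c\delta^2m)$ for a fixed pair. To upgrade this to a uniform bound, I would take a $\delta$-net $\mathcal N$ of $\mathbb{S}_{\mathbb{H}}^{d-1}$ of cardinality $(C/\delta)^{O(d)}$, union-bound over $\mathcal N\times\mathcal N$, and control the fluctuation of $W$ off the net via $|W(\boldsymbol{x}_1,\boldsymbol{y}_1)-W(\boldsymbol{x}_2,\boldsymbol{y}_2)|\lesssim \tfrac1m\|\boldsymbol{A}\|_2^2\big(\|\boldsymbol{x}_1-\boldsymbol{x}_2\|_2+\|\boldsymbol{y}_1-\boldsymbol{y}_2\|_2\big)$, together with $\tfrac1m\|\boldsymbol{A}\|_2^2\lesssim1$. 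The union bound closes precisely when $m\delta^2\gtrsim d\log(1/\delta)$, i.e.\ $m\gtrsim\log(1/\delta)\delta^{-2}d$, yielding $\sup_{\boldsymbol{x}\perp\boldsymbol{y}'}W(\boldsymbol{x},\boldsymbol{y}')\le\gamma_{\mathbb{H}}+C\delta$ on an event of probability at least $1-4\exp(-c_2\delta^2m)$.

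On this event, combining the three estimates gives, for every admissible $(\boldsymbol{x},\boldsymbol{y})$,
\[
\tfrac1m\frac{\||\boldsymbol{A}\boldsymbol{x}|-|\boldsymbol{A}\boldsymbol{y}|\|_2^2}{\dist^2(\boldsymbol{x},\boldsymbol{y})}\ \ge\ \frac{(1-\delta)(1+r^2)-2r(\gamma_{\mathbb{H}}+C\delta)}{1+r^2}\ \ge\ 1-\frac{2\gamma_{\mathbb{H}}\,r}{1+r^2}-C'\delta\ \ge\ 1-\gamma_{\mathbb{H}}-C'\delta=\frac{1}{(\beta_0^{\mathbb{H}})^2}-C'\delta,
\]
where I used $\tfrac{r}{1+r^2}\le\tfrac12$ and the identities $1-\tfrac2\pi=\tfrac{1}{(\beta_0^{\R})^2}$, $1-\tfrac\pi4=\tfrac{1}{(\beta_0^{\C})^2}$. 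Taking the minimum over $(\boldsymbol{x},\boldsymbol{y})$, a square root, and replacing $\delta$ by a suitable constant multiple of itself yields $\tfrac1{\sqrt m}L_{\boldsymbol{A}}\ge\tfrac1{\beta_0^{\mathbb{H}}}-\delta$. I expect the main obstacle to be exactly the uniform control of the cross term $W$: it enters with a negative sign, so only an upper bound valid over the entire manifold of orthonormal pairs is of use, and converting the pointwise sub-exponential estimate into a uniform one is what forces both the operator-norm-based Lipschitz estimate and the $\log(1/\delta)$ loss in the sample complexity.
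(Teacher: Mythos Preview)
Your proposal is correct and follows essentially the same route as the paper: reduce via Lemma~\ref{upperbound-orth} to orthogonal pairs, control $\tfrac1m\|\boldsymbol A\boldsymbol x\|_2^2$ and $\tfrac1m\|\boldsymbol A\boldsymbol y'\|_2^2$ by Lemma~\ref{complex_fund1}, and bound the cross term $W(\boldsymbol x,\boldsymbol y')=\tfrac1m\sum_j|\boldsymbol a_j^*\boldsymbol x|\,|\boldsymbol a_j^*\boldsymbol y'|$ uniformly by Bernstein plus an $\epsilon$-net (this is exactly the paper's Lemma~\ref{lem: large_distance_1}). Two small differences worth noting: you compute the mean $\gamma_{\mathbb H}=(\E|\boldsymbol a^*\boldsymbol x|)^2$ directly from independence for orthonormal pairs, whereas the paper invokes the more general bound of Lemma~\ref{expectation_result_new} (the two coincide at $\dist^2=2$, since $1-1/(\beta_0^{\mathbb H})^2=\gamma_{\mathbb H}$); and your reduction from $\|\boldsymbol y\|_2\le1$ to $\|\boldsymbol y\|_2=1$ via $2r/(1+r^2)\le1$ is a bit more direct than the paper's intermediate claim~\eqref{eqmayxu2}. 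One minor wrinkle to tidy up: pairs from $\mathcal N\times\mathcal N$ are not orthogonal, so their mean is not $\gamma_{\mathbb H}$, and applying Bernstein with threshold $\gamma_{\mathbb H}+\delta$ on the net is not quite right; the clean fix (precisely what the paper does in Lemma~\ref{lem: large_distance_1}) is to concentrate $W$ around its own expectation $\E W(\boldsymbol x,\boldsymbol y)$ uniformly over all unit pairs, and only then use $\E W=\gamma_{\mathbb H}$ for the orthonormal pair of interest.
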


Utilizing Lemma \ref{complex_fund1} and Theorem \ref{lem: large_distance_2}, we can present a proof of Theorem \ref{thm: complex_beta}. 

\begin{proof}[Proof of Theorem \ref{thm: complex_beta}]
By Theorem \ref{th:LUbound} we have $U_\BA=\|\boldsymbol{A}\|_2$. Then Lemma \ref{complex_fund1} implies that for any $\delta_1\in(0,0.05)$, if $m\gtrsim \delta_1^{-2}d$ then
\begin{equation}\label{march-UA1}
U_{\BA}=\|\boldsymbol{A}\|_2\leq (1+\delta_1)\cdot \sqrt{m}
\end{equation}
holds with probability at least $1-2\exp(-c_1\delta_1^2m)$. Here, $c_1$ is the universal positive constant in Lemma \ref{complex_fund1}.
On the other hand, Theorem \ref{lem: large_distance_2} shows that if $m\gtrsim \log(1/\delta_1)\delta_1^{-2}d$, then with probability at least $1-4\exp(-c_2\delta_1^2 m)$,  
\begin{equation}\label{march-LA4}
L_{\boldsymbol{A}}
\geq (\frac{1}{\beta_0^{\mathbb{H}}}-\delta_1)\cdot \sqrt{m}.
\end{equation}
Here, $c_2$ is the universal positive constant in Theorem \ref{lem: large_distance_2}.
Then, combining \eqref{march-UA1} and \eqref{march-LA4} we obtain that, with probability at least $1-2\exp(-c''\delta_1^2m)$, 
\begin{equation*}
\beta_{\BA}=\frac{U_{\BA}}{L_{\BA}}
\leq \frac{1+\delta_1}{\frac{1}{\beta_0^{\mathbb{H}}}-\delta_1}
\overset{(a)}\leq \beta_0^{\mathbb{H}}+  8\delta_1,
\end{equation*} 
where $c''$ is a universal constant and we use $\delta_1<0.05$ in inequality ($a$).  Take $\delta=8\delta_1$. Combining with the lower bound in Theorem \ref{lowerbound-ndim}, we arrive at our conclusion.

\end{proof}

In the following subsections, we will utilize Theorem \ref{lem: large_distance_2}  to showcase the performance of quadratic models in phase retrieval. Furthermore, a detailed proof for Theorem \ref{lem: large_distance_2} will be provided.

\subsection{The performance of quadratic models in phase retrieval}
For a given $\boldsymbol{x}_0 \in \mathbb{H}^d$ and a noise vector $\boldsymbol{\eta} \in \mathbb{R}^m$, a commonly used approach for estimating $\boldsymbol{x}_0$ from $\abs{\boldsymbol{A}\boldsymbol{x}_0}+\boldsymbol{\eta}$ is through the quadratic model defined as follows:
\begin{equation}\label{eq:hatx}
\hat{\boldsymbol{x}}\,\,\in\,\, \argmin{\boldsymbol{x}\in {\mathbb H}^d}\|\abs{\boldsymbol{A}\boldsymbol{x}}-\abs{\boldsymbol{A}\boldsymbol{x}_0}-\boldsymbol{\eta}\|_2.
\end{equation}

The objective of this subsection is to introduce the subsequent corollary, which demonstrates the performance of the widely used quadratic model in phase retrieval.
\begin{coro}\label{co:budeng}
Assume that $\boldsymbol{A}=(\boldsymbol{a}_1,\ldots,\boldsymbol{a}_m)^*\in \mathbb{H}^{m\times d}$ is a standard Gaussian random matrix, where $\mathbb{H}=\mathbb{R}$ or $\mathbb{C}$. 
For any $0 < \delta < 0.05$, $\boldsymbol{x}_0 \in \mathbb{H}^d$, and $\boldsymbol{\eta} \in \R^m$, the following inequality holds with a probability of at least $1-2\exp(-c\delta^2 m)$:
\begin{equation}\label{eq:performance}
\dist(\hat{\boldsymbol{x}}, \boldsymbol{x}_0)\,\,\leq \,\,\frac{2\beta_0^{\mathbb H}}{1-\delta}\cdot \frac{\|\boldsymbol{\eta}\|_2}{\sqrt{m}}
\end{equation}
provided that $m\gtrsim \log(1/\delta)\delta^{-2}d$. Here, $\hat{\boldsymbol{x}}$ and $\beta_0^{\mathbb H}$ is defined in (\ref{eq:hatx}) and (\ref{eq:mdlower-new}), respectively, and $c$ is a universal positive constant. 
\end{coro}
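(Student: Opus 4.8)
The plan is to convert the optimality of $\hat{\boldsymbol{x}}$ into a bound on $\|\abs{\boldsymbol{A}\hat{\boldsymbol{x}}}-\abs{\boldsymbol{A}\boldsymbol{x}_0}\|_2$, and then feed this into the lower Lipschitz inequality using the high-probability estimate of $L_{\boldsymbol{A}}$ supplied by Theorem \ref{lem: large_distance_2}. First I would use that $\hat{\boldsymbol{x}}$ is a minimizer of $\boldsymbol{x}\mapsto\|\abs{\boldsymbol{A}\boldsymbol{x}}-\abs{\boldsymbol{A}\boldsymbol{x}_0}-\boldsymbol{\eta}\|_2$ over $\mathbb{H}^d$: testing the feasible point $\boldsymbol{x}=\boldsymbol{x}_0$ gives $\|\abs{\boldsymbol{A}\hat{\boldsymbol{x}}}-\abs{\boldsymbol{A}\boldsymbol{x}_0}-\boldsymbol{\eta}\|_2\le\|\boldsymbol{\eta}\|_2$, and then the triangle inequality yields $\|\abs{\boldsymbol{A}\hat{\boldsymbol{x}}}-\abs{\boldsymbol{A}\boldsymbol{x}_0}\|_2\le 2\|\boldsymbol{\eta}\|_2$.

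Next I would apply the lower Lipschitz bound \eqref{Lipschitz} (i.e.\ the definition of $L_{\boldsymbol{A}}$) to the pair $(\hat{\boldsymbol{x}},\boldsymbol{x}_0)$, getting $L_{\boldsymbol{A}}\cdot\dist(\hat{\boldsymbol{x}},\boldsymbol{x}_0)\le\|\abs{\boldsymbol{A}\hat{\boldsymbol{x}}}-\abs{\boldsymbol{A}\boldsymbol{x}_0}\|_2\le 2\|\boldsymbol{\eta}\|_2$, hence $\dist(\hat{\boldsymbol{x}},\boldsymbol{x}_0)\le 2\|\boldsymbol{\eta}\|_2/L_{\boldsymbol{A}}$. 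It then remains to lower bound $L_{\boldsymbol{A}}$ with the right shape, and for this I would invoke Theorem \ref{lem: large_distance_2} with the rescaled parameter $\delta':=\delta/\beta_0^{\mathbb H}$, which satisfies $0<\delta'<\delta<0.05$. That theorem gives, on an event of probability at least $1-4\exp(-c_2(\delta')^{2}m)$, the estimate $L_{\boldsymbol{A}}\ge\bigl(1/\beta_0^{\mathbb H}-\delta'\bigr)\sqrt{m}=\tfrac{1-\delta}{\beta_0^{\mathbb H}}\sqrt{m}$, the point of the rescaling being that $1/\beta_0^{\mathbb H}-\delta'$ collapses exactly to $(1-\delta)/\beta_0^{\mathbb H}$. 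Substituting into the previous line produces precisely $\dist(\hat{\boldsymbol{x}},\boldsymbol{x}_0)\le\tfrac{2\beta_0^{\mathbb H}}{1-\delta}\cdot\tfrac{\|\boldsymbol{\eta}\|_2}{\sqrt{m}}$.

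Finally I would reconcile the quantifiers. Since $\beta_0^{\mathbb H}$ is a fixed absolute constant (bounded by $\sqrt{4/(4-\pi)}$), the sample-size requirement $m\gtrsim\log(1/\delta')(\delta')^{-2}d$ of Theorem \ref{lem: large_distance_2} is subsumed by $m\gtrsim\log(1/\delta)\delta^{-2}d$; moreover this condition forces $\delta^{2}m\gtrsim 1$, so the failure probability $4\exp(-c_2(\delta')^{2}m)$ can be rewritten in the form $2\exp(-c\delta^{2}m)$ for a suitable universal $c>0$, absorbing both the constant $\beta_0^{\mathbb H}$ and the factor $4=2\cdot 2$ into $c$.

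I do not expect a genuine obstacle here: the substance is entirely contained in Theorem \ref{lem: large_distance_2}, which is already available, and the remaining argument is the standard stability-implies-error-bound chain. The only point requiring a little care is the constant bookkeeping just described — choosing $\delta'$ so that the lower bound on $L_{\boldsymbol{A}}$ matches the target denominator $1-\delta$, and verifying that replacing $\delta$ by $\delta/\beta_0^{\mathbb H}$ worsens neither the hypotheses nor the stated sample complexity and failure probability by more than universal factors.
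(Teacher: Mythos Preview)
Your proposal is correct and follows essentially the same route as the paper: the paper also derives $\|\abs{\boldsymbol{A}\hat{\boldsymbol{x}}}-\abs{\boldsymbol{A}\boldsymbol{x}_0}\|_2\le 2\|\boldsymbol{\eta}\|_2$ from optimality plus the triangle inequality, then applies Theorem~\ref{lem: large_distance_2} with the rescaled parameter $\delta_1=\delta/\beta_0^{\mathbb H}$ (your $\delta'$) to obtain the lower Lipschitz bound $L_{\boldsymbol{A}}\ge \frac{1-\delta}{\beta_0^{\mathbb H}}\sqrt{m}$ and conclude. Your handling of the constant bookkeeping (range of $\delta'$, sample-size requirement, and absorbing the factor $4$ into the exponent) is exactly what is needed and matches the paper's implicit reasoning.
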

\begin{proof}
Note that
\[
\|\abs{\boldsymbol{A}\hat{\boldsymbol{x}}}-\abs{\boldsymbol{A}\boldsymbol{x}_0}\|_2-\|\boldsymbol{\eta}\|_2 \leq \|\abs{\boldsymbol{A}\hat{\boldsymbol{x}}}-\abs{\boldsymbol{A}\boldsymbol{x}_0}-\boldsymbol{\eta}\|_2 \overset{(a)}{\leq} \|\boldsymbol{\eta}\|_2,
\]
which implies 
\begin{equation}\label{eq:Ahatx}
\|\abs{\boldsymbol{A}\hat{\boldsymbol{x}}}-\abs{\boldsymbol{A}\boldsymbol{x}_0}\|_2 \leq 2\|\boldsymbol{\eta}\|_2.
\end{equation}
Here, the inequality $(a)$ follows from the definition of  $\hat{\boldsymbol{x}}$.
By Theorem \ref{lem: large_distance_2} we obtain that, for any $\delta_1\in (0,\frac{0.05}{\beta_0^{\mathbb H}})$, the following holds
 with probability at least $1-2\exp(-c'\delta_1^2m)$:
\begin{equation}\label{eq:LAlower}
{(\frac{1}{\beta_0^{\mathbb{H}}}-\delta_1)}\cdot   {\text{dist}(\boldsymbol{x},\boldsymbol{y})} \,\,\leq \,\, 
{\frac{1}{\sqrt{m}}\cdot \||\boldsymbol{A}\boldsymbol{x}|-|\boldsymbol{A}\boldsymbol{y}|\|_2}
\end{equation}
for all $\boldsymbol{x}, \boldsymbol{y}\in {\mathbb H}^d$, where $c'$ is a universal constant.
Combining (\ref{eq:Ahatx}) and 
(\ref{eq:LAlower}), we obtain that
\[
\dist(\hat{\boldsymbol{x}}, \boldsymbol{x}_0)\,\,\leq\,\, \frac{2\beta_0^{\mathbb H}}{1-\delta}\cdot \frac{\|\boldsymbol{\eta}\|_2}{\sqrt{m}}.
\]
Here, $\delta:=\delta_1\cdot \beta_0^{\mathbb H}$.
\end{proof}

 \begin{remark}
 In \cite{HX}, it has been established that for the real case, the inequality $\dist(\hat{\boldsymbol{x}}, \boldsymbol{x}_0)\leq C_0\cdot \frac{\|\boldsymbol{\eta}\|_2}{\sqrt{m}} $ holds. Similarly, in the complex case, the corresponding inequality was derived in \cite{Xia}. 
 However, neither \cite{HX} nor \cite{Xia} provide the specific value of $C_0$.
 In our work, by comparing the results presented in \cite{HX,Xia}, we present the precise value of the constant $C_0$ in Corollary \ref{co:budeng}, which is given by $C_0=\frac{2\beta_0^{\mathbb H}}{1-\delta}$. 
  It is noteworthy that as $\delta$ approaches $0$, the constant $C_0$ converges to $2\beta_0^{\mathbb H}$.
  It would be interesting to investigate whether the constant $2\beta_0^{\mathbb H}$ is the tightest possible value.
  \end{remark}

\subsection{Proof of Theorem \ref{lem: large_distance_2}}

To begin with, we present some preliminary results that will aid in proving Theorem \ref{lem: large_distance_2}. The proof of  Lemma \ref{lem: large_distance_1} and Lemma \ref{expectation_result_new} are postponed to Appendix \ref{appendix B}. We believe that the result of Lemma \ref{expectation_result_new} is of independent interest.

\begin{lem}\label{lem: large_distance_1}
Assume that $\boldsymbol{A}=(\boldsymbol{a}_1,\ldots,\boldsymbol{a}_m)^*\in \mathbb{H}^{m\times d}$ is a standard Gaussian random matrix, where $\mathbb{H}=\mathbb{R}$ or $\mathbb{C}$. 
For any $0<\delta<1$, there exist a universal positive constant $c_3$, such that for $m\gtrsim \log(1/\delta)\delta^{-2}d$, with probability at least $1-4\exp(-c_3\delta ^2m)$, the following holds
\begin{equation}\label{xufeb2024-3}
\frac{1}{m}  \sum_{k=1}^{m}|\boldsymbol{y}^*\boldsymbol{a}_k\boldsymbol{a}_k^*\boldsymbol{x}|
\leq \mathbb{E} |\boldsymbol{y}^*\boldsymbol{a}_1\boldsymbol{a}_1^*\boldsymbol{x}|+\delta
\end{equation}
for all unit-norm vectors $\boldsymbol{x},\boldsymbol{y}\in\mathbb{S}_{\mathbb{H}}^{d-1}$.
\end{lem}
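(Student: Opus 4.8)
\textbf{Proof proposal for Lemma \ref{lem: large_distance_1}.}

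The plan is to prove the concentration inequality \eqref{xufeb2024-3} via a uniform (over the sphere) bound, combining a pointwise tail estimate with an $\epsilon$-net argument and a Lipschitz continuity estimate. First I would fix a pair of unit vectors $\boldsymbol{x},\boldsymbol{y}\in\mathbb{S}_{\mathbb{H}}^{d-1}$ and study the random variable $Z_k:=|\boldsymbol{y}^*\boldsymbol{a}_k\boldsymbol{a}_k^*\boldsymbol{x}| = |\langle\boldsymbol{a}_k,\boldsymbol{x}\rangle|\cdot|\langle\boldsymbol{a}_k,\boldsymbol{y}\rangle|$. Each $Z_k$ is a product of two (correlated) subgaussian-magnitude quantities, hence subexponential with an Orlicz-$\psi_1$ norm bounded by an absolute constant, uniformly in $\boldsymbol{x},\boldsymbol{y}$; the key observation is that by the decomposition $\boldsymbol{y}=\langle\boldsymbol{y},\boldsymbol{x}\rangle\boldsymbol{x}+\boldsymbol{y}^\perp$ the distribution of $(\langle\boldsymbol{a}_k,\boldsymbol{x}\rangle,\langle\boldsymbol{a}_k,\boldsymbol{y}\rangle)$ depends only on $\langle\boldsymbol{x},\boldsymbol{y}\rangle$. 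Bernstein's inequality for subexponential random variables then gives, for each fixed pair,
\begin{equation*}
\PP\Big(\frac1m\sum_{k=1}^m Z_k > \E Z_1 + t\Big)\leq 2\exp\big(-c\,m\min\{t^2,t\}\big),
\end{equation*}
which for $t=\delta/2<1$ becomes $2\exp(-c'\delta^2 m)$.

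Next I would pass from a fixed pair to all pairs simultaneously. Take $\epsilon$-nets $\mathcal{N}$ of $\mathbb{S}_{\mathbb{H}}^{d-1}$ with $|\mathcal{N}|\leq (1+2/\epsilon)^{\dim_{\R}}$, so $|\mathcal{N}|^2 \leq \exp(C d\log(1/\epsilon))$. Union-bounding the pointwise estimate over $\mathcal{N}\times\mathcal{N}$, the failure probability is at most $\exp(Cd\log(1/\epsilon) - c'\delta^2 m)$, which is $\leq 4\exp(-c_3\delta^2 m)$ once $m\gtrsim \log(1/\epsilon)\delta^{-2}d$; choosing $\epsilon$ to be a small multiple of $\delta$ (so $\log(1/\epsilon)\asymp\log(1/\delta)$) yields exactly the stated sample-complexity condition $m\gtrsim\log(1/\delta)\delta^{-2}d$. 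For the transfer from net points to arbitrary points I need two ingredients: (i) the map $(\boldsymbol{x},\boldsymbol{y})\mapsto \frac1m\sum_k Z_k$ is Lipschitz on the sphere with constant controlled by $\frac1m\sum_k\|\boldsymbol{a}_k\|_2^2$, which is $O(d)$ with overwhelming probability (or, more sharply, one uses $|\,|u_1v_1|-|u_2v_2|\,|\le |u_1||v_1-v_2|+|v_2||u_1-u_2|$ together with $\frac1m\sum_k|\langle\boldsymbol{a}_k,\boldsymbol{u}\rangle|^2 \le 1+\delta$ from Lemma \ref{complex_fund1}); and (ii) the same control on $\E Z_1$ as a function of $\langle\boldsymbol{x},\boldsymbol{y}\rangle$, which is elementary. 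Choosing $\epsilon\lesssim\delta/d$ (still $\log(1/\epsilon)\asymp\log(d/\delta)\lesssim\log(1/\delta)$ in the relevant regime, or absorb the extra $\log d$ into constants) makes the discretization error below $\delta/2$.

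The main obstacle I anticipate is handling the discretization cleanly: the naive Lipschitz constant of $\frac1m\sum_k Z_k$ involves $\frac1m\sum_k\|\boldsymbol{a}_k\|_2^2\approx d$, which would force $\epsilon\lesssim\delta/d$ and a spurious $\log d$ factor. The fix is to bound the increment more carefully — writing $Z_k(\boldsymbol{x},\boldsymbol{y})-Z_k(\boldsymbol{x}',\boldsymbol{y}')$ as above and applying Cauchy–Schwarz across $k$ so that the error is controlled by $\big(\frac1m\sum_k|\langle\boldsymbol{a}_k,\boldsymbol{x}\rangle|^2\big)^{1/2}\cdot\big(\frac1m\sum_k|\langle\boldsymbol{a}_k,\boldsymbol{y}-\boldsymbol{y}'\rangle|^2\big)^{1/2}$, each factor being $O(1)$ by the restricted isometry estimate of Lemma \ref{complex_fund1} applied on the event of its validity. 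This keeps $\epsilon$ a fixed multiple of $\delta$ and produces precisely the claimed probability $1-4\exp(-c_3\delta^2 m)$ under $m\gtrsim\log(1/\delta)\delta^{-2}d$. Finally, since Lemma \ref{expectation_result_new} (proved separately) supplies the explicit value of $\E|\boldsymbol{y}^*\boldsymbol{a}_1\boldsymbol{a}_1^*\boldsymbol{x}|$, the bound \eqref{xufeb2024-3} is exactly what is needed downstream to extract an upper bound on $M_{\boldsymbol{A}}$ and hence the lower bound \eqref{xulem2024-8} on $L_{\boldsymbol{A}}$.
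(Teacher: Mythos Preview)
Your proposal is correct and follows essentially the same route as the paper: pointwise Bernstein for the subexponential variables $|\boldsymbol{y}^*\boldsymbol{a}_k\boldsymbol{a}_k^*\boldsymbol{x}|$, an $\epsilon$-net with $\epsilon$ a fixed multiple of $\delta$, and the key discretization step handled via the increment bound $|\,|u_1v_1|-|u_2v_2|\,|\le |u_1||v_1-v_2|+|v_2||u_1-u_2|$ followed by Cauchy--Schwarz across $k$ and Lemma~\ref{complex_fund1}. Your ``obstacle'' paragraph identifies precisely the refinement the paper uses to avoid the spurious $d$-dependence, and your control of $\E Z_1$ as a Lipschitz function of $(\boldsymbol{x},\boldsymbol{y})$ matches the paper's treatment in \eqref{xufeb2024-5}.
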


\begin{proof}
The proof is presented in Appendix \ref{Proof of lem: large_distance_1}.
\end{proof}

\begin{lem}\label{expectation_result_new}
Assume that $\boldsymbol{a}\in \mathbb{H}^{d}$ is a standard Gaussian random vector, where $\mathbb{H}=\mathbb{R}$ or $\mathbb{C}$. Let $\beta_0^{\mathbb{H}}$ be the constant defined in \eqref{eq:mdlower-new}.
Then for any two unit-norm vectors $\boldsymbol{x},\boldsymbol{y}\in \mathbb{S}_{\mathbb{H}}^{d-1}$, we have
\begin{equation*}
\mathbb{E} |\boldsymbol{y}^*\boldsymbol{a}\boldsymbol{a}^*\boldsymbol{x}| 
\leq 1-\frac{\mathrm{dist}^2(\boldsymbol{x},\boldsymbol{y}) }{2\cdot (\beta_0^{\mathbb{H}})^2}=\begin{cases}
1-(\frac{1}{2}-\frac{1}{\pi})\cdot \mathrm{dist}^2(\boldsymbol{x},\boldsymbol{y}) & \text{if $\mathbb{H}=\mathbb{R}$,}\\
1-(\frac{1}{2}-\frac{\pi}{8})\cdot \mathrm{dist}^2(\boldsymbol{x},\boldsymbol{y}) & \text{if $\mathbb{H}=\mathbb{C}$.}
\end{cases}
\end{equation*}

\end{lem}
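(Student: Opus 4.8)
\textbf{Proof proposal for Lemma \ref{expectation_result_new}.}
The plan is to reduce the computation of $\mathbb{E}|\boldsymbol{y}^*\boldsymbol{a}\boldsymbol{a}^*\boldsymbol{x}|$ to a two-dimensional problem and then evaluate the resulting Gaussian integral exactly. First I would note that, by rotational invariance of the standard Gaussian distribution on $\mathbb{H}^d$, the quantity $\mathbb{E}|\boldsymbol{y}^*\boldsymbol{a}\boldsymbol{a}^*\boldsymbol{x}|$ depends only on the Gram matrix of $\{\boldsymbol{x},\boldsymbol{y}\}$; hence one may assume $\boldsymbol{x},\boldsymbol{y}$ lie in a common two-dimensional subspace, and $\boldsymbol{a}$ may be replaced by its projection onto that subspace, i.e. a standard Gaussian vector in $\mathbb{H}^2$. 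Writing $u=\boldsymbol{a}^*\boldsymbol{x}$ and $v=\boldsymbol{a}^*\boldsymbol{y}$, the pair $(u,v)$ is a (real or complex) centered Gaussian vector with covariance determined by $\langle\boldsymbol{x},\boldsymbol{x}\rangle=\langle\boldsymbol{y},\boldsymbol{y}\rangle=1$ and the inner product $\langle\boldsymbol{x},\boldsymbol{y}\rangle=:\rho$, and we must compute $\mathbb{E}|uv|=\mathbb{E}(|u|\,|v|)$.

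Next I would relate $\mathrm{dist}^2(\boldsymbol{x},\boldsymbol{y})$ to $\rho$: by \eqref{dist-def2024}, $\mathrm{dist}^2(\boldsymbol{x},\boldsymbol{y})=\min_{|c|=1}\|\boldsymbol{x}-c\boldsymbol{y}\|_2^2 = 2-2|\langle\boldsymbol{x},\boldsymbol{y}\rangle|=2-2|\rho|$, so $|\rho|=1-\tfrac12\mathrm{dist}^2(\boldsymbol{x},\boldsymbol{y})$, and the target inequality becomes
\begin{equation*}
\mathbb{E}(|u|\,|v|)\,\,\leq\,\, 1-\frac{1-|\rho|}{(\beta_0^{\mathbb{H}})^2}.
\end{equation*}
In the real case, $(u,v)$ is bivariate normal with unit variances and correlation $\rho$; the expectation $\mathbb{E}(|u|\,|v|)$ has the classical closed form $\tfrac{2}{\pi}\big(\sqrt{1-\rho^2}+\rho\arcsin\rho\big)$, and one checks that this is bounded above by $1-\tfrac{2}{\pi}(1-|\rho|)=1-(1-|\rho|)/(\beta_0^{\mathbb{R}})^2$ for all $\rho\in[-1,1]$. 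In the complex case, $u$ and $v$ decompose into real and imaginary parts; after conditioning (or by a direct real $4$-dimensional Gaussian computation, or via the circularly-symmetric representation $v=\rho u+\sqrt{1-|\rho|^2}\,w$ with $w$ independent), one obtains a formula for $\mathbb{E}(|u|\,|v|)$ as a function of $|\rho|$, with value $\tfrac{\pi}{4}$ at $\rho=0$ and $1$ at $|\rho|=1$, and one verifies the bound $\mathbb{E}(|u|\,|v|)\le 1-\tfrac{4-\pi}{4}(1-|\rho|)=1-(1-|\rho|)/(\beta_0^{\mathbb{C}})^2$.

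The main obstacle is the exact evaluation and then the one-variable inequality verification for the resulting function of $\rho$ (equivalently of $\mathrm{dist}^2$), especially in the complex case where the closed form is less standard; the natural route is to express everything in terms of $s=|\rho|\in[0,1]$, show the function $g(s):=\mathbb{E}(|u|\,|v|)$ is convex (or otherwise dominated by its chord) on $[0,1]$ with $g(1)=1$, and then the claimed linear upper bound follows from $g(0)=1/(\beta_0^{\mathbb{H}})^2\cdot 0 + \ldots$, i.e. from matching the endpoints $g(0)=1-1/(\beta_0^{\mathbb{H}})^2$ and $g(1)=1$. Alternatively — and this is likely how the paper proceeds — one can avoid a separate computation altogether by invoking the identity already used in the proof of Theorem \ref{lowerbound-ndim}: taking expectations in the exact formula $\||\boldsymbol{A}\boldsymbol{x}|-|\boldsymbol{A}\boldsymbol{y}|\|_2^2 = \|\boldsymbol{A}\boldsymbol{x}\|_2^2+\|\boldsymbol{A}\boldsymbol{y}\|_2^2-2\sum_k|\boldsymbol{a}_k^*\boldsymbol{x}|\,|\boldsymbol{a}_k^*\boldsymbol{y}|$ and using $\mathbb{E}\|\boldsymbol{A}\boldsymbol{x}\|_2^2=m$ reduces the claim to the already-established bound \eqref{eqxu12:1} on $\max f$ evaluated along the relevant pair, after checking that the orthogonal representatives in $\mathcal{X}^{\mathbb{H}}$ realize the extremal configuration; I would pursue whichever of these two routes gives the cleaner verification, but expect the endpoint-matching argument for $g(s)$ to be the cleanest self-contained option.
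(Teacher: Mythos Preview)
Your reduction to a two-dimensional Gaussian pair and the real-case treatment coincide with the paper's: the paper parametrizes by the angle $\theta$ with $\cos\theta=\langle\boldsymbol{x},\boldsymbol{y}\rangle\ge 0$, obtains the same closed form $\tfrac{2}{\pi}\bigl(\sin\theta+(\tfrac{\pi}{2}-\theta)\cos\theta\bigr)$, and verifies the one-variable inequality by a derivative check. One arithmetic slip: $1/(\beta_0^{\mathbb{R}})^2=(\pi-2)/\pi=1-\tfrac{2}{\pi}$, not $\tfrac{2}{\pi}$, so the real target is $1-(1-\tfrac{2}{\pi})(1-|\rho|)$.

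The complex case is where you and the paper part ways. The paper does not attempt a closed form or a convexity argument; it writes the $\mathbb{R}^4$ Gaussian integral in polar coordinates, reduces $h(\theta):=\mathbb{E}|\boldsymbol{y}^*\boldsymbol{a}\boldsymbol{a}^*\boldsymbol{x}|$ to a surface integral
\[
h(\theta)=\frac{1}{4\pi}\iint_{\mathbb{S}^2}\sqrt{(1+x\cos\theta)^2-y^2\sin^2\theta}\,\mathrm{d}S,
\]
and then proves the pointwise integrand bound $\sqrt{(1+x\cos\theta)^2-y^2\sin^2\theta}\le\cos\theta\,(1+x)+(1-\cos\theta)\sqrt{1-y^2}$ on $\mathbb{S}^2$; integrating yields $h(\theta)\le\cos\theta+\tfrac{\pi}{4}(1-\cos\theta)$ directly. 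Your convexity route is viable in principle---the closed form is $g(s)=\tfrac{\pi}{4}\,{}_2F_1(-\tfrac12,-\tfrac12;1;s^2)$, whose power series in $s$ has only non-negative coefficients, so $g$ is convex and lies below its chord from $g(0)=\tfrac{\pi}{4}$ to $g(1)=1$---but you have not supplied this computation, and it is not shorter than the paper's argument.

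The alternative you float at the end does not work. The inequality \eqref{eqxu12:1} in the proof of Theorem~\ref{lowerbound-ndim} is a lower bound on $\max_{(\theta,\alpha,\beta)} f$ for a \emph{fixed} frame $\{\boldsymbol{a}_i\}$, obtained by averaging over the orthogonal pair $(\boldsymbol{x},\boldsymbol{y})\in\mathcal X^{\mathbb H}$; it says nothing about $\mathbb{E}_{\boldsymbol{a}}|\boldsymbol{a}^*\boldsymbol{x}||\boldsymbol{a}^*\boldsymbol{y}|$ for a fixed, generally non-orthogonal pair and a random Gaussian $\boldsymbol{a}$. At best that connection recovers the orthogonal endpoint $|\rho|=0$, where the lemma is an equality by direct computation; the content of Lemma~\ref{expectation_result_new} is precisely the interpolation to $|\rho|\in(0,1)$, which \eqref{eqxu12:1} does not address.
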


\begin{proof}
The proof is presented in Appendix \ref{Proof of expectation_result_new}.
\end{proof}

Now we can present a proof of Theorem \ref{lem: large_distance_2}.

\begin{proof}[Proof of Theorem \ref{lem: large_distance_2}]
Based on Lemma \ref{upperbound-orth}, we have
\begin{equation*}
(L_{\boldsymbol{A}})^2=\min_{\substack{\boldsymbol{x}\in \mathbb{H}^{d},\boldsymbol{y}\in \mathbb{H}^{d} \\ \|\boldsymbol{x}\|_2=1, \|\boldsymbol{y}\|_2\leq 1,\langle \boldsymbol{x},\boldsymbol{y}\rangle=0  }}\frac{\||\boldsymbol{A}\boldsymbol{x}|-|\boldsymbol{A}\boldsymbol{y}|\|_2^2}{\mathrm{dist}^2(\boldsymbol{x},\boldsymbol{y})},
\end{equation*}
so it is enough to estimate the minimum possible ratio between $\||\boldsymbol{A}\boldsymbol{x}|-|\boldsymbol{A}\boldsymbol{y}|\|_2^2$ and $\mathrm{dist}^2(\boldsymbol{x},\boldsymbol{y})$ when $\|\boldsymbol{x}\|_2=1$, $\|\boldsymbol{y}\|_2\leq 1$ and $\langle \boldsymbol{x},\boldsymbol{y}\rangle=0$.

We first consider the case when $\|\boldsymbol{x}\|_2=1$ and $\boldsymbol{y}=\boldsymbol{0}$. In this case,
we use Lemma \ref{complex_fund1} to obtain that for any $\delta_1\in(0,0.01)$, the following 
\begin{equation}\label{eqmayxu4}
\frac{\frac{1}{m}\||\boldsymbol{A}\boldsymbol{x}|-|\boldsymbol{A}\boldsymbol{y}|\|_2^2}{\text{dist}^2(\boldsymbol{x},\boldsymbol{y})}=\frac{1}{m}\frac{\|\boldsymbol{A}\boldsymbol{x}\|_2^2}{\|\boldsymbol{x}\|_2^2}\geq 1-\delta_1,\quad \forall \boldsymbol{x}\in\mathbb{S}_\mathbb{H}^{d-1}
\end{equation}
holds with probability at least $1-2\exp(-c_1\delta_1^2m)$, provided that $m\gtrsim \delta_1^{-2}d$. Here, $c_1$ is the universal positive constant in Lemma \ref{complex_fund1}.

Next, we consider the case when $\|\boldsymbol{x}\|_2=1$, $\|\boldsymbol{y}\|_2\in(0,1]$ and $\langle \boldsymbol{x},\boldsymbol{y}\rangle=0$. We claim that for any $\delta_1\in(0,0.01)$,  if $m\gtrsim  \delta_1^{-2}d$ then with probability at least $1-2\exp(-c_1\delta_1^2m)$, it holds that
\begin{equation}\label{eqmayxu2}
\frac{\frac{1}{m}\||\boldsymbol{A}\boldsymbol{x}|-|\boldsymbol{A}\boldsymbol{y}|\|_2^2}{\text{dist}^2(\boldsymbol{x},\boldsymbol{y})}\geq \frac{\frac{1}{m}\||\boldsymbol{A}{\boldsymbol{x}}|-|\boldsymbol{A}\widetilde{\boldsymbol{y}}|\|_2^2}{\text{dist}^2({\boldsymbol{x}},\widetilde{\boldsymbol{y}})}-2\delta_1	
\end{equation}
for all $\boldsymbol{x},\boldsymbol{y}\in\mathbb{H}^{d}$ with $\|\boldsymbol{x}\|_2=1$, $\|\boldsymbol{y}\|_2\in(0,1]$ and $\langle \boldsymbol{x},\boldsymbol{y}\rangle=0$.
Here, $\widetilde{\boldsymbol{y}}:=\frac{\boldsymbol{y}}{\|\boldsymbol{y}\|_2}\in\mathbb{S}_{\mathbb{H}}^{d-1}$ and $c_1$ is the universal positive constant in Lemma \ref{complex_fund1}. 
Using Lemma \ref{complex_fund1}, we have
\begin{equation*}
\frac{1}{m}\||\boldsymbol{A}\boldsymbol{x}|-|\boldsymbol{A}\widetilde{\boldsymbol{y}}|\|_2^2
=\frac{1}{m}\|\boldsymbol{A}\boldsymbol{x}\|_2^2+ \frac{1}{m}\|\boldsymbol{A}\widetilde{\boldsymbol{y}}\|_2^2-\frac{2}{m}  \sum_{k=1}^{m}|\widetilde{\boldsymbol{y}}^*\boldsymbol{a}_k\boldsymbol{a}_k^*\boldsymbol{x}|\geq 2-2\delta_1-\frac{2}{m}  \sum_{k=1}^{m}|\widetilde{\boldsymbol{y}}^*\boldsymbol{a}_k\boldsymbol{a}_k^*\boldsymbol{x}|.
\end{equation*}
By Lemma \ref{lem: large_distance_1} and Lemma \ref{expectation_result_new}, we further obtain that with probability at least $1-4\exp(-c'\delta_1^2 m)$,
\begin{equation}\label{eqmayxu1}
\begin{aligned}
\frac{1}{m}\||\boldsymbol{A}\boldsymbol{x}|-|\boldsymbol{A}\widetilde{\boldsymbol{y}}|\|_2^2
& \geq 2-2\delta_1-2\cdot \Big( \mathbb{E} |\widetilde{\boldsymbol{y}}^*\boldsymbol{a}_1\boldsymbol{a}_1^*\boldsymbol{x}|+\delta_1\Big)\\
&\geq 2-2\delta_1-2\cdot \Big( 1-\frac{\mathrm{dist}^2(\boldsymbol{x},\widetilde{\boldsymbol{y}}) }{2\cdot (\beta_0^{\mathbb{H}})^2}+\delta_1\Big)
\overset{(a)}=(\frac{1}{ (\beta_0^{\mathbb{H}})^2}-2\delta_1)\cdot \mathrm{dist}^2(\boldsymbol{x},\widetilde{\boldsymbol{y}}),		
\end{aligned}
\end{equation}
provided that $m\gtrsim \log(1/\delta_1)\delta_1^{-2}d$, where $c'$ is a universal  positive constant. 
Here, in equation ($a$) we use the fact that $\mathrm{dist}^2(\boldsymbol{x},\widetilde{\boldsymbol{y}})=\|\boldsymbol{x}-\widetilde{\boldsymbol{y}}\|_2^2=2$, because $\langle \boldsymbol{x},\widetilde{\boldsymbol{y}}\rangle =0$ and $\|\boldsymbol{x}\|_2=\|\widetilde{\boldsymbol{y}}\|_2=1$. 
Combining \eqref{eqmayxu1} and \eqref{eqmayxu2} we have
\begin{equation}\label{eqmayxu3}
\frac{\frac{1}{m}\||\boldsymbol{A}\boldsymbol{x}|-|\boldsymbol{A}\boldsymbol{y}|\|_2^2}{\text{dist}^2(\boldsymbol{x},\boldsymbol{y})}
\geq \frac{\frac{1}{m}\||\boldsymbol{A}{\boldsymbol{x}}|-|\boldsymbol{A}\widetilde{\boldsymbol{y}}|\|_2^2}{\text{dist}^2({\boldsymbol{x}},\widetilde{\boldsymbol{y}})}-2\delta_1 
\geq\frac{1}{ (\beta_0^{\mathbb{H}})^2}-4\delta_1
\end{equation}
for all $\boldsymbol{x},\boldsymbol{y}\in\mathbb{H}^{d}$ with $\|\boldsymbol{x}\|_2=1$, $\|\boldsymbol{y}\|_2\in(0,1]$ and $\langle \boldsymbol{x},\boldsymbol{y}\rangle=0$.
Combining \eqref{eqmayxu4} and \eqref{eqmayxu3}, we obtain that, for any $\delta_1\in(0,0.01)$,
\begin{equation*}
\frac{1}{\sqrt{m}}\cdot L_{\BA}
\geq  \sqrt{\frac{1}{(\beta_0^{\mathbb{H}})^2}-4\delta_1}
\overset{(b)}\geq \frac{1}{\beta_0^{\mathbb{H}}}-5\delta_1
\end{equation*}
holds with probability at least $1-4\exp(-c''\delta_1^2 m)$ for a universal positive constant $c''$. Here, in inequality ($b$) we use $\delta_1<0.01$.
Taking $\delta=5\cdot  \delta_1$, we arrive at our conclusion.

It remains to prove \eqref{eqmayxu2}. Note that if $\langle \boldsymbol{x},\boldsymbol{y}\rangle =0$ then we have $\text{dist}^2(\boldsymbol{x},\boldsymbol{y})=\|\boldsymbol{x}-\boldsymbol{y}\|_2^2=1+\|\boldsymbol{y}\|_2^2$ and $\text{dist}^2({\boldsymbol{x}},\widetilde{\boldsymbol{y}})=\|{\boldsymbol{x}}-\widetilde{\boldsymbol{y}}\|_2^2=2$, where $\widetilde{\boldsymbol{y}}=\frac{\boldsymbol{y}}{\|\boldsymbol{y}\|_2}$. 
Also note that for each $1\leq k\leq m$ we have
\begin{equation}\label{similar-cal}
(|\boldsymbol{a}_k^*\boldsymbol{x}|-|\boldsymbol{a}_k^*\boldsymbol{y}|)^2=|\boldsymbol{a}_k^*(\boldsymbol{x}-\boldsymbol{y})|^2	-2\cdot \big(|\boldsymbol{x}^*\boldsymbol{a}_k\boldsymbol{a}_k^*\boldsymbol{y}|-\mathrm{Re}(\boldsymbol{x}^*\boldsymbol{a}_k\boldsymbol{a}_k^*\boldsymbol{y})\big).
\end{equation}
Therefore, a direct calculation shows that for all $\boldsymbol{x},\boldsymbol{y}\in\mathbb{H}^{d}$ with $\|\boldsymbol{x}\|_2=1$, $\|\boldsymbol{y}\|_2\in(0,1]$ and $\langle \boldsymbol{x},\boldsymbol{y}\rangle=0$,
\begin{equation*}
\begin{aligned}
\frac{\frac{1}{m}\||\boldsymbol{A}\boldsymbol{x}|-|\boldsymbol{A}\boldsymbol{y}|\|_2^2}{\text{dist}^2(\boldsymbol{x},\boldsymbol{y})}
&
=\frac{1}{m}\sum_{k=1}^m\frac{|\boldsymbol{a}_k^*(\boldsymbol{x}-\boldsymbol{y})|^2}{\|\boldsymbol{x}-\boldsymbol{y}\|_2^2}-
\frac{2}{m}\sum_{k=1}^m\frac{|\boldsymbol{x}^*\boldsymbol{a}_k\boldsymbol{a}_k^*\boldsymbol{y}|-\mathrm{Re}(\boldsymbol{x}^*\boldsymbol{a}_k\boldsymbol{a}_k^*\boldsymbol{y})}{1+\|\boldsymbol{y}\|_2^2}\\
&=\frac{1}{m}\sum_{k=1}^m\frac{|\boldsymbol{a}_k^*(\boldsymbol{x}-\boldsymbol{y})|^2}{\|\boldsymbol{x}-\boldsymbol{y}\|_2^2}-
\frac{2}{m}\sum_{k=1}^m\frac{|{\boldsymbol{x}}^*\boldsymbol{a}_k\boldsymbol{a}_k^*\widetilde{\boldsymbol{y}}|-\mathrm{Re}({\boldsymbol{x}}^*\boldsymbol{a}_k\boldsymbol{a}_k^*\widetilde{\boldsymbol{y}})}{\frac{1}{\|\boldsymbol{y}\|_2}+\|\boldsymbol{y}\|_2}\\
&\overset{(a)}\geq \frac{1}{m}\sum_{k=1}^m\frac{|\boldsymbol{a}_k^*(\boldsymbol{x}-\boldsymbol{y})|^2}{\|\boldsymbol{x}-\boldsymbol{y}\|_2^2}-
\frac{2}{m}\sum_{k=1}^m\frac{|{\boldsymbol{x}}^*\boldsymbol{a}_k\boldsymbol{a}_k^*\widetilde{\boldsymbol{y}}|-\mathrm{Re}({\boldsymbol{x}}^*\boldsymbol{a}_k\boldsymbol{a}_k^*\widetilde{\boldsymbol{y}})}{2}\\
&\overset{(b)}=\frac{1}{m}\sum_{k=1}^m\frac{|\boldsymbol{a}_k^*(\boldsymbol{x}-\boldsymbol{y})|^2}{\|\boldsymbol{x}-\boldsymbol{y}\|_2^2}-
\frac{1}{m}\sum_{k=1}^m\frac{|\boldsymbol{a}_k^*({\boldsymbol{x}}-\widetilde{\boldsymbol{y}})|^2}{\|{\boldsymbol{x}}-\widetilde{\boldsymbol{y}}\|_2^2}+
\frac{\frac{1}{m}\||\boldsymbol{A}{\boldsymbol{x}}|-|\boldsymbol{A}\widetilde{\boldsymbol{y}}|\|_2^2}{\text{dist}^2({\boldsymbol{x}},\widetilde{\boldsymbol{y}})}\\
&\overset{(c)}\geq (1-\delta_1)-(1+\delta_1)+\frac{\frac{1}{m}\||\boldsymbol{A}{\boldsymbol{x}}|-|\boldsymbol{A}\widetilde{\boldsymbol{y}}|\|_2^2}{\text{dist}^2({\boldsymbol{x}},\widetilde{\boldsymbol{y}})}
=\frac{\frac{1}{m}\||\boldsymbol{A}{\boldsymbol{x}}|-|\boldsymbol{A}\widetilde{\boldsymbol{y}}|\|_2^2}{\text{dist}^2({\boldsymbol{x}},\widetilde{\boldsymbol{y}})}-2\delta_1.
\end{aligned}	
\end{equation*}
Here, inequality ($a$) follows from Cauchy-Schwarz inequality. In equation ($b$) we use the fact that $\text{dist}^2({\boldsymbol{x}},\widetilde{\boldsymbol{y}})=\|{\boldsymbol{x}}-\widetilde{\boldsymbol{y}}\|_2^2=2$ and use a similar calculation in \eqref{similar-cal}. Inequality ($c$) follows from Lemma \ref{complex_fund1}, which holds with probability at least $1-2\exp(-c_1\delta_1^2m)$ when $m\gtrsim \delta_1^{-2}d$. Therefore, we arrive at \eqref{eqmayxu2}. This completes the proof.

\end{proof}

\section*{Acknowledgement}

The authors would like to thank Radu Balan for bringing \cite{AAFG24} to our attention and for providing many useful suggestions on an initial version of this work, resulting in significant improvements to the manuscript.

\bigskip \medskip

\Addresses

\newpage
\appendix
\counterwithin{equation}{section}

\section{Proof of Lemma \ref{prop-calculation}}\label{Proof of prop-calculation}

In this section we present the proof of  Lemma \ref{prop-calculation}. We first prove the following lemma.

\begin{lem}\label{calculate1}
Let $l\geq 0$ be an integer. Then, for any $\beta,\theta\in \R$,  we have
\begin{equation}\label{eq:Tm}
T_l(\beta,\theta):=\sum_{j=0}^{l}\sin( 2j\beta+2\theta) =\sin2\theta\Big(\frac{\sin((2l+1)\beta)}{2\sin(\beta)}+\frac{1}{2}\Big)
+\frac{\cos 2\theta\sin((l+1)\beta)\sin(l\beta)}{\sin(\beta)}.
\end{equation}

\end{lem}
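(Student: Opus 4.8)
The identity \eqref{eq:Tm} is a routine manipulation of finite trigonometric sums, so the plan is to reduce it to two classical closed-form summations. First I would expand each summand via the angle-addition formula $\sin(2j\beta+2\theta)=\sin2\theta\cos2j\beta+\cos2\theta\sin2j\beta$ and pull the fixed coefficients $\sin2\theta,\cos2\theta$ out of the (finite) sum, obtaining
\begin{equation*}
T_l(\beta,\theta)=\sin2\theta\cdot C_l(\beta)+\cos2\theta\cdot S_l(\beta),\qquad C_l(\beta):=\sum_{j=0}^{l}\cos2j\beta,\quad S_l(\beta):=\sum_{j=0}^{l}\sin2j\beta.
\end{equation*}
It then suffices to prove the two identities
\begin{equation*}
C_l(\beta)=\frac{\sin((2l+1)\beta)}{2\sin\beta}+\frac12,\qquad S_l(\beta)=\frac{\sin((l+1)\beta)\sin(l\beta)}{\sin\beta}\qquad(\sin\beta\neq0),
\end{equation*}
since substituting them back into the previous display reproduces the right-hand side of \eqref{eq:Tm} verbatim.

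To establish the two closed forms I would use the standard telescoping trick of multiplying by $2\sin\beta$. The product-to-sum identities $2\sin\beta\cos2j\beta=\sin((2j+1)\beta)-\sin((2j-1)\beta)$ and $2\sin\beta\sin2j\beta=\cos((2j-1)\beta)-\cos((2j+1)\beta)$ turn each sum into a telescoping one, leaving $2\sin\beta\,C_l(\beta)=\sin((2l+1)\beta)-\sin(-\beta)=\sin((2l+1)\beta)+\sin\beta$ and $2\sin\beta\,S_l(\beta)=\cos(-\beta)-\cos((2l+1)\beta)=\cos\beta-\cos((2l+1)\beta)$. Dividing by $2\sin\beta$ gives the formula for $C_l$ immediately, and for $S_l$ one further applies the sum-to-product identity $\cos\beta-\cos((2l+1)\beta)=2\sin((l+1)\beta)\sin(l\beta)$. (Equivalently one can sum the geometric series $\sum_{j=0}^{l}e^{2\mathrm{i}j\beta}=e^{\mathrm{i}l\beta}\sin((l+1)\beta)/\sin\beta$, multiply by $e^{2\mathrm{i}\theta}$ and take imaginary parts to get $T_l(\beta,\theta)=\sin(2\theta+l\beta)\sin((l+1)\beta)/\sin\beta$, then recover \eqref{eq:Tm} by expanding $\sin(2\theta+l\beta)$; a third option is a direct induction on $l$, adding $\sin(2(l+1)\beta+2\theta)$ to the inductive hypothesis.)

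The only point deserving a remark is the degenerate case $\sin\beta=0$, i.e.\ $\beta\in\pi\mathbb{Z}$, where the right-hand side of \eqref{eq:Tm} is to be read as its continuous extension in $\beta$ (both fractions have removable singularities). There the left-hand side equals $(l+1)\sin2\theta$ directly, since every summand is $\sin2\theta$; and the right-hand side tends to $(l+1)\sin2\theta$ as well, because $\sin((2l+1)\beta)/\sin\beta\to 2l+1$ and $\sin((l+1)\beta)\sin(l\beta)/\sin\beta\to 0$, so the identity extends by continuity. I do not anticipate any genuine obstacle in this lemma: it is purely computational, and the only real care needed is keeping the telescoping indices straight, which the derivation above already fixes.
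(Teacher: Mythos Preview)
Your proposal is correct and follows essentially the same approach as the paper: expand via the angle-addition formula and then invoke the closed forms for $\sum_{j=0}^{l}\cos 2j\beta$ and $\sum_{j=0}^{l}\sin 2j\beta$. The paper simply cites these as Lagrange's trigonometric identities (with $x=2\beta$) rather than deriving them by telescoping, and it does not discuss the $\sin\beta=0$ case, so your write-up is in fact slightly more complete.
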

\begin{proof}
By Lagrange's trigonometric identities we have
\begin{equation}\label{eq1:4}
\sum_{j=0}^{l}\cos( j\cdot x)=\frac{\sin(\frac{2l+1}{2}x)}{2\sin(\frac{x}{2})}+\frac{1}{2}\quad\text{and}\quad 
\sum_{j=0}^{l}\sin( j\cdot x)=\frac{\sin(\frac{l+1}{2}x)\cdot \sin(\frac{l}{2}x)}{\sin(\frac{x}{2})}.	
\end{equation}	
Note that 
\begin{equation*}
T_l(\beta,\theta)=\sum_{j=0}^{l}
\left(\sin 2j \beta\cos2\theta+\cos2j \beta\sin2\theta\right)=
\cos2\theta\sum_{j=0}^{l}\sin 2j \beta
+\sin2\theta\sum_{j=0}^{l}\cos2j \beta.
\end{equation*}
Substituting \eqref{eq1:4} with $x=2\beta$ into the above equation we arrive at our conclusion.
\end{proof}

Now we can present a proof of Lemma \ref{prop-calculation}.

\begin{proof}[Proof of Lemma \ref{prop-calculation}]
We divide the proof into two cases.

\textbf{Case 1: $m=2k$ is even.}
Note that $G_m(\theta)=G_m(\theta+\frac{\pi}{m})$ for all $\theta\in\mathbb{R}$. 
To prove the lemma, it suffices to find the maximum value of $G_m(\theta)$ when $\theta\in[0,\frac{\pi}{m}]$. Note that for each $\theta\in[0,\frac{\pi}{m}]$, we have
\begin{equation}\label{eq1:5}
\sin\Big(\frac{2j\pi}{m}+2\theta\Big)\begin{cases}
\geq 0 & \text{if $j\in \{0,\ldots,k-1\}$,}\\
\leq 0	& \text{if $j\in\{k,k+1,\ldots,2k-1\}$.}
\end{cases}	
\end{equation}
Therefore, we use Lemma \ref{calculate1} to obtain that
\begin{equation}\label{eqGeq:2}
G_m(\theta)=2T_{k-1}\Big(\frac{\pi}{m},\theta\Big)-T_{2k-1}\Big(\frac{\pi}{m},\theta\Big)=\frac{2\cos(2\theta-\frac{\pi}{m})}{\sin\frac{\pi}{m}},
\end{equation}
where $T_l(\beta,\theta)$ is defined as in \eqref{eq:Tm}.
Then we have
\begin{equation*}
\max_{0\leq \theta\leq \frac{\pi}{m}} G_m(\theta)=G_m\Big(\frac{\pi}{2m}\Big)=\frac{2}{\sin\frac{\pi}{m}}.	
\end{equation*}
Using the fact that $G_m(\theta)$ has a period of $\frac{\pi}{m}$, we arrive at the desired conclusion.

\textbf{Case 2: $m=2k+1$ is odd.}
A simple calculation shows that for each $j\in\mathbb{Z}$ and for each $\theta\in\mathbb{R}$, we have
$$\Big|\sin\Big(\frac{2j\pi}{m}+2(\theta+\frac{\pi}{2m})\Big)\Big|=\Big|\sin\Big(\frac{2(j-k)\pi}{m}+2\theta\Big)\Big|,$$
so one can easily check that $G_m(\theta)=G_m(\theta+\frac{\pi}{2m})$ for all $\theta\in\mathbb{R}$. 
Therefore, to prove the lemma, it suffices to find the maximum value of $G_m(\theta)$ when $\theta\in[0,\frac{\pi}{2m}]$.
For any $0\leq\theta\leq \frac{\pi}{2m}$, we have
\begin{equation}\label{eq2:5}
\sin\Big(\frac{2j\pi}{m}+2\theta\Big)\begin{cases}
\geq 0 & \text{if $j\in \{0,\ldots,k\}$,}\\
\leq 0	& \text{if $j\in\{k+1,k+2,\ldots,2k\}$,}
\end{cases}	
\end{equation}
Then,
\begin{equation}\label{eqGeq:5}
G_m(\theta)=2T_{k}\Big(\frac{\pi}{m},\theta\Big)-T_{2k}\Big(\frac{\pi}{m},\theta\Big)
=\zfrac{1}{\sin\frac{\pi}{2m}}\cdot \cos(2\theta- \frac{\pi}{2m}),
\end{equation}
where we use Lemma \ref{calculate1} in the last equation.
Therefore, we have
\begin{equation*}
\max_{0\leq \theta\leq \frac{\pi}{2m}} G_m(\theta)=G_m\Big(\frac{\pi}{4m}\Big)=\frac{1}{\sin\frac{\pi}{2m}}.
\end{equation*}
Using the fact that $G_m(\theta)$ has a period of $\frac{\pi}{2m}$, we arrive at the desired conclusion.

\end{proof}

\section{Proof of Lemma \ref{lem: large_distance_1} and Lemma \ref{expectation_result_new}}\label{appendix B}

\subsection{Proof of Lemma \ref{lem: large_distance_1}}\label{Proof of lem: large_distance_1}

We first prove Lemma \ref{lem: large_distance_1}. We need the following lemma.

\begin{lem}\cite{vershynin}\label{Bernstein} Let $\xi_1,\ldots,\xi_m$ be i.i.d. sub-exponential
random variables and $K:=\max_j \|\xi_j\|_{\psi_1}$, where
$\|\cdot\|_{\psi_1}:=\sup_{p\geq 1} p^{-1}(\mathbb{E}|\cdot|^p)^{1/p}$. Then for every $\epsilon>0$, we
have
\begin{equation*}
\mathbb{P}\left(\left|\frac{1}{m}\sum_{j=1}^m\xi_j-\frac{1}{m}\mathbb{E}\left(\sum_{j=1}^m\xi_j\right)\right|\geq \epsilon\right)\leq 2\exp\left(-c_0 m\min\left(\frac{\epsilon^2}{K^2},\frac{\epsilon}{K}\right)\right),
\end{equation*}
where $c_0>0$ is an absolute constant.
\end{lem}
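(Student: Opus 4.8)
The plan is to prove this standard Bernstein-type inequality by the Cramér--Chernoff (exponential moment) method, reducing the two-sided tail bound to a uniform control of the moment generating function of a centered sub-exponential variable. First I would reduce to the mean-zero case: setting $\eta_j := \xi_j - \mathbb{E}\xi_j$, the triangle inequality for $\|\cdot\|_{\psi_1}$ together with $|\mathbb{E}\xi_j|\lesssim \|\xi_j\|_{\psi_1}$ gives $\|\eta_j\|_{\psi_1}\le C_1 K$ for an absolute constant $C_1$, so after relabelling it suffices to bound $\mathbb{P}\big(|\tfrac1m\sum_j\eta_j|\ge\epsilon\big)$ for i.i.d. mean-zero variables with $\psi_1$-norm at most $K$, the change of constant being absorbed into $c_0$.

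The key step is the moment generating function estimate: there is an absolute constant $C$ such that for any mean-zero $\eta$ with $\|\eta\|_{\psi_1}\le K$ one has $\mathbb{E}[e^{\lambda\eta}]\le \exp(C\lambda^2K^2)$ whenever $|\lambda|\le 1/(CK)$. This follows by expanding $e^{\lambda\eta}=1+\lambda\eta+\sum_{p\ge 2}\lambda^p\eta^p/p!$, taking expectations so that the linear term vanishes by centering, bounding $\mathbb{E}|\eta|^p\le (pK)^p$ from the definition of the $\psi_1$-norm, using $p^p\le p!\,e^p$, and summing the resulting geometric series, which converges precisely because $|\lambda|$ is restricted to a ball of radius of order $1/K$.

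With this in hand I would apply the exponential Markov inequality and independence: for $0<\lambda\le 1/(CK)$,
\[
\mathbb{P}\Big(\tfrac1m\sum_{j=1}^m\eta_j\ge\epsilon\Big)\le e^{-\lambda m\epsilon}\prod_{j=1}^m\mathbb{E}[e^{\lambda\eta_j}]\le \exp\big(m(C\lambda^2K^2-\lambda\epsilon)\big).
\]
Optimizing the quadratic in $\lambda$ gives the unconstrained minimizer $\lambda^\ast=\epsilon/(2CK^2)$; when $\epsilon\le 2K$ this is admissible and yields exponent $-m\epsilon^2/(4CK^2)$, whereas when $\epsilon>2K$ one instead takes $\lambda=1/(CK)$ at the boundary of the admissible range, yielding an exponent bounded by $-m\epsilon/(2CK)$. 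In both cases the exponent is at most $-c_0\,m\min(\epsilon^2/K^2,\epsilon/K)$. Running the same argument with $-\eta_j$ in place of $\eta_j$ controls the lower tail, and a union bound over the two one-sided events produces the factor $2$, completing the proof.

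The main obstacle — essentially the only subtle point — is handling the two regimes coherently: the MGF bound is available only for $|\lambda|\lesssim 1/K$, and this restriction is exactly what forces a linear-in-$\epsilon$ tail in the large-deviation regime; the case split $\epsilon\le 2K$ versus $\epsilon>2K$, and the bookkeeping that collapses both cases into the single expression $\min(\epsilon^2/K^2,\epsilon/K)$ with one absolute constant $c_0$, is where care is required. The moment bounds and the Chernoff step itself are routine.
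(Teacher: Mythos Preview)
Your sketch is correct and follows the standard Cram\'er--Chernoff argument for Bernstein's inequality for sub-exponential random variables: center, control the MGF on a ball of radius $\sim 1/K$ via the moment bound $\mathbb{E}|\eta|^p\le (pK)^p$, optimize in $\lambda$, and split into the sub-Gaussian and sub-exponential regimes according to whether $\epsilon\lesssim K$ or $\epsilon\gtrsim K$. Note, however, that the paper does not supply its own proof of this lemma: it is quoted directly from \cite{vershynin} and used as a black box in the proof of Lemma~\ref{lem: large_distance_1}. Your argument is essentially the one found in that reference, so there is nothing to compare.
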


Now we can present a proof of Lemma \ref{lem: large_distance_1}.  

\begin{proof}[Proof of Lemma \ref{lem: large_distance_1}]
We first prove the result for any two fixed $\boldsymbol{x},\boldsymbol{y}\in\mathbb{S}_{\mathbb{H}}^{d-1}$, and then apply an $\epsilon$-net argument to derive a uniform bound in \eqref{xufeb2024-3}.
Now let $\boldsymbol{x},\boldsymbol{y}\in\mathbb{S}_{\mathbb{H}}^{d-1}$ be two fixed vectors.
Note that the terms $|\boldsymbol{y}^*\boldsymbol{a}_k\boldsymbol{a}_k^*\boldsymbol{x}|$ are independent sub-exponential random variables with the maximal sub-exponential norm $C_{\psi}$ for some positive absolute constant $C_{\psi}$. Applying Bernstein's inequality in Lemma \ref{Bernstein} we obtain that for any $0<\delta\leq 1$, 
\begin{equation}\label{xufeb2024-2}
\frac{1}{m}  \sum_{k=1}^{m}|\boldsymbol{y}^*\boldsymbol{a}_k\boldsymbol{a}_k^*\boldsymbol{x}|	\leq \mathbb{E}|\boldsymbol{y}^*\boldsymbol{a}_1\boldsymbol{a}_1^*\boldsymbol{x}|+\frac{\delta}{2}
\end{equation}
holds with probability at least $1-2\exp(-c'\delta^2m)$. Here, $c'$ is a universal positive constant.

Next, we give a uniform bound for \eqref{xufeb2024-3}. 
Let ${\mathcal{N}}_{\epsilon}$ be an $\epsilon$-net of the unit sphere $\mathbb{S}_{\mathbb{H}}^{d-1}=\{\boldsymbol{z}\in \mathbb{H}^{d}\ :\ \|\boldsymbol{z}\|_2=1\}$ with cardinality $\#{\mathcal{N}}_{\epsilon}\leq (3/\epsilon)^{2d}$. Here, $\epsilon>0$ is a small constant to be specified later.
Note that the event
\begin{equation*}
\mathcal{E}:=\bigg\{\frac{1}{m}  \sum_{k=1}^{m}|\boldsymbol{y}^*\boldsymbol{a}_k\boldsymbol{a}_k^*\boldsymbol{x}|
\leq 
\mathbb{E}|\boldsymbol{y}^*\boldsymbol{a}_1\boldsymbol{a}_1^*\boldsymbol{x}|+\frac{\delta}{2},
\quad \forall \boldsymbol{x},\boldsymbol{y}\in \mathcal{N}_{\epsilon}\bigg\}
\end{equation*}
holds with probability at least $1-2\exp(-c'\delta^2m)\cdot (3/\epsilon)^{4d}$.
For any $\boldsymbol{x},\boldsymbol{y}\in \mathbb{S}_{\mathbb{H}}^{d-1}$, there exist $\boldsymbol{x}_1,\boldsymbol{y}_1\in \mathcal{N}_{\epsilon}$ such that $\|\boldsymbol{x}-\boldsymbol{x}_1\|_2\leq\epsilon$ and $\|\boldsymbol{y}-\boldsymbol{y}_1\|_2\leq\epsilon$. Conditioned on the event $\mathcal{E}$, we have
\begin{equation}\label{xufeb2024-6}
\begin{aligned}
\frac{1}{m}  \sum_{k=1}^{m}|\boldsymbol{y}^*\boldsymbol{a}_k\boldsymbol{a}_k^*\boldsymbol{x}|
&\leq \frac{1}{m}  \sum_{k=1}^{m}|\boldsymbol{y}_1^*\boldsymbol{a}_k\boldsymbol{a}_k^*\boldsymbol{x}_1|+\frac{1}{m}  \sum_{k=1}^{m}\Big| |\boldsymbol{y}^*\boldsymbol{a}_k\boldsymbol{a}_k^*\boldsymbol{x}|-|\boldsymbol{y}_1^*\boldsymbol{a}_k\boldsymbol{a}_k^*\boldsymbol{x}_1|\Big|\\
&\leq \mathbb{E}|\boldsymbol{y}_1^*\boldsymbol{a}_1\boldsymbol{a}_1^*\boldsymbol{x}_1|+\frac{\delta}{2}+\frac{1}{m}  \sum_{k=1}^{m}\Big| |\boldsymbol{y}^*\boldsymbol{a}_k\boldsymbol{a}_k^*\boldsymbol{x}|-|\boldsymbol{y}_1^*\boldsymbol{a}_k\boldsymbol{a}_k^*\boldsymbol{x}_1|\Big|.
\end{aligned}
\end{equation}
A simple calculation shows that for each $1\leq k\leq m$,
\begin{equation}\label{xumarch-88}
\begin{aligned}
\Big| |\boldsymbol{y}^*\boldsymbol{a}_k\boldsymbol{a}_k^*\boldsymbol{x}|-|\boldsymbol{y}_1^*\boldsymbol{a}_k\boldsymbol{a}_k^*\boldsymbol{x}_1|\Big|
&\leq 	\big| |\boldsymbol{y}^*\boldsymbol{a}_k\boldsymbol{a}_k^*\boldsymbol{x}|-|\boldsymbol{y}_1^*\boldsymbol{a}_k\boldsymbol{a}_k^*\boldsymbol{x}|\big|
+\big| |\boldsymbol{y}_1^*\boldsymbol{a}_k\boldsymbol{a}_k^*\boldsymbol{x}|-|\boldsymbol{y}_1^*\boldsymbol{a}_k\boldsymbol{a}_k^*\boldsymbol{x}_1|\big|\\
&=|\boldsymbol{a}_k^*\boldsymbol{x}|\cdot \big| |\boldsymbol{y}^*\boldsymbol{a}_k|-|\boldsymbol{y}_1^*\boldsymbol{a}_k|\big|
+|\boldsymbol{a}_k^*\boldsymbol{y}_1|\cdot \big| |\boldsymbol{x}^*\boldsymbol{a}_k|-|\boldsymbol{x}_1^*\boldsymbol{a}_k|\big|\\
&\leq |\boldsymbol{a}_k^*\boldsymbol{x}|\cdot | (\boldsymbol{y}-\boldsymbol{y}_1)^*\boldsymbol{a}_k|
+|\boldsymbol{a}_k^*\boldsymbol{y}_1|\cdot |(\boldsymbol{x}-\boldsymbol{x}_1)^*\boldsymbol{a}_k|,
\end{aligned}	
\end{equation}
Then, with probability at least $1-2\exp(-c_1 m)$, we have
\begin{equation}\label{xufeb2024-4}
\begin{aligned}
\frac{1}{m}  \sum_{k=1}^{m}\Big| |\boldsymbol{y}^*\boldsymbol{a}_k\boldsymbol{a}_k^*\boldsymbol{x}|-|\boldsymbol{y}_1^*\boldsymbol{a}_k\boldsymbol{a}_k^*\boldsymbol{x}_1|\Big|&\leq \frac{1}{m}  \sum_{k=1}^{m} |\boldsymbol{a}_k^*\boldsymbol{x}|\cdot | (\boldsymbol{y}-\boldsymbol{y}_1)^*\boldsymbol{a}_k|
+|\boldsymbol{a}_k^*\boldsymbol{y}_1|\cdot |(\boldsymbol{x}-\boldsymbol{x}_1)^*\boldsymbol{a}_k| \\
&\overset{(a)}\leq \frac{1}{m}\| \boldsymbol{A}\boldsymbol{x}\|_2\cdot \| \boldsymbol{A}(\boldsymbol{y}-\boldsymbol{y}_1)\|_2 	
+\frac{1}{m}\| \boldsymbol{A}\boldsymbol{y}_1\|_2\cdot \| \boldsymbol{A}(\boldsymbol{x}-\boldsymbol{x}_1)\|_2 \overset{(b)}\leq 4\epsilon,	
\end{aligned}
\end{equation}
where $m\gtrsim d$ and $c_1$ is the universal positive constant in Lemma \ref{complex_fund1}. Here, we use Cauchy-Schwarz inequality in inequality ($a$). Inequality ($b$) follows from the fact that $\frac{1}{\sqrt{m}}\|\boldsymbol{A}\|_2\leq \sqrt{2}$ with probability at least $1-2\exp(-c_1m)$, implied by Lemma \ref{complex_fund1}. 
Moreover, by \eqref{xumarch-88} we have
\begin{equation}\label{xufeb2024-5}
\begin{aligned}
\mathbb{E}|\boldsymbol{y}_1^*\boldsymbol{a}_1\boldsymbol{a}_1^*\boldsymbol{x}_1|
&\leq \mathbb{E}|\boldsymbol{y}^*\boldsymbol{a}_1\boldsymbol{a}_1^*\boldsymbol{x}| + 
\mathbb{E} |\boldsymbol{a}_1^*\boldsymbol{x}|\cdot | (\boldsymbol{y}-\boldsymbol{y}_1)^*\boldsymbol{a}_1|
+\mathbb{E} |\boldsymbol{a}_1^*\boldsymbol{y}_1|\cdot |(\boldsymbol{x}-\boldsymbol{x}_1)^*\boldsymbol{a}_1|\\
&\leq \mathbb{E}|\boldsymbol{y}^*\boldsymbol{a}_1\boldsymbol{a}_1^*\boldsymbol{x}|+
 \sqrt{\mathbb{E} |\boldsymbol{a}_1^*\boldsymbol{x}|^2 \cdot \mathbb{E} | (\boldsymbol{y}-\boldsymbol{y}_1)^*\boldsymbol{a}_1|^2}
 + \sqrt{\mathbb{E} |\boldsymbol{a}_1^*\boldsymbol{y}|^2 \cdot \mathbb{E} | (\boldsymbol{x}-\boldsymbol{x}_1)^*\boldsymbol{a}_1|^2}\\
 &= \mathbb{E}|\boldsymbol{y}^*\boldsymbol{a}_1\boldsymbol{a}_1^*\boldsymbol{x}|
 + \| \boldsymbol{y}-\boldsymbol{y}_1\|_2 + \| \boldsymbol{x}-\boldsymbol{x}_1\|_2
 \leq \mathbb{E}|\boldsymbol{y}^*\boldsymbol{a}_1\boldsymbol{a}_1^*\boldsymbol{x}|+2\epsilon. 
\end{aligned}	
\end{equation} 
Take $\epsilon=\frac{\delta}{12}$. Plugging \eqref{xufeb2024-4} and \eqref{xufeb2024-5} into \eqref{xufeb2024-6}, we obtain that the inequality
\begin{equation}
\frac{1}{m}  \sum_{k=1}^{m}|\boldsymbol{y}^*\boldsymbol{a}_k\boldsymbol{a}_k^*\boldsymbol{x}|
\leq  \mathbb{E}|\boldsymbol{y}^*\boldsymbol{a}_1\boldsymbol{a}_1^*\boldsymbol{x}|+2\epsilon+\frac{\delta}{2}+4\epsilon 
= 
\mathbb{E}|\boldsymbol{y}^*\boldsymbol{a}_1\boldsymbol{a}_1^*\boldsymbol{x}|+\delta
\end{equation}
holds with probability at least
\begin{equation*}
1-2\exp(-c_1 m)-2\exp(-c'\delta^2m)\cdot (3/\epsilon)^{4d}\geq 1-4\exp(-c'' \delta^2m),
\end{equation*}
provided that $m\gtrsim \log(1/\delta)\delta^{-2}d$. Here, $c''$ is a universal positive constant. Take $c_3=c''$. This completes the proof.

\end{proof}

\subsection{Proof of Lemma \ref{expectation_result_new}}\label{Proof of expectation_result_new}

We next prove Lemma \ref{expectation_result_new}.

\begin{proof}[Proof of Lemma \ref{expectation_result_new}]
Note that we have $|(\exp(\mathrm{i}\theta)\boldsymbol{y})^*\boldsymbol{a}\boldsymbol{a}^*\boldsymbol{x}|=|\boldsymbol{y}^*\boldsymbol{a}\boldsymbol{a}^*\boldsymbol{x}|$ and $\mathrm{dist}(\boldsymbol{x},\boldsymbol{y})=\mathrm{dist}(\boldsymbol{x},\exp(\mathrm{i}\theta)\boldsymbol{y})$  for any $\theta\in \mathbb{R}$. Therefore, without loss of generality we can assume that  $\langle \boldsymbol{x},\boldsymbol{y}\rangle\geq0$. 
Leveraging the rotational invariance characteristic of the Gaussian distribution, we can choose
$\boldsymbol{x}=[1,0,\ldots,0]^{\top}$ and $\boldsymbol{y}=[\cos\theta,\sin\theta,0,\ldots,0]^{\top}$, where $\cos\theta=\langle \boldsymbol{x},\boldsymbol{y}\rangle$ and $\theta\in[0,\pi/2]$. 
Then we have $\mathrm{dist}^2(\boldsymbol{x},\boldsymbol{y})=\|\boldsymbol{x}-\boldsymbol{y}\|_2^2=2-2\cos\theta$. 
In the following, we will estimate $\mathbb{E} |\boldsymbol{y}^*\boldsymbol{a}\boldsymbol{a}^*\boldsymbol{x}|$, dividing the proof into two cases.

\textbf{Case 1: $\mathbb{H}=\mathbb{R}$.}
Denote the first two entries of $\boldsymbol{a}$ as $a$ and $b$. Then we have 
$a,b\sim\mathcal{N}(0,1)$. A simple calculation shows that  
\begin{equation*}
\begin{aligned}
\mathbb{E} |\boldsymbol{y}^*\boldsymbol{a}\boldsymbol{a}^*\boldsymbol{x}|=\mathbb{E}|a(a\cos\theta+b\sin\theta)|
&=\frac{1}{2\pi}\int_{-\infty}^{+\infty}\int_{-\infty}^{+\infty}|u(u\cos \theta +v\sin \theta)|\exp\left(-\frac{u^2+v^2}{2}\right)\mathrm{d}u\mathrm{d}v\\
&=\frac{1}{2\pi}\int_{0}^{+\infty}r^3\exp(-r^2/2)dr\int_{0}^{2\pi}\left|\cos \theta \sin^2 \phi+\sin \theta \sin\phi \cos\phi\right|\mathrm{d}\phi\\
&=\frac{2}{\pi}(\sin\theta+(\frac{\pi}{2}-\theta)\cos\theta).
\end{aligned}
\end{equation*}
Here, the second equality above is based on the polar transformation, that is, $u=r\cos\phi$, $v=r\sin \phi$ with $r\in [0,+\infty)$, $\phi\in [0,2\pi)$ and $\mathrm{d}u\mathrm{d}v=r\cdot \mathrm{d}r\mathrm{d}\phi$.
To establish the lemma in the real case, it suffices to prove that
\begin{equation*}
g(\theta):=1-\frac{2-2\cos\theta }{2\cdot (\beta_0^{\mathbb{R}})^2}-  \frac{2}{\pi}(\sin\theta+(\frac{\pi}{2}-\theta)\cos\theta)\geq0, \quad \forall \theta\in[0, \frac{\pi}{2}].
\end{equation*}
A direct calculation shows that $g(0)=g(\frac{\pi}{2})=0$, $g'(\theta)=\frac{2}{\pi}(1-\theta)\sin\theta\geq 0$ when $\theta\in[0,1]$, and $g'(\theta)\leq 0$ when $\theta\in[1,\frac{\pi}{2}]$, which immediately implies that $g(\theta)\geq0$ for all $\theta\in[0, \frac{\pi}{2}]$.

\textbf{Case 2: $\mathbb{H}=\mathbb{C}$.} 
Denote the first two entries of $\boldsymbol{a}$ as $\frac{a}{\sqrt{2}}$ and $\frac{b}{\sqrt{2}}$, where 
$a=a_{1}+a_{2}\mathrm{i}$ and $b=b_{1}+b_{2}\mathrm{i}$. Here,
$a_{1},a_{2},b_{1},b_{2}\sim\mathcal{N}(0,1)$. Then, we have
\begin{equation*}
\small
\begin{split}
h(\theta)&:=\mathbb{E} |\boldsymbol{y}^*\boldsymbol{a}\boldsymbol{a}^*\boldsymbol{x}|=\frac{1}{2} \mathbb{E}|a(a\cos\theta+b\sin\theta)|\\
&= \frac{1}{2} \mathbb{E}\left(\sqrt{a_{1}^{2}+a_{2}^{2}}\cdot\sqrt{(a_{1}^{2}+a_{2}^{2})\cos^{2}\theta+(b_{1}^{2}+b_{2}^{2})\sin^{2}\theta+\sin2\theta(a_{1}b_{1}+a_{2}b_{2})}\right)\\
&=\frac{1}{2} \int_{-\infty}^{\infty}\int_{-\infty}^{\infty}\int_{-\infty}^{\infty}\int_{-\infty}^{\infty}\frac{1}{(\sqrt{2\pi})^4}f_1(u_1,u_2,v_1,v_2,\theta) \exp\left(-\frac{u_1^2+u_2^2+v_1^2+v_2^2}{2}\right)\mathrm{d}u_1\mathrm{d}u_2\mathrm{d}v_1\mathrm{d}v_2,
\end{split}
\end{equation*}
where \begin{equation*}f_1(u_1,u_2,v_1,v_2,\theta)=\sqrt{u_{1}^{2}+u_{2}^{2}}\cdot\sqrt{(u_{1}^{2}+u_{2}^{2})\cos^{2}\theta+(v_{1}^{2}+v_{2}^{2})\sin^{2}\theta+(u_{1}v_{1}+u_{2}v_{2})\sin2\theta}.\end{equation*}
To establish the lemma in the complex case, it suffices to prove that
\begin{equation*}
h(\theta)\leq 1-\frac{2-2\cos\theta }{2\cdot (\beta_0^{\mathbb{C}})^2}= \cos\theta+\frac{\pi}{4}\cdot (1-\cos\theta), \quad \forall  \theta\in[0, \frac{\pi}{2}].
\end{equation*}
We claim that for all $\theta\in[0, \frac{\pi}{2}]$,
\begin{equation}\label{eq1:2024}
\begin{aligned}
h(\theta)&\overset{(a)}=	\frac{1}{4\pi} \iint_{\mathbb{S}^{2}} \sqrt{1+x\cdot \cos\theta-y\cdot \sin\theta}\cdot \sqrt{1+x\cdot \cos\theta+y\cdot \sin\theta}\ 	  \mathrm{d}S\\
&\overset{(b)} \leq \frac{1}{4\pi} \iint_{\mathbb{S}^{2}} \Big( \cos\theta\cdot  (1+x)+(1-\cos\theta)\cdot  \sqrt{1-y^2}\Big)\ 	  \mathrm{d}S.
\end{aligned}
\end{equation}
Here, $\mathbb{S}^{2}=\{(x,y,z)^T\in\mathbb{R}^3:x^2+y^2+z^2=1\}$ denotes the unit sphere in $\mathbb{R}^3$.
Noting that 
\begin{equation*}
\iint_{\mathbb{S}^{2}}  (1+x)\ 	  \text{d}S=4\pi \quad\text{and}\quad 	\iint_{\mathbb{S}^{2}}\sqrt{1-y^2}\ 	  \text{d}S=\pi^2,
\end{equation*}
we arrive at the desired conclusion:
\begin{equation*}
h(\theta)\leq \frac{1}{4\pi}\big(\cos\theta\cdot 4\pi +(1-\cos\theta)\cdot \pi^2 \big)=\cos\theta+\frac{\pi}{4}\cdot (1-\cos\theta).
\end{equation*}

It remains to prove \eqref{eq1:2024}. We first prove equation  $(a)$ in \eqref{eq1:2024}. 
Taking $u_{1}=\rho\cos\varphi\cos\varphi_{1}$, $u_{2}=\rho\cos\varphi\sin\varphi_{1}$,
$v_{1}=\rho\sin\varphi\cos\varphi_{2}$, $v_{2}=\rho\sin\varphi\sin\varphi_{2}$
with $\rho\in [0,+\infty)$, $\varphi_1,\varphi_2\in[0,2\pi)$, $\varphi\in[0,\pi/2)$ and  $\mathrm{d}u_1\mathrm{d}u_2\mathrm{d}v_1\mathrm{d}v_2=\rho^3\cos\varphi\sin\varphi \cdot \mathrm{d}\rho\mathrm{d}\varphi\mathrm{d}\varphi_1\mathrm{d}\varphi_2$, 
we can use the polar transformation to simplify  $h(\theta)$ as
\begin{equation*}
\begin{aligned}
h(\theta)=& \frac{1}{8\pi^{2}}\int_{0}^{+\infty}\int_{0}^{\pi/2}\int_{0}^{2\pi}\int_{0}^{2\pi}\rho^{5}e^{-\frac{\rho^{2}}{2}}\cos^{2}\varphi\sin\varphi \cdot f_2(\varphi,\varphi_1,\varphi_2,\theta)\mathrm{d}\varphi_{1}\mathrm{d}\varphi_{2}\mathrm{d}\varphi \mathrm{d}\rho\\
= & \frac{1}{\pi^{2}}\int_{0}^{\pi/2}\int_{0}^{2\pi}\int_{0}^{2\pi}\cos^{2}\varphi\sin\varphi\cdot f_2(\varphi,\varphi_1,\varphi_2,\theta)\mathrm{d}\varphi_{1}\mathrm{d}\varphi_{2}\mathrm{d}\varphi,
\end{aligned}
\end{equation*}
where
\begin{equation*}
\begin{aligned}
f_2(\varphi,\varphi_1,\varphi_2,\theta)&=\sqrt{\cos^{2}\theta\cos^{2}\varphi+\sin^{2}\theta\sin^{2}\varphi+\sin2\theta\cos\varphi\sin\varphi\cos(\varphi_{1}-\varphi_{2})}\\
&=\frac{\sqrt{2}}{2}\cdot \sqrt{1+\cos2\theta\cos2\varphi+\sin2\theta\sin2\varphi\cos(\varphi_{1}-\varphi_{2})}.
\end{aligned}	
\end{equation*}
Using the fact that $f_2(\varphi,\varphi_1,\varphi_2,\theta)$ has period $2\pi$ in $\varphi_2$, we can  obtain
\begin{equation*}
h(\theta)=\frac{\sqrt{2}}{\pi}\int_{0}^{\pi/2}\int_{0}^{2\pi}\cos^{2}\varphi\sin\varphi\sqrt{1+\cos2\theta\cos2\varphi+\sin2\theta\sin2\varphi\cos\varphi_{1}}\mathrm{d}\varphi_{1}\mathrm{d}\varphi.
\end{equation*}
By letting $x=\cos2\varphi$, $y=\sin2\varphi\cos\varphi_1$, $z=\sin2\varphi\sin\varphi_1$,
we further have
\begin{equation}\label{eq2:2024}
h(\theta)=\frac{1}{4\pi}  \iint_{\mathbb{S}^{2}} \sqrt{1+x}\cdot \sqrt{1+x\cdot \cos2\theta+y\cdot \sin2\theta}\ 	  \text{d}S=\frac{1}{4\pi}  \iint_{\mathbb{S}^{2}} \sqrt{1+\langle \boldsymbol{x},\boldsymbol{v}_1\rangle}\cdot \sqrt{1+\langle \boldsymbol{x},\boldsymbol{v}_2\rangle}\ 	  \text{d}S,
\end{equation}
where $\boldsymbol{x}=(x,y,z)^T$, $\boldsymbol{v}_1=(1,0,0)^T$ and $\boldsymbol{v}_2=(\cos2\theta,\sin2\theta,0)^T$. Noting that for any rotation matrix $\boldsymbol{P}\in\mathbb{R}^{3\times 3}$, we have
\begin{equation}\label{eq3:2024}
\iint_{\mathbb{S}^{2}} \sqrt{1+\langle \boldsymbol{x},\boldsymbol{v}_1\rangle}\cdot \sqrt{1+\langle \boldsymbol{x},\boldsymbol{v}_2\rangle}\ 	  \text{d}S=\iint_{\mathbb{S}^{2}} \sqrt{1+\langle \boldsymbol{P}\boldsymbol{x},\boldsymbol{v}_1\rangle}\cdot \sqrt{1+\langle \boldsymbol{P}\boldsymbol{x},\boldsymbol{v}_2\rangle}\ 	  \text{d}S.
\end{equation}
Taking
\begin{equation*}
\boldsymbol{P}=\begin{bmatrix}
\cos\theta & -\sin\theta & 0\\
\sin\theta & \cos\theta & 0\\
0&0&1
\end{bmatrix}
\end{equation*}	
and substituting \eqref{eq3:2024} into \eqref{eq2:2024}, we arrive at equation $(a)$ in \eqref{eq1:2024}. 

We next prove inequality $(b)$ in \eqref{eq1:2024}. It is enough to show that for all $\theta\in[0,\frac{\pi}{2}]$ and for all $x,y\in\mathbb{R}$ with $x^2+y^2\leq 1$, we have
\begin{equation*}
A:=\Big(\cos\theta\cdot  (1+x)+(1-\cos\theta)\cdot  \sqrt{1-y^2}\Big)^2-	\Big(1+x\cdot \cos\theta-y\cdot \sin\theta\Big)\Big(1+x\cdot \cos\theta+y\cdot \sin\theta\Big)\geq 0.
\end{equation*}
A direct calculation shows
\begin{equation*}
\begin{aligned}
A=2\cos\theta(1-\cos\theta)	\cdot \Big((1+x)\sqrt{1-y^2}-(1+x-y^2) \Big).
\end{aligned}	
\end{equation*}
Noting that $x^2+y^2\leq1$, we have
$1+x\geq 0$  and 
\begin{equation*}
(1+x)^2(1-y^2)-(1+x-y^2)^2=y^2(1-x^2-y^2)\geq 0,	
\end{equation*}
which implies that $$(1+x)\sqrt{1-y^2}\geq |1+x-y^2|.$$ Combining with $\cos\theta(1-\cos\theta)\geq0$ for each $\theta\in[0,\frac{\pi}{2}]$, we immediately obtain that $A\geq 0$. This completes the proof of \eqref{eq1:2024}.

\end{proof}

\end{document}